\newtheorem{theorem}{Theorem}
\newtheorem{lemma}[theorem]{Lemma}
\newtheorem{coro}[theorem]{Corollary}
\newtheorem{conj}[theorem]{Conjecture}
\newtheorem{prop}[theorem]{Proposition}
\newtheorem{claim}[theorem]{Claim}
\newtheorem{obs}[theorem]{Observation}
\def\P{\mathcal P}
\def\bp{\mathbf{bp}}
\def\bpor{\mathbf{{bp}_{or}}}
\newcommand{\ceiling}[1]{\left\lceil #1 \right\rceil}
\renewcommand{\O}{{\cal O}}
\newcommand{\F}{\mathcal{F}}
\renewcommand{\L}{\mathcal{L}}
\renewcommand{\S}{\mathcal{S}}
\newcommand{\U}{\mathcal{U}}
\newcommand{\KG}{\mathcal{K}_G}
\newcommand{\SG}{\mathcal{S}_G}
\newcommand{\Gnp}{G(n,p)}
\renewcommand{\O}{{\cal O}}
\renewcommand{\o}{o}
\newcommand{\Col}{\textsc{Col}}
\begin{document}
\title{Clique versus Independent Set}

\author[lirmm]{N. Bousquet}
\ead{nicolas.bousquet@lirmm.fr}

\author[ens]{A. Lagoutte\corref{cor1}}
\ead{aurelie.lagoutte@ens-lyon.fr}

\author[ens]{S. Thomass\'e}
\ead{stephan.thomasse@ens-lyon.fr}

\cortext[cor1]{Corresponding author}

\address[lirmm]{AlGCo project-team, CNRS, LIRMM, 
161 rue Ada, 34392 Montpellier Cedex5 France.}
\address[ens]{LIP, UMR 5668 ENS Lyon - CNRS - UCBL - INRIA, Universit\'e de Lyon, 46, all\'ee de l'Italie, 69364 Lyon France.}

\date{\today}                                 %

\begin{frontmatter}

%
%
%
%
%
%

\begin{abstract} 
Yannakakis' Clique versus Independent Set problem ($CL-IS$) in communication complexity
asks for the minimum number of cuts separating cliques from stable sets in a graph, called CS-separator. Yannakakis provides a quasi-polynomial CS-separator, i.e. 
of size $O(n^{\log n})$,
and addresses the problem of finding a polynomial CS-separator. This question is still open even for perfect graphs. We show that a polynomial CS-separator almost surely exists for random graphs.
Besides, if $H$ is a 
split graph (i.e. has a vertex-partition into a clique and a stable set)
then there exists a constant $c_H$ for which we find a $O(n^{c_H})$ CS-separator on the class 
of $H$-free graphs. This generalizes a result of Yannakakis on comparability graphs. We also provide a $O(n^{c_k})$ CS-separator on the class 
of graphs without induced path of length $k$ and its complement. Observe that on one side, $c_H$ is of order $O(|H| \log |H|)$ resulting from Vapnik-Chervonenkis dimension, and on the other side, $c_k$ is a tower function, due to an application of the regularity lemma.

One of the main reason why Yannakakis' $CL-IS$ problem is fascinating is that it admits equivalent 
formulations. Our main result in this respect is to show that a polynomial CS-separator is equivalent
to the polynomial Alon-Saks-Seymour Conjecture, asserting that if a graph has an edge-partition into 
$k$ complete bipartite graphs, then its chromatic number is polynomially bounded in
terms of $k$. 
We also show that the classical approach to the stubborn problem (arising in CSP)
which consists in covering the set of all 
solutions by $O(n^{\log n})$ instances of 2-SAT is again equivalent to the existence of a polynomial CS-separator.
\end{abstract}

\begin{keyword}
Clique-Stable separation \sep Alon-Saks-Seymour conjecture \sep stubborn problem \sep random graphs \sep split-free graphs \sep $P_k$-free graphs
\end{keyword}
\end{frontmatter}

\section{Introduction} 

The goal of this paper is twofold. First, we focus on the Clique-Stable Set separation problem and provide classes of graphs for which polynomial separators exist. Then we show that this classical problem from communication complexity is equivalent to one in graph theory and one in CSP. Let us make a brief overview of each domain focusing on the problem. 

\begin{paragraph}{Communication complexity and the Clique-Stable Set separation} A \emph{clique} is a complete induced
subgraph and a \emph{stable set} is an induced subgraph with no
edge.
Yannakakis introduced in \cite{Yannakakis91} the following communication complexity problem, called \emph{Clique versus Independent Set} ($CL-IS$ for brevity): given a publicly known graph $\Gamma$ on $n$ vertices, Alice and Bob agree on a protocol, then Alice is given a clique and Bob is given a stable set. They do not know which clique or which stable set was given to the other one, and their goal is to decide whether the clique and the stable set intersect or not, by minimizing the worst-case number of exchanged bits. Note that the intersection of a clique and a stable set is at most one vertex. In the deterministic version, Alice and Bob send alternatively messages one to each other, and the minimization is on the number of bits exchanged between them. It is a long standing open problem to prove a $\O(\log^2 n)$ lower bound for the deterministic communication complexity. In the non-deterministic version, a 
prover knowing the clique and the stable set sends a certificate in order to convince both Alice and Bob of the right answer. Then, Alice and Bob 
exchange one final bit, saying whether they agree or disagree with the certificate. The aim is to minimize the size of the certificate.

In this particular setting, a certificate proving that the clique and the stable set intersect is just the name of the vertex in the intersection. Such a certificate clearly has logarithmic size. Convincing Alice and Bob that the clique and the stable set do not intersect is much more complicated. A certificate can be a bipartition of the vertices such that the whole clique is included in the first part, and the whole stable set is included in the other part. Such a partition is called a cut that separates the clique and the stable set. A family $\F$ of $m$ cuts such that for every disjoint clique and stable set, there is a cut in $\F$ that separates the clique and the stable set is called a CS-separator of size $m$. Observe that Alice and Bob can agree on a CS-separator at the beginning, and then the prover just gives the name of a cut that separates the clique and the stable set: the certificate has size $\log_2 m$. Hence if there is a CS-separator of polynomial size in $n$, one 
can ensure a non-deterministic certificate of size $\O(\log_2 n)$. 

Yannakakis proved that there is a $c \log_2 n$ certificate for the $CL-IS$ problem if and only if there is a CS-separator of size $n^c$. The existence of such a CS-separator is called in the following the Clique-Stable Set separation problem. The best upper bound so far, due to Hajnal (cited in \cite{LovaszSurvey}), is the existence for every graph $G$ of a CS-separator of size $n^{(\log n)/2}$. The $CL-IS$ problem arises from an optimization question which was studied both by Yannakakis \cite{Yannakakis91} and by Lov\'asz \cite{Lovasz94}. The question is to determine if the stable set polytope of a graph is the projection of a polytope in higher dimension, with a polynomial number or facets (called extended formulation). The existence of such a polytope in higher dimension implies the existence of a polynomial CS-separator for the graph. Moreover, Yannakakis proved that the answer is positive for several subclasses of perfect graphs, such as comparability graphs and their complements, chordal graphs and their complements, and Lov\'asz  proved it for a generalization of series-parallel graphs called $t$-perfect graphs. The existence of an extended formulation for general graphs has recently been disproved by Fiorini et al. \cite{
Fiorini11}, and is still open on perfect graphs.

\end{paragraph}

\begin{paragraph}{Graph coloring and the Alon-Saks-Seymour conjecture}
 
Given a graph $G$, the bipartite packing number, denoted by $\bp$, is the minimum number of edge-disjoint complete bipartite graphs needed to partition the edges of $G$. The Alon-Saks-Seymour conjecture (cited in \cite{Kahn94}) states that if a graph has bipartite packing number $k$, then its chromatic number $\chi$ is at most $k+1$. It is inspired from the Graham-Pollak theorem \cite{GrahamP72} which states that $\bp(K_n)=n-1$. Huang and Sudakov proposed in \cite{Sudakov10} a counterexample to the Alon-Saks-Seyour conjecture (then generalized in \cite{CioabaTait12}), twenty-five years after its statement. Actually they proved that there is an infinite family of graphs for which $\chi \geq \Omega(\bp^{6/5})$. The Alon-Saks-Seymour conjecture can now be restated as the \emph{polynomial} Alon-Saks-Seymour conjecture: is the chromatic number polynomially upper bounded in terms of $\bp$? Moreover, Alon and Haviv \cite{AlonH} observed that a gap $\chi \geq \Omega(\bp^c)$ for some graphs would imply a $\Omega(n^c)$ lower bound for the Clique-Stable Set separation problem. Consequently, Huang and Sudakov's result gives a $\Omega(n^{6/5})$ lower bound. This in turns implies a $6/5 \log_2(n) - \O(1)$ lower bound on the non-deterministic communication complexity of $CL-IS$ when the clique and the stable set do not intersect. This lower bound has been improved to $3/2 \log_2(n) - \O(1)$, by Amano \cite{Amano}, using a notion of oriented bipartite packing number, which we also introduced independently. 

A generalization of the bipartite packing number of a graph is the $t$-biclique number, denoted by $\bp_t$. It is the minimum number of complete bipartite graphs needed to cover the edges of the graph such that each edge is covered at least once and at most $t$ times. It was introduced by Alon \cite{Alon97} to model neighborly families of boxes, and the most studied question so far is finding tight bounds for $\bp_t(K_n)$.  
\end{paragraph}

\begin{paragraph}{Constraint satisfaction problem and the stubborn problem}
The complexity of the so-called \emph{list-$M$ partition problem} has been widely studied in the last decades (see \cite{Schell08} for an overview). $M$ stands for a fixed $k\times k$ symmetric matrix filled with $0, 1$ and $\ast $. The input is a graph $G=(V,E)$ together with a list assignment $\L: V \to \mathcal{P}(\{A_1, \ldots, A_k\})$ and the question is to determine whether the vertices of $G$ can be partitioned into $k$ sets $A_1, \ldots, A_k$ respecting two types of requirements. The first one is given by the list assignments, that is to say $v$ can be put in $A_i$ only if $A_i \in \L(v)$. The second one is described in $M$, namely: if $M_{i,i}=0$ (resp. $M_{i,i}=1$), then $A_i$ is a stable set (resp. a clique), and if $M_{i,j}=0$ (resp. $M_{i,j}=1$), then $A_i$ and $A_j$ are completely non-adjacent (resp. completely adjacent). If $M_{i,i}=\ast$ (resp. $M_{i,j}=\ast$), then $A_i$ can be any set (resp. $A_i$ and $A_j$ can have any kind of adjacency).

Feder et al. \cite{FederHH03,FederHKM03} proved a \emph{quasi-dichotomy theorem}. The list-$M$ partition problems are classified between NP-complete and quasi-polynomial time solvable (i.e. time $\O(n^{c \log n})$ where $c$ is a constant). Moreover, many investigations have been made about small matrices $M$ ($k \leq 4$) to get a \emph{dichotomy theorem}, meaning a classification of the list-$M$ partition problems between polynomial time solvable and NP-complete. Cameron et al. \cite{CameronEHS07} reached such a dichotomy for $k \leq 4$, except for one special case (and its complement) then called the \emph{stubborn problem} (
the corresponding symmetric matrix has size 4; $M_{1,1}=M_{2,2}=M_{1,3}=M_{3,1}=0$, $M_{4,4}=1$; the other entries are $\ast$), which remained only quasi-polynomial time solvable. Cygan et al. \cite{Cygan10} closed the question by finding a polynomial time algorithm solving the stubborn problem. More precisely, they found a polynomial time algorithm for \textsc{3-Compatible Coloring}, which was introduced in \cite{FederH06} and said to be no easier 
than the stubborn problem. \textsc{3-Compatible Coloring} has also been introduced and studied in \cite{KostoZ08} under the name \textsc{Adapted List Coloring}, and was proved to be a model for some strong scheduling problems. It is defined in the following way:

%

\textsc{$3$-Compatible Coloring Problem
  ($3$-CCP)}
\\ \textbf{Input:} An edge coloring $f_E$ of the complete
graph on $n$ vertices with $3$ colors $\{A,B,C\}$.
\\ \textbf{Question:} Is there a coloring of the vertices with $\{A,B,C\}$, such that no edge has the same color as both its endpoints?

\end{paragraph}

\begin{paragraph}{Contribution} 
The Clique-Stable Set separation problem will be considered as our reference problem. 
More precisely, we start in Section \ref{sec: CS-sep} by proving that there is a polynomial CS-separator for four classes of graphs: random graphs, split-free graphs, graphs with no induced path $P_k$ on $k$ vertices nor its complement, and graphs with no induced $P_5$. The proof for random graphs is based on random cuts. In the second case, it is based on Vapnik-Chervonenkis dimension. In the third one, it follows the scheme of the proof of the Erd\H{o}s-Hajnal conjecture for graphs with no induced path of length $k$ nor its complement. For graphs with no induced $P_5$, it is a direct consequence of a result of Lokshtanov, Vatshelle, and Villanger \cite{Lok13} used to compute the maximal independent set in such graphs and involving cliques of minimal triangulations.

 In Section \ref{sec: ASS}, we extend Alon and Haviv's observation and prove the equivalence between the polynomial Alon-Saks-Seymour conjecture and the Clique-Stable separation. It follows from an intermediate result, also interesting by itself: for every integer $t$, the chromatic number $\chi$ can be bounded polynomially in terms of $\bp$ if and only if it can be polynomially bounded in terms of $\bp_t$. We also introduce the notion of oriented bipartite packing number, in which the Clique-Stable Set separation exactly translates. For instance, we show that the size of a maximum fooling set of $CL-IS$ gives quite precise information on the oriented bipartite packing number of the complete graph.

In Section \ref{sec: 3-CCP and the stubborn pb}, we highlight links between the Clique-Stable Set separation problem and both the stubborn problem and 3-CCP. The quasi-dichotomy theorem for list-$M$ partitions proceeds by covering all the solutions by $\O(n^{\log n})$ particular instances of 2-SAT, called 2-list assignments. A natural extension would be a covering of all the solutions with a polynomial number of 2-list assignments. We prove that the existence of a polynomial covering of all the maximal solutions (to be defined later) for the stubborn problem is equivalent to the existence of such a covering for all the solutions of 3-CCP, which in turn is equivalent to the $CL-IS$ problem.

\end{paragraph}

\section{Definitions}

Let $G=(V,E)$ be a graph and $k$ be an integer. $V(G)$ is the set of vertices of $G$ and $E(G)$ is its set of edges. An edge $uv \in E$ links its
two \emph{endpoints} $u$ and $v$. The \emph{neighborhood $N_G(x)$ of
$x$} is the set of vertices $y$ such that $xy \in E$. The \emph{closed neighborhood $N_G[x]$ of $x$} is $N_G(x) \cup \{x\}$. The \emph{non-neighborhood $N_G^C[x]$} of $x$ is $V \setminus N_G[x]$. We denote $V \setminus N_G(x)$ by $N_G^C(x)$. When there is no ambiguity about the graph under consideration, we denote by $N(x), N[x], N^C[x], N^C(x)$ the previous definitions. For oriented graphs, $N^+(x)$ (resp. $N^-(x)$) denote the out (resp. in) neighborhood of $x$, i.e. the set of vertices $y$ such that $xy \in E$ (resp. $yx \in E$).
 The subgraph \emph{induced by $X \subseteq V$} denoted by $G[X]$ is the graph with vertex set $X$ and edge set $E
\cap (X \times X)$. A \emph{clique of size $n$} is denoted by $K_n$. Note that a clique and a stable set intersect on
at most one vertex. Two subsets of vertices $X,Y \subseteq V$ are \emph{completely adjacent} if for all $x\in X$, $y \in Y$, $xy\in E$. They are \emph{completely non-adjacent} if there are no edge between them.
A graph $G=(V,E)$ is \emph{split} if
$V=V_1 \cup V_2$ and the subgraph induced by $V_1$ is a clique and the
subgraph induced by $V_2$ is a stable set.  A \emph{vertex-coloring}
(resp. \emph{edge-coloring}) of $G$ with a set \Col \ of $k$ colors is a function
$f_V: V \to \Col$ (resp. $f_E: E \to \Col$).

A graph $G$ is \emph{bipartite} if $V$ can be partitioned into $(U,W)$ such that both $U$ and $W$ are stable sets. Moreover, $G$ is \emph{complete} if $U$ and $W$ are completely adjacent. An \emph{oriented bipartite graph} is a bipartite graph together with an edge orientation such that all the edges go from $U$ to $W$.
A \emph{hypergraph} $H=(V,E)$ is composed of a set of vertices $V$ and a set of \emph{hyperedges} $E\subseteq \mathcal{P}(V)$.

\section{Clique-Stable Set separation conjecture}
\label{sec: CS-sep}

The communication complexity problem $CL-IS$ can be formalized by a function $f: X \times Y \to \{0, 1\}$, 
where $X$ is the set of cliques and $Y$ the set of stable sets of a fixed graph $G$ and 
$f(x,y)=1$ if and only if $x$ and $y$ intersect. It can also be represented by a $|X|\times |Y|$ matrix $M$ 
with $M_{x,y}=f(x,y)$. In the non-deterministic version, Alice is given a clique $x$, Bob is given a stable 
set $y$ and a prover gives to both Alice and Bob a certificate of size $N^b(f)$, where $b\in\{0,1\}$, 
in order to convince them that $f(x,y)=b$. Then, Alice and Bob exchange one final bit, saying whether they agree or disagree with 
the certificate. 

The aim is to minimize $N^b(f)$ in the worst case. When $x$ and $y$ intersect on some vertex $v$, the prover can just 
provide $v$ as a certificate, hence $N^1(f)=O(\log n)$.
The best upper bound so far on $N^0(f)$ is $\O(\log^2(n))$ \cite{Yannakakis91}, which actually is not better 
than the bound on the deterministic communication complexity.

A \emph {combinatorial rectangle} $X'\times Y' \subseteq X\times Y$ is a subset of (possibly non-adjacent) rows $X'$ and columns 
$Y'$ of $M$. It is \emph {$b$-monochromatic} if for all $(x,y) \in X' \times Y'$, $f(x,y)=b$. The minimum 
number of $b$-monochromatic combinatorial rectangles needed to cover the $b$-inputs of $M$ is denoted by $C^b(f)$ and 
verifies $N^b(f)= \ceiling{\log_2 C^b(f)}$ \cite{Kushilevitz99}. 
A \emph{fooling set} is a set $\F$ of $b$-inputs of $M$ such that for all $(x,y), (x',y') \in \F$, $f(x',y)\neq b$ 
or $f(x,y')\neq b$. In other words, a fooling set is a set of $b$-inputs of $M$ that cannot be pairwise contained into 
the same $b$-monochromatic rectangle. Hence, it provides a lower bound on $C^b(f)$.
 Given a 0-monochromatic rectangle $X'\times Y'$, one can construct a partition $(A,B)$ by putting in $A$ every 
vertex appearing in a clique of $X'$, and putting in $B$ every vertex appearing in a stable set of $Y'$. There 
is no conflict doing this since no clique in $X'$ intersects any stable set in $Y'$. We then extend $(A,B)$ into a partition
of the vertices by arbitrarily putting the other vertices into $A$. Observe that $(A,B)$ separates every clique in 
$X'$ from every stable set in $Y'$. Conversely, 
a partition that separates some cliques from some stable sets can be interpreted as a 0-monochromatic 
rectangle. Thus finding $C^0(f)$ (or, equivalently $N^0(f)$) is equivalent to finding the minimum number of cuts 
which separate all the cliques and the stable sets. In particular, there is a $\O(\log n)$ certificate for the 
$CL-IS$ problem if and only if there is a polynomial number of partitions separating all the cliques and the stable sets.

A \emph{cut} is a pair $(A,B)$ such that $A \cup B=V$ and $A \cap B=
\varnothing$. It \emph{separates} a clique $K$ and a stable set $S$
if $K \subseteq A$ and $S \subseteq B$. Note that a clique and a stable set can
be separated if and only if they do not intersect. Let $\KG$ be the set
of cliques of $G$ and $\SG$ be the set of stable sets of $G$. We say that a family $\F$ 
of cuts is a \emph{CS-separator} if for all $(K,S) \in  \KG \times \SG$ which do not intersect, 
there exists a cut in $\F$ that separates $K$ and $S$. While it is generally believed that 
the following question is false, we state it in a positive way:

\begin{conj}\label{cliquestable}
{(Clique-Stable Set separation Conjecture)} There is a polynomial $Q$, such that for every graph $G$ on $n$ vertices, there is a CS-separator of size at most $Q(n)$.
\end{conj}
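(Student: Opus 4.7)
The plan is to build, for an arbitrary graph $G$ on $n$ vertices, a family of cuts of size polynomial in $n$ that separates every disjoint pair $(K,S) \in \KG \times \SG$. The most natural first attempt is randomized: draw $n^c$ cuts from some distribution $\mathcal{D}$ over bipartitions of $V$ and show that every disjoint $(K,S)$ is separated by one of them with positive probability. A uniform random cut is hopeless, since a fixed pair with $|K|+|S|=m$ is separated with probability only $2^{-m}$. One therefore needs to bias $\mathcal{D}$ by the graph's structure. A concrete candidate is to draw a random vertex $v$ and output the cut $(N[v], N^C(v))$ (or its reverse): this works immediately for any pair $(K,S)$ that admits a vertex $v$ simultaneously dominating $K$ and anti-dominating $S$. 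Iterating this idea, I would pick a random set $T$ of $O(\log n)$ vertices and form the cut induced by $\bigcap_{v \in T} N[v]$ versus its complement, hoping that the exponential number of such cuts condenses to a polynomial family that covers every $(K,S)$.

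A second route is Hajnal-style divide-and-conquer. Given any vertex $v$, the disjoint pairs split into four types according to whether $v \in K$, $v \in S$, $v$ has a neighbor in $S$, or $v$ has a non-neighbor in $K$; two of the types can be separated by the single cut $(N[v], N^C(v))$ and the remaining types reduce to separating cliques and stable sets inside $G[N[v]]$ and $G[N^C(v)]$, which are strictly smaller. This gives the recurrence $f(n) \leq 2 f(n-1)+1$, which is exactly Hajnal's $n^{\log n/2}$ bound. To break into a polynomial, one would have to exhibit, in every graph, a vertex $v$ so well-placed that a $(1 - 1/n^{O(1)})$ fraction of the pairs $(K,S)$ are already separated by $(N[v], N^C(v))$; the recursion would then contract like $f(n) \leq f(n-1) + n^{O(1)}$. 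The existence of such a systematically balanced vertex is what I would try to prove, perhaps by an averaging argument over all $v \in V$ combined with an inclusion-exclusion over $\KG$ and $\SG$.

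The hard part, and the reason Conjecture~\ref{cliquestable} has resisted since \cite{Yannakakis91}, is that neither heuristic is known to survive in the worst case, and the equivalence established later in Section~\ref{sec: ASS} shows that any proof of the conjecture would simultaneously give a polynomial bound on $\chi$ in terms of $\bp$. The Huang--Sudakov $n^{6/5}$ gap already rules out any CS-separator of size smaller than $n^{6/5}$, so a purely counting or purely probabilistic argument cannot be tight: any successful strategy must inject genuine structural information tying neighborhoods of $G$ to bipartite-packing decompositions. In short, I expect neither of the two approaches above to close the problem by itself, and the main obstacle I would focus on is producing a deterministic structural lemma, applicable to every graph, guaranteeing a small collection of ``$\bp$-aware'' neighborhood cuts sufficient to cover all of $\KG \times \SG$; absent such a lemma, my honest assessment is that the most productive direction may instead be to search for a refutation by extending the Huang--Sudakov construction, in which case the conjecture as stated would simply be false.
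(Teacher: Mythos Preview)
The statement you are attempting is Conjecture~\ref{cliquestable}, and the paper does not prove it. On the contrary, the paper explicitly prefaces it with ``While it is generally believed that the following question is false, we state it in a positive way,'' and then spends the remainder of Section~\ref{sec: CS-sep} establishing it only for restricted classes (random graphs, split-free graphs, $P_k,\overline{P_k}$-free graphs). So there is no ``paper's own proof'' to compare against: the general conjecture is open, and indeed is now known to be false (G\"o\"os, 2015, after this paper).

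Your write-up is not a proof and, to your credit, does not pretend to be one. The two approaches you sketch are accurately assessed: the uniform random cut fails for the reason you give; the biased cut $(N[v],N^C(v))$ and its $O(\log n)$-fold intersections yield at best the $n^{O(\log n)}$ family implicit in Hajnal/Yannakakis; and the recursion $f(n)\le 2f(n-1)+1$ is exactly the quasi-polynomial bound. Your hoped-for ``systematically balanced vertex'' giving $f(n)\le f(n-1)+n^{O(1)}$ does not exist in general --- this is precisely what the eventual counterexamples rule out --- so that line cannot be completed. Your closing instinct, that the productive direction is to extend the Huang--Sudakov construction toward a refutation, turned out to be the correct one. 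In short: there is no gap to name in your argument because you have not claimed a proof; your proposal is a sound discussion of why the conjecture resists elementary attack, and it is consistent with the paper's own stance that the conjecture is likely false in general.
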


A first very easy result is that we can only focus on maximal cliques and stable sets.

\begin{prop} \label{clique max}
Conjecture \ref{cliquestable} holds if and only if a polynomial family $\F$ of cuts separates all the maximal (in the sense of inclusion) cliques from the maximal stable sets that do not intersect. 
\end{prop}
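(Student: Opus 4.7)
The forward implication is immediate: any CS-separator of $G$ separates, in particular, the disjoint pairs consisting of a maximal clique and a maximal stable set. So the content is in the converse direction.

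For the converse, suppose that for every graph $H$ on $m$ vertices we have a family $\mathcal{F}_H$ of at most $Q(m)$ cuts separating every disjoint pair of maximal clique and maximal stable set of $H$. Given an arbitrary disjoint pair $(K,S)$ of a clique $K$ and a stable set $S$ in $G$, I plan to produce a separating cut. The first step is to greedily extend $K$ to a maximal clique $K^{*}\supseteq K$ of $G$ and $S$ to a maximal stable set $S^{*}\supseteq S$ of $G$. Since any clique meets any stable set in at most one vertex, $|K^{*}\cap S^{*}|\le 1$. If $K^{*}\cap S^{*}=\emptyset$, then by hypothesis some cut $(A,B)\in \mathcal{F}_G$ satisfies $K^{*}\subseteq A$ and $S^{*}\subseteq B$; a fortiori $K\subseteq A$ and $S\subseteq B$, so $(K,S)$ is separated, and we are done with the easy case.

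The remaining case is $K^{*}\cap S^{*}=\{w\}$ for some vertex $w$ adjacent to every element of $K$ and non-adjacent to every element of $S$. To handle this, I would augment $\mathcal{F}_G$ by applying the hypothesis to polynomially many auxiliary subgraphs of $G$: set $\mathcal{F}=\mathcal{F}_G\cup\bigcup_{v\in V(G)}\mathcal{F}_{G\setminus v}$ and extend each cut of $G\setminus v$ to $V(G)$ by placing $v$ on either side. This yields a family of size $O(n\cdot Q(n))$, still polynomial. In $G\setminus w$ (where $w$ is the bottleneck vertex above), the pair $(K,S)$ is again a disjoint clique/stable-set pair, to which the same extension argument applies; iterating, one reaches a subgraph in which $(K,S)$ admits disjoint maximal extensions, and the corresponding cut of $\mathcal{F}_{G\setminus v}$, once lifted back to $V(G)$, separates $(K,S)$.

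The step I expect to be the main obstacle is the analysis of this ``bottleneck'' case, namely justifying that polynomially many subgraphs in the augmentation suffice to cover every disjoint pair $(K,S)$. Concretely, one has to control how many distinct bottleneck vertices $w$ can appear across the maximal extensions of $(K,S)$, and show that removing any one of them from $G$ yields a subgraph in whose max-pair family the desired separating cut lives. Once this combinatorial fact is established, summing over the $n+1$ induced subgraphs gives the polynomial CS-separator of $G$ for all disjoint pairs, completing the equivalence.
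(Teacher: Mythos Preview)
Your easy direction and the reduction to the ``bottleneck'' case $K^{*}\cap S^{*}=\{w\}$ are fine, but the handling of that case is genuinely incomplete, and the fix you sketch does not close the gap. Passing to $G\setminus w$ and re-extending $K,S$ to maximal sets there can again produce an intersection $\{w'\}$, and nothing in your argument bounds the number of such iterations by a constant (or even by anything polynomial). Your augmented family $\mathcal{F}_G\cup\bigcup_{v}\mathcal{F}_{G\setminus v}$ only accounts for a \emph{single} vertex deletion, so the moment two iterations are needed you are outside your family. You correctly flag this as the obstacle, but you offer no mechanism to control it; as written, the proof does not go through. (There is also a minor slip: your description of $w$ as ``adjacent to every element of $K$ and non-adjacent to every element of $S$'' tacitly assumes $w\notin K\cup S$, which need not hold.)

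The paper avoids the whole difficulty with a direct, one-line device: to $\mathcal{F}$ it adjoins, for each vertex $x$, the two cuts $(N[x],\,V\setminus N[x])$ and $(N(x),\,V\setminus N(x))$, adding only $2n$ cuts. If the maximal extensions $K',S'$ meet in $x$, then either $x\in K$, in which case $K\subseteq N[x]$ and $S\subseteq V\setminus N[x]$, or $x\notin K$, in which case $K\subseteq N(x)$ and $S\subseteq V\setminus N(x)$; either way one of the two added cuts separates $(K,S)$. This replaces your unbounded iteration by a single explicit cut, and is the missing idea in your proposal.
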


\begin{proof}
First note that one direction is direct. Let us prove the other one. Assume $\F$ is a polynomial family that separates all the maximal cliques from the maximal stable sets that do not intersect. Let $Cut_{1,x}$ be the cut $(N[x], N^C[x])$ and $Cut_{2,x}$ be the cut $(N(x), N^C(x))$. Let us prove that $\F'=\F \cup \{Cut_{1,x}\vert x\in V\}\cup \{Cut_{2,x}\vert x\in V\}$ is a CS-separator.

Let $(K,S)$ be a pair of clique and stable set. Extend $K$ and $S$ by adding vertices to get a maximal clique $K'$ and a maximal stable set $S'$. Either $K'$ and $S'$ do not intersect, and there is a cut in $\F$ that separates $K'$ from $S'$ (thus $K$ from $S$). Or $K'$ and $S'$ intersect in $x$ (recall that a clique and a stable set intersect on at most one vertex): if $x \in K$, then $Cut_{1,x}$ separates $K$ from $S$, otherwise $Cut_{2,x}$ does.
\end{proof}

Some classes of graphs have a polynomial CS-separator, this is  for instance the case 
when $\mathcal{C}$ is a class of graphs with a polynomial number of maximal cliques (we just cut every maximal clique 
from the rest of the graph). For example, chordal graphs have a linear number of maximal cliques. A generalization of this is a result of Alekseev \cite{Alekseev}, which asserts that the graphs without induced 
cycle of length four have a quadratic number of maximal cliques.

In this part, we first prove that random graphs have a polynomial CS-separator. Then we focus on classes on graph with a specific forbidden induced graph: more precisely, split-free graphs, graphs with no long paths nor antipaths, and graphs with no path on five vertices. Conjecture \ref{cliquestable} is unlikely to be true in the general case, however we believe it may be true on perfect graphs and more generally in the following setting:

\begin{conj}
Let $H$ be a fixed graph. Then the Clique-Stable Set separation conjecture is true on $H$-free graphs.
\end{conj}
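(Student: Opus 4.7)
The plan is to generalize the two techniques already developed in the paper for its structured subcases: the Vapnik-Chervonenkis argument, which handles $H$-free graphs when $H$ is a split graph, and the regularity-lemma induction, which handles graphs forbidding both $P_k$ and its complement. For an arbitrary fixed graph $H$, I would aim for a structural induction based on a strong homogeneous pair extraction: in every $H$-free graph $G$ on $n$ vertices, produce two disjoint sets $A, B$ of size at least $n^{\alpha_H}$ that are either completely adjacent or completely non-adjacent, and iterate.

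First I would show that the existence of one such homogeneous pair $(A,B)$ is enough to produce a constant number of cuts that handle a large fraction of maximal clique / maximal stable set separations, in the spirit of Proposition \ref{clique max}: if $A$ and $B$ are completely adjacent, then any maximal stable set meeting $A$ avoids $B$ entirely (and symmetrically), so simple cuts derived from $N(A)$, $N(B)$, and their complements take care of every pair whose clique or stable set intersects $A \cup B$. Recursing on the $H$-free induced subgraph $G \setminus (A \cup B)$ and iterating greedily so that each step shrinks $n$ by a constant factor, one would obtain a recurrence $f(n) \leq f((1-\eps_H) n) + Q_H(n)$, which solves to a polynomial-size CS-separator. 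The VC-dimension argument from the split-$H$ case could serve as the base layer for handling the residual pairs that the homogeneous cuts miss.

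The hard part will be the extraction of the large homogeneous pair, which is exactly the \emph{strong Erd\H{o}s-Hajnal} property for $H$, and this is itself open for general $H$. A natural fall-back is to establish the conjecture case-by-case for those $H$ where strong Erd\H{o}s-Hajnal is already known --- in particular when $H$ or $\overline{H}$ is split, recovering the paper's result --- and then to prove a substitution lemma at the level of CS-separation: if the conjecture holds for $H_1$-free and for $H_2$-free graphs, it holds for graphs forbidding the substitution of $H_2$ into a vertex of $H_1$. Such a lemma would mirror the Alon-Pach-Solymosi program and would close the conjecture for the entire substitution closure of split graphs.

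An orthogonal angle worth pursuing is to go through the equivalences established in Section \ref{sec: ASS}: although those equivalences are formulated at the level of the class of all graphs, one could try to lift them to hereditary subclasses, rephrasing polynomial CS-separation on $H$-free graphs as a $\chi$-boundedness statement in terms of $\bp$. Chromatic-number techniques for $H$-free classes --- greedy decompositions, Gy\'arf\'as-type tree arguments, and the general $\chi$-boundedness toolbox --- are significantly more mature than separator techniques, and this detour might supply the missing structural ingredient for general $H$ that a direct homogeneous-pair attack struggles to provide.
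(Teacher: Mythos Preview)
The statement you are attempting is labelled a \emph{conjecture} in the paper, and the paper does not prove it. What the paper does prove are two special cases: $H$ a split graph (Theorem~\ref{Hfree}, via VC-dimension) and $H\in\{P_k,\overline{P_k}\}$ (Theorem~\ref{th: path}, via the linear homogeneous pair of Theorem~\ref{path}). There is therefore no ``paper's own proof'' to compare your proposal against; your text is, appropriately, a research outline rather than a proof, and you correctly flag that the central ingredient you would need --- the strong Erd\H{o}s--Hajnal property for an arbitrary forbidden $H$ --- is itself open.

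Two technical remarks on the outline as written. First, you ask for homogeneous sets of size $n^{\alpha_H}$ but then say the recursion ``shrinks $n$ by a constant factor''; these are inconsistent. A polynomial-size homogeneous pair gives only $f(n)\le f(n-n^{\alpha_H})+Q_H(n)$, which does not solve to a polynomial. The paper's $P_k,\overline{P_k}$ argument crucially uses that Theorem~\ref{path} supplies \emph{linear}-size sets, yielding $f(n)\le 2f((1-t_k)n)$. Second, the recursion you describe --- on $G\setminus(A\cup B)$, with ad hoc cuts handling pairs that meet $A\cup B$ --- is not the one that works in the paper. When $A,B$ are completely non-adjacent, every clique lies inside $(V\setminus A)$ or inside $(V\setminus B)$, and the paper recurses on those two overlapping induced subgraphs, extending each cut by the missing piece on the stable-set side. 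Your substitution-closure idea and the detour through the $\chi$ versus $\bp$ equivalence are reasonable directions to explore, but neither is known to suffice for general $H$.
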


\subsection{Random graphs}\label{subsecrand}

Let $n$ be a positive integer and $p \in [0,1]$ (observe that $p$ can depend on $n$). 
We will work on the Erd\H{o}s-R\'enyi model. The random graph $\Gnp$ is a 
probability space over the set of graphs on the vertex set $\{1, \ldots, n \}$ determined by $\Pr[ij \in E]=p$, 
with these events mutually independent.
A family $\F$ of cuts on a graph $G$ with $n$ vertices is a \emph{complete $(a,b)$-separator} if 
for every pair $(A,B)$ of disjoint subsets of vertices with $|A|\leq a$, $|B| \leq b$, there exists a cut 
$(U, V \setminus U)\in \F$ separating $A$ and $B$, namely $A \subseteq U$ and $B \subseteq V \setminus U$.

\begin{theorem} \label{random}
Let $n\in \mathbb{N}$ and $p\in [0,1]$ be the probability of an edge (possibly depending on $n$). Then there exists a family $\F_{n,p}$ of size $\mathcal{O}(n^{7})$ such that for every graph $G\in G(n,p)$, the probability that $\F_{n,p}$ is a CS-separator for $G$ tends to 1 when $n$ goes to infinity.
\end{theorem}

\begin{proof}
We distinguish two cases: assume first that $p\leq 1/\sqrt{n}$  and  consider the probability to have a clique of size $6$:
$$\mathbb{P}(\exists \textrm{ a clique of size } 6)\leq{n \choose 6} p^{{6 \choose 2}}
																				\leq \frac{n^6}{(\sqrt{n})^{15}}
																				 \leq  n^{-3/2}\underset{n\to+\infty}{\longrightarrow} 0$$
Hence every potential clique have size at most five. Define the family $\F_{n,p}$ of size $\mathcal{O}(n^5)$:
\[\F_{n,p}=\{(U,V \setminus U) | \ U\subseteq V, |U|\leq 5\} \ , \]
then the statement holds with $\F_{n,p}$. If $1-p\leq 1/\sqrt{n}$, the proof is the same by exchanging clique and stable set and taking 
\[\F_{n,p}=\{(V \setminus U, U) | \ U\subseteq V, |U|\leq 5\} \ .\]

\noindent For the second case, we can now suppose that both $p>1/\sqrt{n}$ and $1-p>1/\sqrt{n}$. In the following, 
$\log$ denotes the logarithm to base 2. 
Following classical results \cite{bollobas01}  for the case where $p$ is fixed and independent from $n$, let 
$$r_\omega= \frac{3\log n}{- \log p} \qquad \textrm{and} \qquad r_\alpha= \frac{3\log n}{- \log (1-p)} \ .$$
The first goal is to construct a complete $(r_\omega, r_\alpha)$-separator.
Draw a random partition $(V_1, V_2)$ where each vertex is put in $V_1$ independently from the others with 
probability $p$, and put in $V_2$ otherwise. Let $A$ and $B$ be two disjoint subsets of vertices of respective size $r_\omega$ and $r_\alpha$. There are at most $4^n$ such pairs. The probability that $A \subseteq V_1$ and $B \subseteq V_2$ 
is at least $p^{r_\omega} (1-p)^{r_\alpha}$. Observe that $p^{r_\omega} (1-p)^{r_\alpha} = 1/n^{6}$. Then on average $(A,B)$ is 
separated by at least $1/n^{6}$ of all the partitions. By double counting, there exists a partition that separates 
at least $1/n^{6}$ of all the pairs. We delete these separated pairs and add the partition to $\F_{n,p}$, and there remain at most $(1-1/n^{6}) \cdot 4^n$ 
pairs. The same probability for a pair $(A,B)$ to be cut by a random partition still holds, hence we can iterate 
the process $i$ times until $(1-1/n^{6})^i\cdot 4^n \leq 1$. This is satisfied for $i= 2 n^{7}$. Thus starting from $\F_{n,p}=\emptyset$ and adding one by one the selected cuts, we achieve a complete $(r_\omega, r_\alpha)$-separator of size $\mathcal{O}(n^{7})$.

The second goal is to prove that the probability that $\F_{n,p}$ is a CS-separator for $G$ tends to 1 when $n$ goes to infinity. It is enough to prove that the probability that there exists a clique (resp. stable set) of size  $r_\omega$ (resp. $ r_\alpha$ ) tends to 0 when $n$ goes to infinity. Both are similar by exchanging $p$ (resp. clique) and $1-p$ (resp. stable set).
Observe that 
$$\mathbb{P}(\exists K, |K|={r_\omega}, K \textrm{ is a clique})\leq {n \choose {r_\omega}} p^{{r_\omega} \choose 2}$$
Standard calculation using the Stirling approximation shows that this expression is equivalent to $(2\pi)^{-1/2}f(n)$ where

$$f(n)=\left( 1-\frac{{r_\omega}}{n}\right)^{-n-1/2}\left(\frac{n}{{r_\omega}}-1\right)^{{r_\omega}} {r_\omega}^{-1/2}p^{\frac{{r_\omega}({r_\omega}-1)}{2}}$$

\noindent Observe now that, since $1-p>1/\sqrt{n}$, then 
$$\frac{{r_\omega}}{n}\leq \frac{3\log n}{\sqrt{n}+o(\sqrt{n})}\underset{n\to+\infty}{\longrightarrow} 0 \qquad \textrm{thus} \qquad -\left(n+\frac{1}{2}\right)\log\left(1-\frac{{r_\omega}}{n}\right)={r_\omega}+o({r_\omega})$$
Then standard calculation gives
$$ \log(f(n)) \leq  r_{\omega}+ o(r_{\omega})+ {r_\omega}\log n -{r_\omega}\log {r_\omega} -\frac{1}{2}\log r_\omega +\frac{{r_\omega}({r_\omega}-1)}{2}\log p$$
and $$\frac{{r_\omega}({r_\omega}-1)}{2}\log p= -\frac{3}{2} r_\omega \log n + \frac{3}{2}	\log n$$

\noindent  Moreover, since $p>1/\sqrt{n}$, then $ r_{\omega} \geq 6$ so
  $(r_{\omega}+1/2)\log r_{\omega}\geq 0$ and 
  $r_{\omega} \log n \underset{n\to+\infty}{\longrightarrow} +\infty$.  Thus 
 
\begin{align*}
\log(f(n)) & \leq r_{\omega}+ o(r_{\omega})+ {r_\omega}\log n  -\frac{3}{2} r_\omega \log n + \frac{3}{2}	\log n\\
 & \leq - \left(\frac{1}{2}+o(1)\right) r_{\omega} \log n + \frac{3}{2}\log n\\
 & \leq - \left(\frac{1}{2}+o(1)\right) 6 \log n + \frac{3}{2}\log n\\
 & \leq \left(-\frac{3}{2}+o(1)\right) \log n \underset{n\to+\infty}{\longrightarrow} -\infty \ .
 \end{align*}
\end{proof}

Note here that no optimization was made on the degree of the polynomial. Replacing the constant $3$ by $(5/2+\varepsilon)$ in the definition of $r_\omega$ and $r_\alpha$ leads to a CS-separator of size $\mathcal{O}(n^{6+2\varepsilon})$. Moreover, an interesting question would be a lower bound on the degree of the polynomial needed to separate the cliques and the stable sets in random graphs, in particular for the special case $p=1/2$.

\subsection{The case of split-free graphs.}

A graph $\Gamma$ is called \emph{split} if its vertices can be partitioned into a clique and a stable set. A graph $G=(V,E)$ \emph{has an induced $\Gamma$} if there exists $X \subseteq V$ such that the induced graph $G[X]$ is isomorphic to $\Gamma$.
We denote by $\mathcal{C}_\Gamma$ the class of graphs with no induced $\Gamma$. For instance, if 
$\Gamma$ is a triangle with three pending edges, 
 then $\mathcal{C}_\Gamma$ 
contains the class of comparability graphs, for which Lov\'asz showed \cite{Lovasz94} the existence 
of a CS-separator of size $\O(n^2)$. Our goal in this part is to prove that $\mathcal{C}_\Gamma$ has 
a polynomial CS-separator when $\Gamma$ is a split graph.


Let us first state some definitions concerning hypergraphs and VC-dimension. Let $H=(V,E)$ be a hypergraph. The \emph{transversality} $\tau(H)$ is the minimum cardinality of a subset of vertices intersecting each hyperedge. The transversality corresponds to an optimal solution of the following integer linear program: 

\begin{center}
\begin{tabular}{rl}
Minimize: & $\displaystyle{\sum_{x \in V} w(x)}$ \\
\\

Subject to: & 
\begin{tabular}[t]{l}
$\forall x \in V$, $w(x) \in \{ 0,1 \}$\\
 $\forall e \in E$, $\sum_{x \in e} w(x) \geq 1$
\end{tabular}
\end{tabular}
\end{center}

The \emph{fractional transversality} $\tau^*$ is the fractional relaxation of the above linear program. The first condition is then replaced by: for all $x \in V$, $ w(x)\geq 0$.
The \emph{Vapnik-Chervonenkis dimension} or \emph{VC-dimension} \cite{Vapnik} of a hypergraph $H = (V , E)$
is the maximum cardinality of a set of vertices $A \subseteq V$ such that for every $B \subseteq A$ there is an edge
$e \in E$ so that $e \cap A = B$.
The following bound due to Haussler and Welzl \cite{Haussler86} links the transversality, the VC-dimension and the fractional transversality.

\begin{lemma}\label{vchaussler}
 Every hypergraph $H$ with VC-dimension $d$ satisfies
$$ \tau(H) \leq 16  d  \tau^*(H) \log(d \tau^*(H)).$$
\end{lemma}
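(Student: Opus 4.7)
The plan is to derive this bound from the classical $\varepsilon$-net theorem of Haussler and Welzl, using the observation that an optimal fractional transversal naturally induces a probability distribution on $V$ under which every hyperedge is heavy.

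First, I would take an optimal fractional transversal $w^*: V \to \mathbb{R}_{\geq 0}$ with $\sum_{x \in V} w^*(x) = \tau^*(H)$, and normalize it to a probability distribution $\mu$ on $V$ by setting $\mu(x) = w^*(x)/\tau^*(H)$. The fractional constraint $\sum_{x \in e} w^*(x) \geq 1$ then becomes $\mu(e) \geq \varepsilon$ for every hyperedge $e$, where $\varepsilon := 1/\tau^*(H)$. So, under $\mu$, every hyperedge has measure at least $\varepsilon$.

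Next, I would invoke the $\varepsilon$-net theorem: in a range space of VC-dimension $d$, an i.i.d.\ sample of size $m = \lceil 16\,(d/\varepsilon) \log(d/\varepsilon) \rceil$ drawn according to any probability distribution is, with positive probability, an $\varepsilon$-net, i.e.\ intersects every range of measure at least $\varepsilon$. Applied to $\mu$, such a sample must hit every hyperedge of $H$ and hence forms a transversal. Any realization that works yields $\tau(H) \leq m \leq 16\, d\, \tau^*(H) \log(d\, \tau^*(H))$, as required.

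The main obstacle is the $\varepsilon$-net theorem itself, which I would import as a black box rather than reprove. Its proof combines the Sauer--Shelah lemma --- bounding by $O(n^d)$ the number of distinct traces of a VC-dimension-$d$ hypergraph on any $n$-point set --- with the double-sampling trick: one bounds the probability that a random sample misses a heavy range by coupling it with a second independent sample, then union-bounds over the polynomially many traces on the merged sample. Balancing this failure probability against the $n^d$ count forces $m$ to grow like $(d/\varepsilon)\log(d/\varepsilon)$, and the explicit constant $16$ comes from careful optimization of the various thresholds. Once this theorem is granted, the reduction above is essentially a one-line rescaling of the fractional LP solution.
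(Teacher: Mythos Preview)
Your derivation is the standard one and is correct. Note, however, that the paper does not prove this lemma at all: it is stated with attribution to Haussler and Welzl \cite{Haussler86} and used as a black box. So there is nothing to compare against; you have simply supplied the (correct) outline of the proof that the paper omits.
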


\begin{theorem}\label{Hfree}
 Let $\Gamma$ be a fixed split graph. Then the Clique-Stable Set conjecture is verified on  $\mathcal{C}_\Gamma$.
\end{theorem}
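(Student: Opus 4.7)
The plan is to reduce, via Proposition~\ref{clique max}, to covering all pairs $(K,S)$ where $K$ is a maximal clique and $S$ a maximal stable set with $K\cap S=\varnothing$. For such a pair, maximality yields two useful facts: every $w\in V\setminus K$ has a non-neighbor in $K$, and every $u\in V\setminus S$ has a neighbor in $S$. Fix a maximal clique $K$ and consider the hypergraph $\mathcal{H}_K=(K,\{N^C(w)\cap K : w\in V\setminus K\})$; by the first fact, all of its hyperedges are non-empty.

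The key observation is that any transversal $T\subseteq K$ of $\mathcal{H}_K$ yields a separating cut. Define $A_T=T\cup\{v\in V : T\subseteq N(v)\}$ and $B_T=V\setminus A_T$. Then $K\subseteq A_T$ because $K$ is a clique containing $T$, while every maximal stable set $S$ disjoint from $K$ satisfies $S\subseteq B_T$ because, for each $w\in S$, the transversal property gives some $x\in T$ with $xw\notin E$, so $T\not\subseteq N(w)$. Crucially, the cut $A_T$ depends only on $T$ as a subset of $V$, not on $K$. Hence, enumerating over all $T\subseteq V$ with $|T|\leq c$ produces $O(n^c)$ cuts, and if every maximal clique admits such a transversal of size at most $c=c(\Gamma)$, the resulting family is a CS-separator.

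It thus remains to bound $\tau(\mathcal{H}_K)$ by a constant depending only on $\Gamma$, which is exactly where Lemma~\ref{vchaussler} enters. I would prove that both the VC-dimension $d$ and the fractional transversal $\tau^*$ of $\mathcal{H}_K$ are bounded by functions of $|\Gamma|$. Let $\Gamma$ have split partition $(K_0,S_0)$ with $k=|K_0|$ and $s=|S_0|$. For the VC-dimension, suppose $\mathcal{H}_K$ shatters $A\subseteq K$ of large size $r$: then $A$ is a clique in $G$, and for every $B\subseteq A$ there exists $w_B\in V\setminus K$ with $N(w_B)\cap A=A\setminus B$. A Ramsey-type selection among the $2^r$ witnesses $w_B$, for $r$ sufficiently large in terms of $k,s$, extracts an independent set of $s$ witnesses whose adjacency to a $k$-subset of $A$ realizes the bipartite pattern of $\Gamma$ between $K_0$ and $S_0$, producing an induced $\Gamma$, a contradiction. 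For the fractional transversal, a fractional matching of large weight similarly gives, after a careful integer extraction, a clique $X\subseteq K$ and a stable set $W\subseteq V\setminus K$ whose mutual adjacencies contain an induced copy of $\Gamma$.

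The main obstacle is clearly executing the two bounds. The VC-dimension argument requires a Ramsey-type extraction controlled by $|\Gamma|$, and the $\tau^*$ bound is the more delicate point, since fractional transversals in arbitrary hypergraphs may not round well; here one must exploit both $\Gamma$-freeness and the specific structure of $\mathcal{H}_K$. Once both $d=d(\Gamma)$ and $\tau^*=\tau^*(\Gamma)$ are established, Lemma~\ref{vchaussler} yields $\tau(\mathcal{H}_K)=O(d\tau^*\log(d\tau^*))$, a constant $c_\Gamma$ depending only on $\Gamma$, and the enumeration above gives a CS-separator of size $O(n^{c_\Gamma})$, consistent with the announced exponent $c_\Gamma=O(|\Gamma|\log|\Gamma|)$.
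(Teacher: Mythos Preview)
Your high-level plan (bound $\tau$ via Lemma~\ref{vchaussler} and enumerate small witness sets) matches the paper's, but your choice of hypergraph makes the $\tau^*$ step fail. You take $\mathcal{H}_K$ with one hyperedge $N^C(w)\cap K$ for \emph{every} $w\in V\setminus K$, and then hope to bound $\tau^*(\mathcal{H}_K)$ by a constant depending only on $\Gamma$. This is false. Let $G$ be the cocktail party graph $K_{n\times 2}$ (the complement of a perfect matching on $2n$ vertices). It has independence number~$2$, hence is $\Gamma$-free for every split graph $\Gamma$ whose stable side has at least three vertices. If $K$ is a maximum clique (one vertex from each matched pair), then for each $w\in V\setminus K$ the hyperedge $N^C(w)\cap K$ is a singleton, and these singletons are pairwise distinct; thus $\tau^*(\mathcal{H}_K)=\tau(\mathcal{H}_K)=n$. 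Your proposed extraction of a stable set $W\subseteq V\setminus K$ from a large fractional matching cannot work here, since $V\setminus K$ is itself a clique.

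The paper avoids this by working with a \emph{pair} $(K,S)$ rather than with $K$ alone: the hypergraph $H$ on $K$ uses only hyperedges coming from the fixed stable set $S$, and a dual hypergraph $H'$ on $S$ uses hyperedges coming from $K$. The crucial (and non-obvious) step is an LP/Hahn--Banach duality argument showing that at least one of $H$, $H'$ has $\tau^*\le 2$; one then applies Lemma~\ref{vchaussler} to whichever side wins, and accordingly builds the cut from a bounded subset of either $K$ or $S$. Correspondingly, the family $\F$ contains cuts indexed by small cliques \emph{and} by small stable sets. Your one-sided construction misses this symmetry, which is exactly what the cocktail party example exploits. Your VC-dimension sketch (Ramsey extraction of an independent set of witnesses) is also weaker than needed and would not yield the announced $O(|\Gamma|\log|\Gamma|)$ exponent; the paper gets VC-dimension $\le 2\varphi-1$ directly, because the witnesses already lie in the stable set $S$ and no Ramsey step is required.
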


\begin{proof}
The vertices of $\Gamma$ are partitioned into $(V_1, V_2)$ where $V_1$ is a clique and $V_2$ is a stable set. 
Let $\varphi=\max(|V_1|, |V_2|)$ and $t=64 \varphi (\log(\varphi)+2)$.
Let $G=(V,E) \in \mathcal{C}_\Gamma$ and $\F$ be the following family of cuts. For every clique 
$\{x_1, \ldots, x_r\}$ with $r\leq t$, we note $U= \cap_{1 \leq i \leq r} N[x_i]$
and put $(U, V \setminus U)$ in $\F$. Similarly, for every stable 
set $\{x_1, \ldots, x_r\}$ with $r\leq t$, we note $U=  \cup_{1 \leq i \leq r} N(x_i)$
and put $(U, V \setminus U)$ in $\F$. Since each member of $\F$ is defined with a set of at most $t$ vertices, the size of 
$\F$ is at most $\O(n^t)$. Let us now prove that $\F$ is a CS-separator.
Let $(K,S)$ be a pair of maximal clique and stable set. 
We build $H$ a hypergraph with vertex set $K$. For all $x \in S$, build the hyperedge $K \setminus N_G(x)$ (see Fig.~\ref{fig B'}). Symmetrically, build $H'$ a hypergraph with vertex set $S$. For all $x \in K$, build the hyperedge $S \cap N_G(x)$. The goal is to prove thanks to Lemma \ref{vchaussler} that $H$ or $H'$ has bounded transversality $\tau$. This will enable us to prove that $(C,S)$ is separated by $\F$.

To begin with, let us introduce an auxiliary oriented graph $B$ with vertex set $K \cup S$. For all $x \in K$ and $ y \in S$, put the arc $xy$ if $xy \in E$, and put the arc $yx$ otherwise (see Fig.~\ref{fig B}). For a weight function $w: V\to \mathbb{R}^+$ and a subset of vertices $T\subseteq V$, we define $w(T)=\sum_{x\in T} w(x)$.

\begin{figure}
\centering
\subfigure[A clique $K$ and a stable $S$ in $G$.]{ \label{fig clique_stable1}
 \includegraphics[scale=0.5]{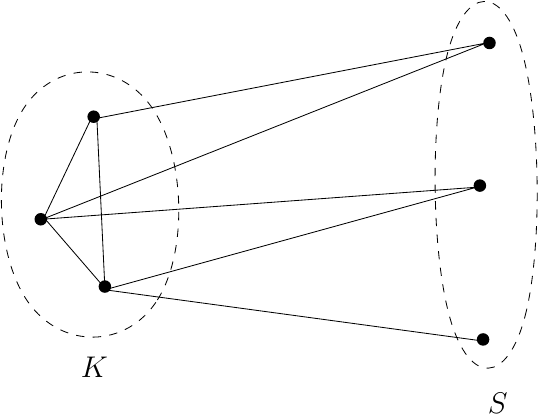}
 }
\subfigure[Hypergraph $H$ where hyperedges are built from the non-neighborhood of vertices from $S$.]{ \hspace{1cm}\label{fig B'}
 \includegraphics[scale=0.5]{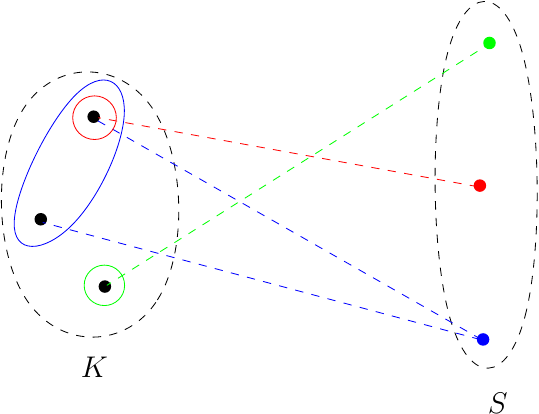}
 }
\subfigure[Graph $B$ built from $K$ and $S$. 
]{ \label{fig B} 
 \includegraphics[scale=0.5]{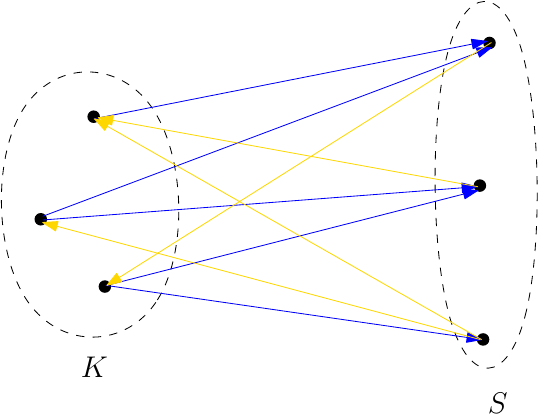} 
 }

\caption{Illustration of the proof of Theorem \ref{Hfree}. For more visibility in \ref{fig B}, forward arcs are drawn in blue and backward arcs in yellow.}
\end{figure}

\begin{lemma} \label{weight}
 In $B$, there exists:
\begin{enumerate}
 \item[(i)] either a weight function $w: K \to \mathbb{R}^+$ such that $w(K)=2$ and $\forall x \in S, w(N^+(x)) \geq 1$.
 \item[(ii)] or a weight function $w: S \to \mathbb{R}^+$ such that $w(S)=2$ and $\forall x \in K, w(N^+(x)) \geq 1$.
\end{enumerate}
\end{lemma}

In the following, let assume we are in case (i) and let us prove that $H$ has bounded transversality. Case (ii) is handled symmetrically by switching $H$ and $H'$.

\begin{lemma} \label{tau star}
 The hypergraph $H$ has fractional transversality $\tau^* \leq 2$.
\end{lemma}

\begin{lemma} \label{VC-dim}
$H$ has VC-dimension bounded by $2\varphi-1$.
\end{lemma}

\noindent Applying Lemmas \ref{vchaussler}, \ref{tau star} and \ref{VC-dim} to $H$, we obtain

\[ \tau(H) \leq 16  d  \tau^*(H) \log(d \tau^*(H)) \leq 64 \varphi (\log(\varphi)+2)=t.\]

\noindent Hence $\tau$ is bounded by $t$ which only depends on $H$.
There must be $x_1, \ldots, x_\tau \in K$ such that each hyperedge of $H$ contains at least one $x_i$.
Consequently, $S \subseteq \cup_{1\leq i \leq t} N^C_G[x_i]$.
Moreover, $K \subseteq (\cap_{1\leq i \leq t} N_G[x_i])=U$ since $x_1, \ldots, x_\tau$ are in the same clique $K$. This means that the cut $(U, V\setminus U) \in \F$ built from the clique $x_1, \ldots, x_\tau$ separates $K$ and $S$.

When case (ii) of Claim \ref{weight} occurs, $H'$ has bounded transversality, so there are $\tau$ vertices  $x_1, \ldots, x_\tau \in S$ such that for all 
$y \in K$, there exists $x_i \in N(y)$. Thus $K \subseteq (\cup_{1\leq i \leq t} N_G(x_i))=U$ and $S \subseteq \cap_{1\leq i \leq t} N^C_G(x_i)$. The cut $(U, V\setminus U) \in \F$ built from the stable set $x_1, \ldots, x_\tau$ separates $K$ and $S$.
\end{proof}

\begin{proof}[Proof of Lemma \ref{weight}] If $x=(x_1, \ldots , x_n) \in \mathbb{R}^n$, we note $x\neq 0$ if there exists $i$ such that $x_i\neq 0$ and we note $x\geq 0$ if for every $i$, $x_i\geq 0$. We use the following variant of the geometric Hahn-Banach separation theorem, from which we derive Claim \ref{farkas lemma adapted}:

\begin{claim}\label{Farkas lemma}
 Let $A$ be a $n\times m$ matrix. Then at least one of the following holds:
\begin{enumerate}
 \item There exists $w \in \mathbb{R}^m$ such that $w\geq 0$, $w\neq 0$ and $Aw\geq 0$.
 \item[or 2.] There exists $y\in \mathbb{R}^n$ such that $y\geq0$, $y \neq 0$ and ${}^t y A\leq0$.
\end{enumerate}

\end{claim}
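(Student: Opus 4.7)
The plan is to deduce this classical theorem of the alternative from Farkas' lemma or, equivalently, from a convex-cone separation argument. Consider in $\mathbb{R}^n$ the cone $K := \{Aw : w \in \mathbb{R}^m, \; w \geq 0\}$ generated by the columns of $A$ and the nonnegative orthant $P := \mathbb{R}^n_{\geq 0}$; both are closed convex cones, $K$ being finitely generated and $P$ being pointed. The two alternatives translate geometrically: (1) says $K$ contains a nonzero point of $P$ realized as $Aw$ for some nonzero $w \geq 0$, while (2) says there is a nonzero $y \in P$ with $y^T v \leq 0$ for every $v \in K$, i.e.\ a separating hyperplane through the origin whose normal lies in $P$.

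I would formalize the dichotomy through the feasibility of the system
\[
Aw \geq 0, \quad w \geq 0, \quad \mathbf{1}^T w = 1.
\]
If this system has a solution, then $w$ is automatically nonzero (by $\mathbf{1}^T w = 1$) and alternative (1) is delivered directly. If it is infeasible, Farkas' lemma produces a certificate: nonnegative multipliers $y \in \mathbb{R}^n_{\geq 0}$ attached to $-Aw \leq 0$ and $z \in \mathbb{R}^m_{\geq 0}$ attached to $-w \leq 0$, together with a scalar $\lambda \in \mathbb{R}$ attached to $\mathbf{1}^T w = 1$, satisfying $-A^T y - z + \lambda \mathbf{1} = 0$ and $\lambda < 0$.

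Rearranging yields $A^T y = \lambda \mathbf{1} - z \leq \lambda \mathbf{1} < 0$, whence $y^T A \leq 0$; and $y$ must be nonzero, since $y = 0$ would force $\lambda \mathbf{1} = z \geq 0$ in conflict with $\lambda < 0$. Thus $y \geq 0$, $y \neq 0$, $y^T A \leq 0$, which is precisely alternative (2).

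The main obstacle---modest, but worth verifying carefully---is extracting a nonzero, nonnegative $y$ from the Farkas certificate. Pointedness of the orthant $P$ and the normalization $\mathbf{1}^T w = 1$ (which is what forces the multiplier $\lambda$ strictly negative) are exactly the two ingredients that make this extraction succeed; the rest is bookkeeping in the cone-separation picture.
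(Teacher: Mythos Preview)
Your proof is correct. Both your argument and the paper's rest on the same convex-separation principle, but the formalizations differ: the paper applies the geometric Hahn--Banach theorem directly to the cone $A_{vec}=\{Aw:w\ge 0\}$ and the nonnegative orthant $P$, extracting a separating hyperplane whose normal lies in $P$ and is nonpositive on $A_{vec}$; you instead normalize by imposing $\mathbf{1}^T w=1$, reduce to a linear-feasibility question, and read off alternative~(2) from a Farkas infeasibility certificate. Your route has the advantage that the nonvanishing of $y$ and the sign of $A^T y$ fall out of the strict inequality $\lambda<0$ with no further geometric reasoning; the paper's route is shorter to state but leaves implicit why the separating normal can be taken in $P$ and nonzero (and glosses over the case where $A_{vec}$ has empty interior). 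Either way the content is the same classical theorem of the alternative.
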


\begin{proof}
 Call $P \subseteq \mathbb{R}^n$ the convex set composed of all vectors with only positive coordinates. Call $a_1, \ldots, a_m$ the columns vectors of $A$ and $A_{vec}=\{\lambda_1 a_1 + \ldots+ \lambda_m a_m \mid \lambda_1, \ldots, \lambda_m \in \mathbb{R}^+\}$. If $P\cap A_{vec}\neq \{0\}$, then there exists $w \in \mathbb{R}^m$ fulfilling the requirements of the first item. Otherwise, the interior of $P$ and the interior of $A_{vec}$ are disjoint and, according to the geometric Hahn-Banach separation theorem, there is a hyperplane separating them. Call its normal vector on the positive side $y\in \mathbb{R}^n$, then $y$ fulfills the requirements of the second item.
\end{proof}

\begin{claim}\label{farkas lemma adapted}
 For all oriented graph $G=(V,E)$, there exists a weight function $w: V \to [0,1]$ such that $w(V)=1$ and for each vertex $x$, $w(N^+(x)) \geq w(N^-(x))$.
\end{claim}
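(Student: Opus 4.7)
The plan is to encode the desired inequalities as a linear system on an antisymmetric matrix and apply Claim \ref{Farkas lemma}.

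First I would introduce the $|V| \times |V|$ matrix $A$ indexed by $V \times V$ defined by $A_{x,v} = 1$ if $xv$ is an arc (that is, $v \in N^+(x)$), $A_{x,v} = -1$ if $vx$ is an arc (that is, $v \in N^-(x)$), and $A_{x,v} = 0$ otherwise. By construction, for any weight vector $w$ indexed by $V$, the $x$-coordinate of $Aw$ equals $w(N^+(x)) - w(N^-(x))$. Thus finding a weight function satisfying the requirements of the claim is exactly the same as finding $w \in \mathbb{R}^V$ with $w \geq 0$, $w \neq 0$, and $Aw \geq 0$; the normalization $w(V) = 1$ can then be obtained by dividing by $w(V) > 0$, and then every $w(x)$ automatically lies in $[0,1]$.

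Next I would apply Claim \ref{Farkas lemma} to $A$. In case 1, we directly obtain a $w$ satisfying the above, and after normalization we are done. In case 2, we obtain $y \in \mathbb{R}^V$ with $y \geq 0$, $y \neq 0$, and ${}^t y A \leq 0$, which after transposing reads $A^T y \leq 0$. The key observation, which is what makes the proof go through, is that $A$ is antisymmetric: indeed, whenever $xv$ is an arc we have $A_{x,v} = 1$ and $A_{v,x} = -1$ (and similarly the zero entries are symmetric), so $A^T = -A$. Consequently $A^T y \leq 0$ is equivalent to $A y \geq 0$, and therefore $y$ itself fulfils the conditions of case 1. Normalizing $y$ by $y(V)$ produces the required weight function.

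In summary, either outcome of Claim \ref{Farkas lemma} yields a nonnegative nonzero vector $w$ with $Aw \geq 0$, and the claim follows after rescaling. There is no real obstacle here beyond noticing the antisymmetry of $A$, which collapses the two alternatives of the Farkas-style dichotomy into a single existence statement; the rest is routine linear algebra and normalization.
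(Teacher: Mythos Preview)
Your proposal is correct and follows essentially the same approach as the paper: both introduce the antisymmetric signed adjacency matrix $A$, apply Claim~\ref{Farkas lemma}, and use the antisymmetry $A^T=-A$ to reduce case~2 to case~1 before normalizing. The arguments are virtually identical.
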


\begin{proof} 
Let $A$ be the adjacency matrix of the oriented graph $G$, that is to say that $A_{x,y}=1$ if $xy\in E$, $-1$ if $yx \in E$, and 0 otherwise. Apply Lemma \ref{Farkas lemma} to $A$. Either case one occurs and then $w$ is a nonnegative weight function on the columns of $A$, with at least one non zero weight. Moreover, $Aw\geq 0$ so we get $w(N^+(x)) \geq w(N^-(x))$ for all $x\in V$. We conclude by rescaling the weight function with a factor $1/w(V)$.

Otherwise, case two occurs and there is $y\in \mathbb{R}^{n}$ with $y \neq 0$ such that ${}^t y A\leq0$. We get by transposition ${}^tA y \leq 0$ thus $-Ay \leq 0$ since $A$ is an antisymetric matrix, and then $A y \geq 0$. We conclude as in the previous case.
\end{proof}
Apply Claim \ref{farkas lemma adapted} to $B$ to obtain a weight function $w': V \to [0,1]$. Then $w'(V)=1$, so either $w'(K)>0$ or $w'(S)>0$.
Assume $w'(K)>0$ (the other case is handled symmetrically). Consider the new weight function $w$ defined by $w(x)=2w'(x)/w'(K)$ if $x \in K$, and 0 otherwise. Then for all $x \in S$,  on one hand $w(N^+(x)) \geq w(N^-(x))$ by extension of the property of $w'$, and on the other hand, $N^+(x) \cup N^-(x) = K$ by construction of $B$. Thus $w(N^+(x)) \geq w(K)/2 =1$ since $w(K)=2$.
\end{proof}

\begin{proof}[Proof of Lemma \ref{tau star}]
Let us prove that the weight function $w$ given by Lemma \ref{weight} provides a solution to the fractional transversality linear program. Let $e$ be a hyperedge built from the non-neighborhood of $x \in S$. Recall that this non-neighborhood is precisely $N^+(x)$ in $B$, then we have:
$$ \sum_{y \in e} w(y) = w(N^+(x)) \geq 1.$$
Thus $w$ satisfies the constraints of the fractional transversality, and $w(K)\leq 2$, i.e. $\tau^* \leq 2$. 
\end{proof}

\begin{proof}[Proof of Lemma \ref{VC-dim}]
Assume there is a set $A=\{u_1, \ldots, u_\varphi, v_1, \ldots, v_\varphi\}$ of $2\varphi$ vertices of $H$ such that for every $B \subseteq A$ there is an edge $e\in E$ so that $e \cap A=B$. The aim is to exploit the shattering to find an induced $\Gamma$, which builds a contradiction. Recall that the forbidden split graph $\Gamma$ is the union of a clique $V_1=\{x_1, \ldots, x_r\}$ and a stable set $V_2=\{y_1, \ldots, y_{r'}\}$ (with $r, r' \leq \varphi$). Let $x_i \in V_1$, let $\{y_{i_1}, \ldots, y_{i_k}\}=N_\Gamma(x_i) \cap V_2$ be the set of its neighbors in $V_2$.

Consider $\U_i=\{u_{i_1}, \ldots, u_{i_k}\}\cup \{v_i\}$ (possible because $|V_1|, |V_2|\leq \varphi$). By assumption on $A$, there exists $e\in E$ such that $e\cap  A=A\setminus \U_i$. Let $s_i\in S$ be the vertex whose non-neighborhood corresponds to the edge $e$, then the neighborhood of $s_i$ in $A$ is exactly $\U_i$. Let $\U=\{u_1, \ldots, u_\varphi\}$.
Now, forget about the existence of $v_1, \ldots, v_\varphi$, and observe that $N_G(s_i)\cap \U=\{u_{i_1}, \ldots , u_{i_k}\}$. Then $G[\{s_1, \ldots , s_r\} \cup \U]$ is an induced $\Gamma$, which is a contradiction.
\end{proof}

Note that the presence of $v_1, \ldots, v_\varphi$ is useful in case where two vertices of $V_1$ are twins with respect to $V_2$, meaning that their neighborhoods restricted to $V_2$ are the same, call it $N$. Then, $A$ does not ensure that there exist two hyperedges intersecting $A$ in exactly $N$. So the vertices $v_1, \ldots, v_\varphi$ ensure that for two distinct vertices $x_i, x_j$ of $V_1$, the sets $\U_i$ and $\U_j$ are different.
In fact, only $v_1, \ldots, v_{\log \varphi}$ are needed to make $\U_i$ and $\U_j$ distinct: for $x_i \in V_1$, code $i$ in binary over $\log\varphi$ bits and define $\U_i$ to be the union of $\{u_{i_1}, \ldots, u_{i_k}\}$ with the set of $v_{j}$ such that the $j$-th bit is one. 
Thus the VC-dimension of $H$ is bounded by $\varphi + \log \varphi$.

\subsection{The case of $P_k,\overline{P_k}$-free graphs}

The graph $P_k$ is the path with $k$ vertices, and the graph $\overline{P_k}$ is its complement. Let $\mathcal{C}_k$ be the class of graphs with no induced $P_k$ nor $\overline{P_k}$. We prove the following:

\begin{theorem} \label{th: path}
Let $k>0$. The Clique-Stable set conjecture is verified on $\mathcal{C}_k$.
\end{theorem}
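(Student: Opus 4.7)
My plan is to build a polynomial CS-separator recursively, driven by the following \textbf{Key Lemma}: there exist constants $\epsilon_k > 0$ and $n_0(k)$ such that every $G \in \mathcal{C}_k$ on $n \geq n_0$ vertices contains two disjoint subsets $A, B \subseteq V(G)$ with $|A|, |B| \geq \epsilon_k n$ such that $A$ is either complete to $B$ (every edge between them present) or anticomplete to $B$ (no edge between them). This is precisely the structural statement underlying the proof of the Erd\H{o}s--Hajnal conjecture on $\mathcal{C}_k$; once it is available, the CS-separator follows routinely.

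To prove the Key Lemma I would apply Szemer\'edi's regularity lemma to $G$ with parameters depending on $k$, and colour the reduced graph $R$ according to pair densities: ``sparse'' for density $\leq \eta$, ``dense'' for density $\geq 1 - \eta$, and ``intermediate'' otherwise. Using that both $G$ and $\overline{G}$ are $P_k$-free, I would argue that too many intermediate pairs, or too interlocked a structure of sparse/dense pairs in $R$, would produce an induced $P_k$ or $\overline{P_k}$ by the standard ``picking one vertex per cluster'' trick; a Ramsey-type argument then yields a large monochromatic (sparse or dense) bipartite subgraph of $R$, i.e.\ a pair of clusters $(V_i, V_j)$ that is almost complete or almost empty. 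A clean-up step---iteratively deleting vertices with too many edges to the wrong side---then promotes this to a pair $(A,B)$ of linear size that is \emph{exactly} complete or \emph{exactly} anticomplete. The parameter $\epsilon_k$ produced this way is tower-small in $k$, accounting for the tower dependence of $c_k$ announced in the abstract.

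Given the Key Lemma, let $N(n)$ denote the worst-case CS-separator size over $n$-vertex graphs of $\mathcal{C}_k$. For $n < n_0$ the trivial bound $N(n) \leq 2^n$ suffices. For $n \geq n_0$, apply the Key Lemma to obtain $(A,B)$, and assume first that $A$ is anticomplete to $B$. Then every clique $K$ satisfies $K \cap A = \emptyset$ or $K \cap B = \emptyset$, since a vertex of $K \cap A$ and a vertex of $K \cap B$ would need to be adjacent. I will build $\F_G$ from recursive CS-separators $\F_{G-A}$ and $\F_{G-B}$: each cut $(U, (V \setminus A) \setminus U) \in \F_{G-A}$ is extended to the cut $(U, V \setminus U)$ (placing $A$ on the stable-set side), and symmetrically for $\F_{G-B}$. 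For a disjoint pair $(K,S)$ with $K \cap A = \emptyset$, any cut of $\F_{G-A}$ separating $K$ from $S \setminus A$ extends as above to a cut of $\F_G$ separating $K$ from $S$, since $S \cap A$ lies on the stable-set side; the case $K \cap B = \emptyset$ is symmetric. When instead $A$ is complete to $B$, stable sets play the role cliques played above, and extensions place the removed part on the \emph{clique} side. Either way $N(n) \leq 2 N((1-\epsilon_k) n)$, yielding $N(n) = O(n^{c_k})$ with $c_k = \log 2 / \log(1/(1-\epsilon_k))$, polynomial in $n$.

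\textbf{Main obstacle.} The entire difficulty sits in the Key Lemma: passing from the regularity-lemma output (a near-complete or near-empty pair) to an \emph{exactly} complete or anticomplete pair of \emph{linear} size, using only $P_k$- and $\overline{P_k}$-freeness. This is the strengthening of ``contains a large clique or stable set'' (Erd\H{o}s--Hajnal) to ``contains a large complete or anticomplete bipartite pair,'' and relies essentially on the symmetric exclusion of both $P_k$ and its complement.
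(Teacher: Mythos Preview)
Your proposal is correct and follows essentially the same approach as the paper: the paper quotes your Key Lemma as a black-box theorem (Theorem~\ref{path}, from a companion paper by the same authors), and then runs exactly your recursion $N(n)\le 2N((1-\epsilon_k)n)$ by extending cuts of $G-A$ and $G-B$ with the removed part placed on the appropriate side, obtaining a CS-separator of size $n^{c_k}$ with $c_k=-1/\log_2(1-\epsilon_k)$. The only cosmetic difference is that the paper phrases the induction as a minimal-counterexample argument rather than an explicit recurrence.
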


The proof relies on this very recent result about $\mathcal{C}_k$, which appears in the study of the Erd\H{o}s-Hajnal property on this class:

\begin{theorem} \label{path} \cite{Thomasse13}
For every $k$, there is a constant $t_k>0$, such that every graph $G\in \mathcal{C}_k$ contains two subsets of vertices $V_1$ and $V_2$, each of size at least $t_k \cdot n$, such that $V_1$ and $V_2$ are completely adjacent or completely non-adjacent. 
\end{theorem}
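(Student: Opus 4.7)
The plan is to iterate Theorem \ref{path} and construct a CS-separator by induction on $|V(G)|$, showing that every $G \in \mathcal{C}_k$ on $n$ vertices admits a CS-separator of size at most $C_k \cdot n^{c_k}$, where $c_k = (\log 2)/\log(1/(1-t_k))$ and $C_k$ is a large enough constant to absorb the base case ($n$ below some threshold $n_0$, for which the trivial family of $2^{n_0}$ cuts suffices).

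For the inductive step, I would apply Theorem \ref{path} to $G$ to obtain $V_1, V_2 \subseteq V(G)$, each of size at least $t_k n$, which are either completely adjacent or completely non-adjacent. The crucial observation driving the recursion is the following dichotomy: in the completely-adjacent case any stable set $S$ of $G$ meets at most one of $V_1, V_2$ (otherwise two vertices of $S$ from different $V_i$'s would be adjacent), and dually in the completely non-adjacent case any clique $K$ of $G$ meets at most one of $V_1, V_2$. Since $\mathcal{C}_k$ is closed under induced subgraphs, both $G[V \setminus V_1]$ and $G[V \setminus V_2]$ still belong to $\mathcal{C}_k$, and the induction hypothesis yields CS-separators $\mathcal{F}_1, \mathcal{F}_2$ for them, each of size at most $C_k ((1-t_k)n)^{c_k}$.

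Assemble $\mathcal{F}$ as follows. In the completely-adjacent case, for each $(A,B) \in \mathcal{F}_i$ ($i \in \{1,2\}$) add $(A \cup V_i, B)$ to $\mathcal{F}$; in the completely non-adjacent case, add $(A, B \cup V_i)$ instead. The verification is routine: given a disjoint pair $(K, S)$ in $G$, in the adjacent case some $V_i$ avoids $S$, and then $(K \setminus V_i, S)$ is a disjoint (clique, stable set) pair in $G[V \setminus V_i]$ separated by some $(A,B) \in \mathcal{F}_i$, so the extended cut $(A \cup V_i, B)$ separates $(K, S)$ in $G$. The non-adjacent case is symmetric: some $V_i$ avoids $K$, and one separates $(K, S \setminus V_i)$ in $G[V \setminus V_i]$ then extends via $(A, B \cup V_i)$.

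This gives the recurrence $f(n) \leq 2 f((1-t_k)n)$, which yields $f(n) = O(n^{c_k})$ with $c_k$ as above, hence polynomial in $n$ for each fixed $k$. The main obstacle is purely quantitative rather than conceptual: since Theorem \ref{path} is proved via Szemer\'edi's regularity lemma, $t_k$ is tower-type small in $k$, so $c_k$ is tower-type large in $k$, which is precisely the tower-function dependence on $k$ announced in the abstract; improving the exponent would require a more quantitative version of Theorem \ref{path}.
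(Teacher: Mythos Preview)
Your proposal does not prove the stated Theorem~\ref{path}: that result is quoted from~\cite{Thomasse13} and asserts the \emph{existence} of the two linear-size subsets $V_1,V_2$ that are completely adjacent or completely non-adjacent. The paper does not prove it, and neither do you---you invoke it as a black box in your very first step. What you have actually written is a proof of Theorem~\ref{th: path} (the Clique--Stable Set conjecture for $\mathcal{C}_k$), which in the paper is the theorem that \emph{applies} Theorem~\ref{path}.

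If the intended target was Theorem~\ref{th: path}, then your argument is correct and essentially the same as the paper's. The paper also recurses on $G[V\setminus V_2]$ and $G[V\setminus V_1]$ (written there as $G[V_1\cup V_3]$ and $G[V_2\cup V_3]$ with $V_3=V\setminus(V_1\cup V_2)$), extends each recursive cut by adjoining the missing $V_i$ to the stable-set side, and obtains the bound $2\cdot((1-t_k)n)^c\le n^c$ with the same exponent $c=-1/\log_2(1-t_k)$ that you derive. The only cosmetic differences are: the paper phrases the induction as a minimal-counterexample argument rather than a direct induction, and it reduces to the completely non-adjacent case by passing to the complement (valid since both $\mathcal{C}_k$ and the CS-separation problem are self-complementary), whereas you handle the adjacent and non-adjacent cases in parallel. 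Your remark about the tower-type dependence of $c_k$ on $k$ matches what the abstract says.
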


\begin{proof}[Proof of Theorem \ref{th: path}]
The goal is to prove that every graph in $\mathcal{C}_k$ admits a CS-separator of size $n^c$ with $c=(-1/\log_2(1- t_k))$. We proceed by contradiction and assume that $G$ is a minimal counter-example. Free to exchange $G$ and its complement, by Theorem \ref{path}, there exists two subsets $V_1, V_2$ completely non adjacent, and $|V_1|, |V_2| \geq t_k \cdot n$ for some constant $0<t_k<1$. Call $V_3=V\setminus (V_1\cup  V_2)$. By minimality of $G$,  $G[V_1\cup V_3]$ admits a CS-separator $F_1$ of size $(|V_1|+|V_3|)^c$, and $G[V_2\cup V_3]$ admits a CS-separator $F_2$ of size $(|V_2|+|V_3|)^c$. Let us build $F$ aiming at being a CS-separator for $G$. For every cut $(U,W)$ in $F_1$, build the cut $(U, W\cup V_2)$, and similarly for every cut $(U,W) $ in $F_2$, build the cut $(U, W\cup V_1)$. We show that $F$ is indeed a CS-separator: let $(K,S)$ be a pair of clique and stable set of $G$ that do not intersect, then either $K\subseteq V_1\cup V_3$, or $K\subseteq V_2\cup V_3$ since there is no edge between $V_1$ and $V_2$. By symmetry, suppose $K\subseteq V_1\cup V_3$, then there exists a cut $(U,W)$ in $F_1$ that separates $(K, S\cap (V_1\cup V_3))$ and the corresponding cut $(U, W\cup V_2)$ in $F$ separates $(K,S)$. Finally, $F$ has size at most $2\cdot ((1-t_k)n)^c\leq n^c$.
\end{proof}
%
%


\subsection{The case of $P_5$-free graphs}

When excluding only a path, we can obtain a CS-separator for $P_5$-free graphs thanks to the following result due to Lokshtanov, Vatshelle, and Villanger \cite{Lok13}. A \emph{triangulation} of a graph $G=(V,E)$ is a graph $H=(V,E\uplus F)$ (obtained from $G$ by adding a set of edges $F$ called \emph{fill edges}) such that every cycle of length at least four has a \emph{chord}, that is an edge between two non-consecutive vertices of the cycle. It is a \emph{minimal} triangulation if $H'=(V,E\uplus F')$ is not a triangulation for every $F'\subsetneq F$.

\begin{theorem}[rephrased from \cite{Lok13}] \label{P5-free}
Every $P_5$-free graph $G=(V,E)$ has
a family $\Pi$ of subsets of $V$ with size at most $3n^7$, such that for every maximal stable set $S$ of G with $|S| \geq 2$
there exists a minimal triangulation $H$ of $G$ such that every maximal clique of $H$ is in $\Pi$ and every fill edge has both extremities in $V\setminus S$.
\end{theorem}

\begin{coro}
For every $P_5$-free graph $G$, there exists an $\mathcal{O}(n^8)$ CS-separator.
\end{coro}

\begin{proof}
Let $\Pi$ be the family output by the algorithm of Theorem \ref{P5-free}. Define $F=\Pi\uplus\Pi'\uplus F_0$ where
\begin{align*}
\Pi'&=\{U\setminus\{x\},V\setminus(U\setminus\{x\}) | \ U\in \Pi, x\in V \} \\
F_0&= \{V\setminus\{x\},\{x\})| \ x\in V \} 
\end{align*}
Let $K$ and $S$ be respectively a clique and a stable set of $G$ which do not intersect. If $|S|=1$, $F_0$ separates $K$ from $S$. Otherwise, by property of $\Pi$, there exists a minimal triangulation $H$ of $G$ such that every maximal clique of $H$ is in $\Pi$ and every fill edge has both extremities in $V\setminus S$, in particular $S$ is still a stable set in $H$. Let $K'$ be a maximal clique of $H$ such that $K\subseteq K'$. Then $|K'\cap S| \leq 1$ and $K'\in \Pi$. In particular $(K'\setminus (K'\cap S), (V\setminus K')\cup (K'\cap S))\in F$ separates $K$ and $S$.
\end{proof}

It can be noted that the family $\Pi$ can be efficiently constructed.

\section{Bipartite packing number and graph coloring}

\label{sec: ASS}

The aim of this section is to prove that the polynomial Alon-Saks-Seymour conjecture is equivalent to the 
Clique-Stable Set separation conjecture. We need for this an intermediate step using a new version of the 
Alon-Saks-Seymour conjecture, called the Oriented Alon-Saks-Seymour conjecture. 

\subsection{Oriented Alon-Saks-Seymour conjecture}
Given a graph $G$, the \emph{chromatic number $\chi(G)$} of $G$ is the minimum number of colors needed to color the vertices such that any two adjacent vertices have different colors. The \emph{bipartite packing number $\bp(G)$} of a graph $G$ is the minimum number of edge-disjoint complete bipartite graphs needed to partition the edges of $G$. Alon, Saks and Seymour conjectured that if $\bp(G) \leq k$, then $\chi(G) \leq k+1$.
The conjecture holds for complete graphs. Indeed, Graham and Pollak \cite{GrahamP72} proved that $n-1$ edge-disjoint complete bipartite graphs are needed to partition the edges of $K_n$. A beautiful algebraic proof of this theorem is due to Tverberg \cite{Tverberg82}. The conjecture was disproved by Huang and Sudakov in \cite{Sudakov10} who proved that $\chi \geq \Omega(k^{6/5})$ for some graphs using a construction based on Razborov's graphs \cite{Razborov92}. Huang and Sudakov even conjectured the existence of a graph $G$ such that $\chi(G) \geq 2^{c \log^2(\bp(G))}$ for some constant $c>0$, nevertheless the existence of a polynomial bound is still open.
\begin{conj}{(Polynomial Alon-Saks-Seymour Conjecture)}
There exists a polynomial $P$ such that for every $G$, $\chi(G)\leq P(\bp(G))$.
\end{conj}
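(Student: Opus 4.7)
The plan is to attack the Polynomial Alon--Saks--Seymour Conjecture directly, starting from an optimal edge-partition of $G$ into $k=\bp(G)$ complete bipartite subgraphs $B_1,\dots,B_k$ with parts $(L_i,R_i)$. To each vertex $v$ I associate a signature $\sigma(v)\in\{L,R,\star\}^k$, where $\sigma(v)_i$ records which side of $B_i$ contains $v$ (or $\star$ if $v$ is in neither). The fundamental observation is that two adjacent vertices must witness opposite labels $L$ and $R$ in some coordinate, so any two vertices sharing a signature form an independent set; this already yields the trivial bound $\chi(G)\le 3^k$. The Polynomial Alon--Saks--Seymour Conjecture thus reduces to compressing these $3^k$ signature classes into $\mathrm{poly}(k)$ proper color classes.

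My first step would be a recursive decomposition. By averaging over the edges of $G$, some biclique $B_i$ satisfies $|L_i|,|R_i|\ge n/(2k)$. Each of the three induced subgraphs $G[L_i]$, $G[R_i]$, and $G[V\setminus(L_i\cup R_i)]$ has bipartite packing at most $k-1$, since $B_i$ contributes no edge internal to any of them. A naive three-way union of colorings only gives $P(k)\le 3P(k-1)$, which is exponential. To reach a polynomial bound I would exploit the fact that $L_i$ and $R_i$ are completely joined (hence must use disjoint palettes) while the leftover set can share colors with either side; combined with amortization across deep recursion, the goal is a recurrence whose closed form is $k^{O(1)}$ rather than $c^k$.

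My second step would be a signature-compression in the spirit of the VC-dimension argument of Theorem~\ref{Hfree}. Define a signature hypergraph $\mathcal{H}$ whose vertex set is $V(G)$ and whose hyperedges are, for each $i\in\{1,\dots,k\}$, the sets $L_i$ and $R_i$. If one can bound the VC-dimension of $\mathcal{H}$ polylogarithmically in $k$, then Lemma~\ref{vchaussler} supplies a transversal of size $\mathrm{poly}(k)$ covering every non-adjacency constraint; combined with the recursion above, this would collapse the $3^k$ signature classes into $\mathrm{poly}(k)$ color classes.

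The main obstacle is exactly what has kept the conjecture open since its refutation. The Huang--Sudakov construction \cite{Sudakov10} shows $\chi\ge\bp^{6/5}$, so no purely local argument that colors each side of a biclique independently and then takes a union can succeed -- the multiplicative gap is real. Every viable attack must exploit global structure across many bicliques simultaneously, and the hardest step in my plan is precisely bounding the VC-dimension (or an analogous combinatorial complexity) of the signature hypergraph $\mathcal{H}$. Without further assumptions on $G$, this hypergraph can a priori have VC-dimension as large as $k$, in which case Lemma~\ref{vchaussler} yields no improvement over the trivial bound. Breaking this barrier seems to require a regularity-type structure theorem for biclique-packed graphs, which I believe is the true crux of the Polynomial Alon--Saks--Seymour Conjecture.
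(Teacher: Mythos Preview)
The statement you are attempting to prove is a \emph{conjecture}, and the paper does not prove it. Nowhere in the paper is there a proof of the Polynomial Alon--Saks--Seymour Conjecture; on the contrary, the paper's contribution in Section~\ref{sec: ASS} is to show that this conjecture is \emph{equivalent} to the Clique--Stable Set separation conjecture (Theorems~\ref{th equiv CS ASS} and~\ref{ASS t-biclique}), both of which remain open. So there is no ``paper's own proof'' to compare against.

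Your proposal is not a proof either, and to your credit you say so explicitly in the final paragraph. The signature argument gives the easy $3^k$ bound, and the recursive decomposition you sketch yields only $P(k)\le 3P(k-1)$, which is still exponential; you correctly identify that no refinement of this local scheme can beat the Huang--Sudakov lower bound. The VC-dimension idea is appealing by analogy with Theorem~\ref{Hfree}, but as you note, the signature hypergraph $\mathcal{H}$ has no a priori bound on its VC-dimension --- it can be as large as $k$ --- so Lemma~\ref{vchaussler} gives nothing. In short, your writeup is an honest reconnaissance of why the problem is hard, not a proof, and it should not be presented as one. If the intent was to supply a proof for this item, that is a genuine gap: the conjecture is open, and nothing in your outline closes it.
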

We introduce a variant of the bipartite packing number which may lead to a new superlinear lower bound on the Clique-Stable separation. Note that it is the same notion as \emph{ordered biclique covering}, denoted by $\bp_{1.5}$, which has been independently introduced in \cite{Amano}.
The \emph{oriented bipartite packing number $\bpor(G)$} of a non-oriented graph $G$ is the minimum number of oriented complete bipartite graphs such that each edge is covered by an arc in at least one direction (it can be in both directions), but it cannot be covered twice in the same direction (see Fig.~\ref{bpor} for an example).
 A \emph{packing certificate of size $k$} is a set $\{(A_1, B_1), \ldots , (A_k, B_k)\}$ of $k$ oriented bipartite subgraphs of $G$ that fulfill the above conditions restated as follows: for each edge $xy$ of $G$, free to exchange $x$ and $y$, there exists $i$ such that $x\in A_i, y\in B_i$, but there do not exist distinct $i$ and $j$ such that $x\in A_i\cap A_j$ and $y \in B_i\cap B_j$. 
 
 \begin{figure}
 \centering
 \includegraphics[scale=1]{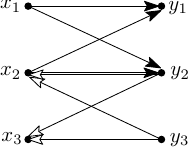}
 \caption{A graph $G$ such that $\bpor(G)=2$ (and $\bp(G)=3$). Two different kinds of arrows show a packing certificate of size 2:$(\{x_1, x_2\}, \{y_1, y_2\})$ and $(\{y_2, y_3\},\{x_2, x_3\})$. The edge $x_2y_2$ is covered once in each direction, while the other edges are covered in exactly one direction. }
 \label{bpor}
 \end{figure}

\begin{conj}{(Oriented Alon-Saks-Seymour Conjecture)}
There exists a polynomial $P$ such that for every $G$, $\chi(G)\leq P(\bpor(G))$.
\end{conj}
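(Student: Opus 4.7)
The plan is to reduce the Oriented Alon-Saks-Seymour Conjecture to the polynomial Alon-Saks-Seymour Conjecture via the $t$-biclique parameter $\bp_t$, exploiting the intermediate equivalence announced in the introduction. Since polynomial ASS is itself open, this reduction will not settle the conjecture unconditionally; rather it will place Oriented ASS inside the same equivalence class as polynomial ASS and the Clique-Stable Set separation, which is what Section~\ref{sec: ASS} is aiming for.

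First I would establish the inequality $\bp_2(G)\leq \bpor(G)$. Given a packing certificate $\{(A_1,B_1),\ldots,(A_k,B_k)\}$ of size $k=\bpor(G)$, I forget the orientations and consider the $k$ underlying complete bipartite subgraphs. By the first defining property of a packing certificate each edge of $G$ is covered at least once, and by the second (no edge is covered twice in the same direction) each edge is covered at most twice. Hence the collection is a $2$-biclique cover of $G$ of size $k$.

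Next I would invoke the intermediate theorem announced in the introduction: for every fixed integer $t$, the chromatic number is polynomially bounded in $\bp$ if and only if it is polynomially bounded in $\bp_t$. Assuming polynomial ASS and applying this equivalence with $t=2$ yields a polynomial $R$ such that $\chi(G)\leq R(\bp_2(G))$, hence $\chi(G)\leq R(\bpor(G))$, which is exactly the conclusion with $P:=R$.

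The main obstacle is of course polynomial ASS itself. Unconditionally, the cleanest bound I see is only exponential: one labels each vertex $v$ by a signature $\sigma(v)\in\{\mathrm{A},\mathrm{B},\mathrm{out}\}^k$ recording, for each $i$, whether $v$ lies in $A_i$, in $B_i$, or in neither, and observes that two vertices with equal signature must be non-adjacent, since any edge between them would have to be covered by some $(A_i,B_i)$ and would therefore place one endpoint in $A_i$ and the other in $B_i$. This gives $\chi(G)\leq 3^{\bpor(G)}$, and pushing the exponent down to a polynomial seems to genuinely require either the full strength of polynomial ASS or, equivalently, a polynomial CS-separator.
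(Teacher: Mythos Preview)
The statement is a \emph{conjecture}, not a theorem; the paper never proves it and neither do you, as you correctly acknowledge. What the paper does instead is establish that the Oriented Alon-Saks-Seymour conjecture is equivalent to the Clique-Stable Set separation conjecture (Theorem~\ref{th equiv CS ASS}) and, separately, that the generalized Alon-Saks-Seymour conjecture of order $t$ is equivalent to order $1$ (Theorem~\ref{ASS t-biclique}), with the inequality $\bp_2(G)\leq\bpor(G)\leq\bp_1(G)$ recorded as Observation~\ref{obs t-biclique}. The final Corollary in the conclusion bundles all of these into a single equivalence class.

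Your reduction is correct and is precisely the route the paper takes through Observation~\ref{obs t-biclique} and Theorem~\ref{ASS t-biclique}: forgetting orientations gives a $2$-biclique cover, and the $\bp_t$--$\bp_1$ equivalence then yields Polynomial ASS $\Rightarrow$ Oriented ASS. You only argue one direction, but the converse is immediate from $\bpor(G)\leq\bp_1(G)$, so the full equivalence with Polynomial ASS follows. Where the paper differs is that it also gives a \emph{direct} proof of the equivalence Oriented ASS $\Leftrightarrow$ CS-separation (Lemmas~\ref{lem ASS CS} and~\ref{lem CS ASS}) via an explicit auxiliary-graph construction, bypassing the $\bp_t$ machinery entirely; this direct route is what makes the oriented packing the ``natural'' translation of CS-separation and motivates introducing $\bpor$ in the first place.

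Your unconditional $3^{\bpor(G)}$ bound via signatures is correct (and is the standard argument behind the analogous bound for $\bp$), though the paper does not record it.
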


First of all, we prove that studying $\bp_{or}(K_m)$ is deeply linked with the existence of a fooling set for $CL-IS$. Recall the definitions of Section \ref{sec: CS-sep}: in the communication matrix $M$ for $CL-IS$, each row corresponds to a clique $K$, each column corresponds to a stable set $S$, and $M_{K,S}=1$ if $K$ and $S$ intersect, 0 otherwise. A \emph{fooling set} $\mathcal{C}$ is a set of pairs $(K,S)$ such that $K$ and $S$ do not intersect, and for all $(K,S), (K', S') \in \mathcal{C}$, $K$ intersects $S'$ or $K'$ intersects $S$ (consequently $M_{K,S'}=1$ or $M_{K',S}=1$). Thus $\mathcal{C}$ is a set of 0-entries of the matrix that pairwise can not be put together into the same combinatorial 0-rectangle. The maximum size of a fooling set consequently is a lower bound on the non-deterministic communication complexity for $CL-IS$, and consequently on the size of a CS-separator.

The proofs of Theorems \ref{th:fooling set} and \ref{th equiv CS ASS} are very similar one to each other and are a very close adaptation of the work by Alon and Haviv \cite{AlonH} (described in \cite{Sudakov10}). They proved statements analogous to Lemmas \ref{lem ASS CS} and \ref{lem CS ASS} where the notion of oriented bipartite packing number  $\bpor$ is replaced with 2-biclique covering number $\bp_2$ (see Subsection \ref{subsec: t-biclique} for a definition) and bipartite packing number $\bp$, respectively. The key point is to see that $\bpor$ is the right criterion. This argument has also independently appeared in \cite{Amano}.

\begin{theorem}\label{th:fooling set}
Let $n,m \in \mathbb{N}^*$. There exists a fooling set $\mathcal{C}$ of size $m$ on some graph on $n$ vertices if and only if $\bpor(K_m)\leq n$.
\end{theorem}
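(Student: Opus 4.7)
The plan is to set up the equivalence by a direct correspondence: vertices of $G$ play the role of oriented bicliques in $K_m$, and the pairs $(K_i,S_i)$ in the fooling set play the role of the $m$ vertices of $K_m$. This is natural once one observes that a vertex $v \in V(G)$ partitions the indices $\{1,\dots,m\}$ according to whether $v$ belongs to $K_i$, to $S_i$, or to neither (it cannot be in both since $K_i\cap S_i=\emptyset$), yielding an oriented bipartite structure on $[m]$.

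For the forward direction, I would start from a fooling set $\mathcal{C}=\{(K_1,S_1),\ldots,(K_m,S_m)\}$ on a graph $G$ with $V(G)=\{v_1,\ldots,v_n\}$, and for each $v_k$ define $A_k=\{i:v_k\in K_i\}$ and $B_k=\{i:v_k\in S_i\}$, oriented from $A_k$ to $B_k$. Coverage of every edge $ij$ of $K_m$ is immediate from the fooling set property: if $v_k\in K_i\cap S_j$, then $i\in A_k$, $j\in B_k$, so $ij$ is covered in the direction $i\to j$. The no-double-cover condition is the heart of the matter: if $ij$ were covered twice in the direction $i\to j$ by $v_k$ and $v_l$, then $v_k,v_l\in K_i\cap S_j$, but $K_i$ is a clique (forcing $v_kv_l\in E$) and $S_j$ is a stable set (forbidding $v_kv_l\in E$), a contradiction. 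Hence $\bpor(K_m)\le n$.

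For the converse, start from a packing certificate $\{(A_1,B_1),\ldots,(A_n,B_n)\}$ of $K_m$, introduce vertices $v_1,\ldots,v_n$, and define $E(G)$ by the rule $v_kv_l\in E$ iff there exists $j\in A_k\cap A_l$. Then set $K_i=\{v_k:i\in A_k\}$ and $S_i=\{v_k:i\in B_k\}$. The clique property of $K_i$ is immediate from the definition of $E$. The stable set property of $S_i$ is where the no-double-cover assumption is used: if $v_k,v_l\in S_i$ and $v_kv_l\in E$, then some $j\in A_k\cap A_l$ witnesses the edge, and then the edge $ij$ of $K_m$ would be covered in the same direction $j\to i$ by both $v_k$ and $v_l$, contradicting the packing condition (and $i\neq j$ since $A_k$ and $B_k$ are disjoint as an oriented biclique). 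The coverage of every edge $ij$ of $K_m$ then translates into some $v_k$ belonging to $K_i\cap S_j$ or to $K_j\cap S_i$, which is precisely the fooling set property; the distinctness of the pairs $(K_i,S_i)$ follows from the same reasoning applied to the edge $ij$ itself.

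The main obstacle I anticipate is the converse: one must define the edges of $G$ in such a way that all $K_i$'s are simultaneously cliques and all $S_i$'s are simultaneously stable sets, and it is not obvious a priori that these two families of requirements are compatible. The key insight is that compatibility is exactly guaranteed by the no-double-cover clause of the oriented bipartite packing, so the packing axiom is being used in precisely the right place. Once the edge definition is chosen, the remaining verifications are direct rewritings of the fooling set and packing certificate definitions in terms of the maps $i\mapsto K_i,S_i$ and $k\mapsto A_k,B_k$, which are inverse to one another.
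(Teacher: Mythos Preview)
Your proposal is correct and follows essentially the same approach as the paper: in both directions you use the identical correspondence (vertices of $G$ $\leftrightarrow$ oriented bicliques, fooling-set pairs $\leftrightarrow$ vertices of $K_m$), define the auxiliary graph on the bicliques by $v_kv_l\in E \iff A_k\cap A_l\neq\emptyset$, and verify the clique/stable set/fooling properties via the same arguments (coverage from the fooling-set axiom, no-double-cover from $|K\cap S|\le 1$, and conversely). The only cosmetic difference is that the paper phrases the forward direction by first building an abstract graph $H$ on the pairs and observing it is complete, whereas you work directly with $K_m$ indexed by $\{1,\dots,m\}$; these are the same construction.
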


\begin{lemma}\label{lemma: fooling set => bp}
Let $n,m \in \mathbb{N}^*$. If there exists a fooling set $\mathcal{C}$ of size $m$ on some graph $G$ on $n$ vertices then $\bpor(K_m)\leq n$.
\end{lemma}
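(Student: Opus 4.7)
Index the vertices of $K_m$ by $\{1,\dots,m\}$ and identify vertex $i$ with the pair $(K_i,S_i)$ of the fooling set $\mathcal{C}$. I will construct a packing certificate of size $n$ for $K_m$, using one oriented complete bipartite graph per vertex of $G$. For each $v \in V(G)$, define
\[
 A_v = \{\, i \in \{1,\dots,m\} : v \in K_i \,\}, \qquad B_v = \{\, i \in \{1,\dots,m\} : v \in S_i \,\},
\]
and take the oriented complete bipartite graph with all arcs going from $A_v$ to $B_v$. The key properties I will verify are that this is indeed an oriented bipartite graph on vertices of $K_m$, that each edge of $K_m$ is covered by an arc in at least one direction, and that no edge is covered twice in the same direction.

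\textbf{Well-definedness.} First I would check that $A_v \cap B_v = \varnothing$: if $i \in A_v \cap B_v$ then $v \in K_i \cap S_i$, contradicting that $(K_i,S_i)$ is a pair from the fooling set (so $K_i \cap S_i = \varnothing$). Thus each $(A_v,B_v)$ is a legitimate oriented bipartite subgraph of $K_m$.

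\textbf{Coverage.} Fix an edge $ij$ of $K_m$ with $i \neq j$. By the fooling set property, $K_i \cap S_j \neq \varnothing$ or $K_j \cap S_i \neq \varnothing$. In the first case, pick $v \in K_i \cap S_j$; then $i \in A_v$ and $j \in B_v$, so the arc $i \to j$ appears in $(A_v,B_v)$. The second case is symmetric. Hence the edge is covered in at least one direction.

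\textbf{No double orientation.} Suppose for contradiction that there exist distinct $v,v' \in V(G)$ such that the arc $i \to j$ appears in both $(A_v,B_v)$ and $(A_{v'},B_{v'})$. Then $v,v' \in K_i$ and $v,v' \in S_j$, so $\{v,v'\} \subseteq K_i \cap S_j$. But a clique and a stable set share at most one vertex, contradicting $v \neq v'$. So no edge of $K_m$ is covered twice in the same orientation, and the $n$ oriented bipartite graphs $(A_v,B_v)$ form a packing certificate, giving $\bpor(K_m) \leq n$.

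The argument is essentially a clean translation exercise: the only thing that requires a moment of thought is lining up the two defining properties of a fooling set with the two defining properties of an oriented bipartite packing, and the main obstacle (which dissolves immediately) is verifying that the ``at most one intersection'' constraint on cliques and stable sets is exactly what prevents double-orientation.
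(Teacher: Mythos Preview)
Your proof is correct and follows essentially the same approach as the paper: both define, for each vertex $v$ of $G$, the oriented bipartite graph $(A_v,B_v)$ with $A_v=\{i:v\in K_i\}$ and $B_v=\{i:v\in S_i\}$, and verify coverage via the fooling-set condition and single-direction via the clique/stable-set intersection bound. The only cosmetic difference is that the paper first builds an auxiliary graph $H$ on the pairs and observes it is complete, whereas you work directly on $K_m$; you also make the disjointness $A_v\cap B_v=\varnothing$ explicit, which the paper leaves implicit.
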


\begin{proof}
Consider all pairs $(K,S)$ of cliques and stable set in the fooling set $\mathcal{C}$, and construct an auxiliary graph $H$ in the same way as in the proof of Lemma \ref{lem ASS CS}: the vertices of $H$ are the $m$ pairs $(K,S)$ of the fooling set and there is an edge between $(K,S)$ and $(K',S')$ if and only if there is a vertex in $S\cap K'$ or in $S' \cap K$. By definition of a fooling set, $H$ is a complete graph. For $x\in V(G)$, let $(A_x, B_x)$ be the oriented bipartite subgraph of $H$ where $A_x$ is the set of pairs $(K,S)$ for which $x\in K$, and $B_x$ is the set of pairs $(K,S)$ for which $x \in S$. This defines a packing certificate of size $n$ on $H$ : first of all, by definition of the edges, $(A_x, B_x)$ is complete. Moreover, every edge is covered by such an oriented bipartite subgraph: if $(K,S)(K',S')\in E(H)$ then there exists $x \in S \cap K'$ or $ x \in S' \cap K$ thus the corresponding arc is in $(A_x,B_x)$. Finally, an arc $(K,S)(K',S')$ can not appear in both $(A_x,B_x)$ and $(A_y, B_y)$ otherwise the stable set $S$ and the clique $K'$ intersect on two vertices $x$ and $y$, which is impossible. Hence $\bpor(H)\leq n$. $H$ being a complete graph on $m$ elements proves the lemma.
\end{proof}

\begin{lemma}\label{lemma: bp=>fooling set}
Let $n,m \in \mathbb{N}^*$. If $\bpor(K_m)\leq n$ then there exists a fooling set of size $m$ on some graph $G$ on $n$ vertices.
\end{lemma}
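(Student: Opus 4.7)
The plan is to reverse-engineer the construction of Lemma~\ref{lemma: fooling set => bp}. Given a packing certificate $\{(A_1, B_1), \ldots, (A_n, B_n)\}$ witnessing $\bpor(K_m) \leq n$ -- where the vertex set of $K_m$ is identified with $[m]$ -- I introduce a graph $G$ with vertex set $\{x_1, \ldots, x_n\}$, one vertex per bipartite subgraph of the certificate. For every $j \in [m]$ I define the candidate pair
\[
K_j := \{x_i : j \in A_i\}, \qquad S_j := \{x_i : j \in B_i\},
\]
with the goal of choosing the edges of $G$ so that each $K_j$ is a clique, each $S_j$ is a stable set, and $\mathcal{C} := \{(K_j, S_j) : j \in [m]\}$ is a fooling set.

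The first non-trivial step is to verify that such an edge set exists, i.e., that no pair $\{x_i, x_{i'}\}$ with $i \neq i'$ is simultaneously forced to be adjacent (because it lies in some common $K_j$) and forced to be non-adjacent (because it lies in some common $S_{j'}$). Such a conflict would mean $j \in A_i \cap A_{i'}$ and $j' \in B_i \cap B_{i'}$ for some indices $j, j'$, necessarily distinct since $A_i \cap B_i = \varnothing$. But then the edge $jj'$ of $K_m$ would be covered from $j$ to $j'$ by both $(A_i, B_i)$ and $(A_{i'}, B_{i'})$, which is exactly the configuration ruled out by the no-double-covering clause of a packing certificate. Once this is observed, I put an edge between $x_i$ and $x_{i'}$ precisely when they both land in some common $K_j$; then every $K_j$ is a clique and every $S_j$ a stable set by construction.

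It remains to check the fooling set conditions. First, $K_j \cap S_j = \varnothing$, for $x_i$ lying in both would force $j \in A_i \cap B_i$, contradicting the fact that $(A_i, B_i)$ is an oriented bipartite graph. Second, for any distinct $j, j' \in [m]$, the edge $jj'$ of $K_m$ is covered by some arc of the certificate: if the arc goes from $j$ to $j'$ in $(A_i, B_i)$ then $x_i \in K_j \cap S_{j'}$, and otherwise $x_i \in K_{j'} \cap S_j$. This is exactly the defining property of a fooling set, and as a byproduct shows that the $m$ pairs $(K_j, S_j)$ are pairwise distinct: the witness $x_i$ belongs to $K_j$ but cannot belong to $K_{j'}$, lest it sit in $K_{j'} \cap S_{j'}$. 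Hence $|\mathcal{C}| = m$, completing the proof.

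The only genuinely delicate step is the consistency check in the second paragraph; the remainder is formal unfolding of the definitions. I expect this step to be the crucial one because it is the unique place where the no-double-covering clause of a packing certificate is invoked -- it is precisely what prevents the simultaneous edge-and-non-edge demand from collapsing the construction.
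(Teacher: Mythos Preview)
Your argument is correct and is essentially the paper's own proof: the graph on $\{x_1,\ldots,x_n\}$ with an edge between $x_i$ and $x_{i'}$ whenever $A_i\cap A_{i'}\neq\varnothing$ is exactly the auxiliary graph $H$ constructed there, and your $(K_j,S_j)$ are the paper's $(K_x,S_x)$. Your ``consistency check'' is the same computation the paper uses to verify that each $S_x$ is a stable set, just phrased in advance of fixing the edge set; you also make explicit the distinctness of the $m$ pairs, which the paper leaves implicit.
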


\begin{proof}
Construct an auxiliary graph $H$: the vertices are the elements of a packing certificate of size $n$, and there is an edge between $(A_1, B_1)$ and $(A_2,B_2)$ if and only if there is a vertex $x\in A_1\cap A_2$. Then for all $x\in V(K_m)$, the set of all bipartite graphs $(A,B)$ with $x\in A$ form a clique called $K_x$, and the set of all bipartite graphs $(A,B)$ with $x\in B$ form a stable set called $S_x$. $S_x$ is indeed a stable set, otherwise there are $(A_1,B_1)$ and $(A_2, B_2)$ in $S_x$ (implying $x\in B_1\cap B_2$) linked by an edge resulting from a vertex $y \in A_1\cap A_2$, then the arc $yx$ is covered twice. 
Consider all pairs $(K_x, S_x)$ for $x \in V(K_m)$: this is a fooling set of size $m$. Indeed, on one hand $K_x \cap S_x=\emptyset$. On the other hand, for all $x, y \in V(K_m)$, the edge $xy$ is covered by a complete bipartite graph $(A,B)$ with $x\in A$ and $y\in B$ (or conversely). Then $K_x$ and $S_y$ (or $K_y$ and $S_x)$ intersects in $(A,B)$.
\end{proof}

\begin{proof}[Proof of Theorem \ref{th:fooling set}]
Lemmas \ref{lemma: fooling set => bp} and \ref{lemma: bp=>fooling set} conclude the proof.
\end{proof}

One can search for an algebraic lower bound for $\bpor(K_m)$. Let $(A_1,B_1), \ldots, (A_k,B_k)$ be a packing certificate of $K_m$. For every $i$ construct the $m \times m$ matrix $M^i$ such that $M^i_{u,v}=1$ if $u \in A_i, v\in B_i$ and 0 otherwise, then $M^i$ has rank 1. Let $M=\sum_{i=1}^{k}M^i$, then by construction $M$ has rank at most $k$, and has the three following particularities: it contains only $0$ and $1$, its diagonal entries are all $0$, and for every distinct $i,j$, $M_{i,j}=1$ or $M_{j,i}=1$ (or both). This is due to the definition of a packing certificate. A natural question arising is to find a lower bound on the minimum rank of a $m\times m$ matrix respecting these three particularities. This will imply a lower bound on $\bpor(K_m)$, and thus an upper bound on the size of a fooling set.

Theorem \ref{th:fooling set} implies that if $\bp_{or}(K_n)=\O(n^{1/k})$, then there exists a fooling set of size $\Omega(n^k)$ on some graphs $G$ on $n$ vertices, thus $\Omega(n^k)$ is a lower bound on the Clique-Stable Set separation. Note that the best upper bound so far is due to Amano\cite{Amano}: $\bpor(K_n)=\O(n^{2/3})$ which implies a $\Omega(n^{3/2})$ lower bound on the Clique-Stable separation. The best lower bound for the size of a fooling set is the following:

\begin{obs}\label{th:fooling set lineaire}
Let $G$ be a graph. Then there exists a fooling set $\F$ on $G$ of size $\vert V(G) \vert +1$.
\end{obs}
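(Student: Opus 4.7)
The plan is to give an explicit fooling set of size $n+1$ by induction on $n=|V(G)|$. The base case $n=0$ is immediate: the single pair $(\emptyset,\emptyset)$ trivially forms a fooling set of size $1$, since there are no cross-conditions to verify. (One may also use $n=1$ with $\{(\{v\},\emptyset),(\emptyset,\{v\})\}$ as a transparent sanity check.)

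For the induction step, given $G$ on $n\geq 1$ vertices, I would pick a vertex $v$, apply the inductive hypothesis to $G-v$ to obtain a fooling set $\mathcal{C}'=\{(K_1,S_1),\ldots,(K_n,S_n)\}$ of size $n$ on $G-v$, and extend it to size $n+1$ on $G$. For each pair $(K_i,S_i)\in\mathcal{C}'$ I would consider replacing it by $(K_i\cup\{v\},S_i)$ when $K_i\subseteq N(v)$, so $K_i\cup\{v\}$ remains a clique; or by $(K_i,S_i\cup\{v\})$ when $S_i\cap N(v)=\emptyset$, so $S_i\cup\{v\}$ remains a stable set; or leave it unchanged. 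Because $v$ enters at most one side of each pair, the intersection witnesses of $\mathcal{C}'$ are preserved, so the modified family is still a fooling set. Then I would add one extra $(n{+}1)$-th pair involving $v$ alone, typically $(\{v\},\emptyset)$ or $(\emptyset,\{v\})$ (or a richer pair if needed), chosen so that it cross-intersects every modified pair: the pair $(\{v\},\emptyset)$ cross-intersects a modified pair exactly when $v$ was added to the stable side, and $(\emptyset,\{v\})$ exactly when $v$ was added to the clique side.

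The main obstacle is handling pairs $(K_i,S_i)$ for which neither extension is valid (i.e.\ $v$ has both a non-neighbor in $K_i$ and a neighbor in $S_i$); such a pair must be left unchanged, and then neither simple extra pair cross-intersects with it. One remedy is to strengthen the inductive statement to guarantee a fooling set $\mathcal{C}'$ with a compatible structure (for instance, a distinguished vertex appearing on a common side across all pairs), or to choose $v$ as a vertex of extremal degree so the bad case is controlled. A cleaner alternative would be a direct construction, generalizing the tight examples for $\overline{K_n}$ and $K_n$: fix a maximal clique $K^{\star}$ of $G$, include the pair $(K^{\star}\setminus\{v\},\{v\})$ for each $v\in K^{\star}$, the pair $(\{u\},S_u)$ for each $u\notin K^{\star}$ (where $S_u$ is a maximal stable set containing $u$ and some non-neighbor of $u$ in $K^{\star}$, which exists by maximality of $K^{\star}$), plus the extra pair $(K^{\star},\emptyset)$. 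Cross-conditions then reduce to choosing the stable sets $S_u$ so that the required intersections with $K^{\star}$ avoid the "isolated" vertex of each Type-1 pair, which is again the delicate combinatorial point.
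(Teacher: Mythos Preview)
Your proposal has a genuine gap that you yourself identify but do not close: when you induct on $G-v$, nothing prevents a pair $(K_i,S_i)$ in the inherited fooling set from having both a non-neighbour of $v$ in $K_i$ and a neighbour of $v$ in $S_i$. Such a pair cannot be extended on either side, and then neither choice of extra pair $(\{v\},\emptyset)$ or $(\emptyset,\{v\})$ cross-intersects it. Your suggested remedies (strengthening the inductive hypothesis, choosing $v$ of extremal degree, or the direct construction via a maximal clique $K^\star$) are all left as sketches with an acknowledged ``delicate combinatorial point'' still unresolved, so as it stands the argument is incomplete.

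The paper's proof sidesteps this obstacle entirely by inducting not on $G-v$ but on the two induced subgraphs $G[N(v)]$ and $G[V\setminus N[v]]$ separately. On $G[N(v)]$ one obtains a fooling set of size $|N(v)|+1$ and adds $v$ to every \emph{clique} (valid since every vertex there is adjacent to $v$); on $G[V\setminus N[v]]$ one obtains a fooling set of size $|V\setminus N[v]|+1$ and adds $v$ to every \emph{stable set} (valid since every vertex there is non-adjacent to $v$). The union has size $(|N(v)|+1)+(|V\setminus N[v]|+1)=|V(G)|+1$. Within each family the fooling property is inherited; across the two families, any pair from the first has $v$ in its clique and any pair from the second has $v$ in its stable set, so they automatically cross-intersect at $v$. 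This two-piece decomposition is exactly the missing idea: it guarantees by construction that $v$ can always be added, which is what your single induction on $G-v$ cannot ensure.
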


\begin{proof}
Let us do the proof by induction on $\vert V(G) \vert$. If $V=\{v\}$, consider the clique $\{v\}$ together with the empty stable set, and the stable set $\{v\}$ together with the empty clique. This is a fooling set of size 2.
If $\vert V \vert =n+1$, let $v\in V$, $n_1=\vert N(v)\vert$, $n_2=\vert N^C [v]\vert$, with $n=n_1+n_2+1$. Then the induction hypothesis gives a fooling set $\F_1$ of size $n_1+1$ on $N(v)$, and a fooling set $\F_2$ of size $n_2+1$ on $N^C [v]$. Extend each clique of $\F_1$ with $v$, which still forms a clique; and extend each stable set of $\F_2$ with $v$, which still forms a stable set. This gives a fooling set $\F$ of size $n_1+1+n_2+1=n+1$. It is indeed a fooling set: if $(K,S), (K', S') \in \F$, either they come both from $\F_1$ or both from $\F_2$, so the property is verified by $\F_1$ and $\F_2$ being fooling sets; either $(K,S)$ initially comes from $\F_1$ and $(K',S')$ from $\F_2$, and then $K\cap S'=\{v\}$.
\end{proof}

In fact the oriented Alon-Saks-Seymour conjecture is equivalent to the Clique-Stable Set separation conjecture.

\begin{theorem} \label{th equiv CS ASS}
The oriented Alon-Saks-Seymour conjecture is verified if and only if the Clique-Stable Set separation conjecture is verified.
\end{theorem}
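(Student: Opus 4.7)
The argument splits naturally into two reductions, each passing through a carefully chosen auxiliary graph that translates one structure into the other.

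For the direction that the Clique--Stable Set separation conjecture implies the oriented Alon--Saks--Seymour conjecture, I would start from a graph $G$ with $\bpor(G)=k$ together with a packing certificate $(A_1,B_1),\ldots,(A_k,B_k)$. The plan is to build an auxiliary graph $H$ on vertex set $\{1,\ldots,k\}$ by declaring $ij\in E(H)$ whenever some $v\in V(G)$ satisfies $v\in A_i\cap A_j$. For every vertex $v$ of $G$, define $K_v:=\{i:v\in A_i\}$ and $S_v:=\{i:v\in B_i\}$. Then $K_v$ is a clique of $H$ by construction, $K_v\cap S_v=\emptyset$ since $A_i\cap B_i=\emptyset$, and $S_v$ is a stable set of $H$: if some $i,j\in S_v$ and some $w\in A_i\cap A_j$ existed, the edge $wv$ of $G$ would be covered twice in the same direction $w\to v$, contradicting the packing certificate. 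Now apply a CS-separator $\mathcal{F}$ of $H$ of size at most $Q(k)$ and color each $v\in V(G)$ by the pair (index of a cut in $\mathcal{F}$ separating $(K_v,S_v)$, side on which $K_v$ lies). This uses at most $2Q(k)$ colors, and it is a proper coloring: for an edge $uv\in E(G)$ with $u\in A_i, v\in B_i$ the witness $i$ lies in $K_u\cap S_v$, so it cannot land on both the $K_u$-side and the $S_v$-side of a common cut. Hence $\chi(G)\leq 2Q(\bpor(G))$.

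For the reverse direction, fix a graph $G$ on $n$ vertices and let $G_M$ be the \emph{conflict graph} whose vertices are the disjoint pairs $(K,S)$ of cliques and stable sets of $G$ and whose edges are those pairs $(K,S)\sim(K',S')$ with $K\cap S'\neq\emptyset$ or $K'\cap S\neq\emptyset$. By the correspondence between $0$-monochromatic rectangles and separating cuts recalled at the start of Section~\ref{sec: CS-sep}, a proper coloring of $G_M$ is the same object as a CS-separator of $G$ of the same cardinality, so $\chi(G_M)$ equals the minimum size of a CS-separator for $G$. It then suffices to show $\bpor(G_M)\leq n$, which I would do by exhibiting the packing certificate $\{(\mathcal{A}_v,\mathcal{B}_v)\}_{v\in V(G)}$ where $\mathcal{A}_v=\{(K,S):v\in K\}$ and $\mathcal{B}_v=\{(K,S):v\in S\}$: these are complete oriented bipartite subgraphs of $G_M$ and they cover every edge; moreover, no arc $(K,S)\to(K',S')$ is covered twice, because two distinct witnesses $v,w$ would both lie in $K\cap S'$, violating the fact that a clique and a stable set meet in at most one vertex. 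Applying the oriented Alon--Saks--Seymour conjecture then yields $\chi(G_M)\leq P(n)$, giving a polynomial CS-separator for $G$.

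The main obstacle lies in the first direction, specifically in verifying that $S_v$ is stable in $H$. This step crucially uses the ``no arc covered twice in the same direction'' clause that distinguishes $\bpor$ from ordinary $\bp$; with $\bp$ in place of $\bpor$, the sets $S_v$ can fail to be independent, and the reduction breaks down, which is what one would expect given the Huang--Sudakov counterexample to the original conjecture. The rest of the proof is a bookkeeping exercise around the identity $\chi(G_M)=$ CS-separator size, which is a standard translation between $0$-rectangle covers and colorings of the conflict graph.
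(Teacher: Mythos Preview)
Your proposal is correct and follows essentially the same route as the paper: both directions use the same auxiliary graphs (the conflict graph $G_M$ on disjoint $(K,S)$ pairs in one direction, and the graph on the $k$ bipartite pieces of a packing certificate in the other), the same clique/stable-set assignments $K_v,S_v$, and the same verification that the ``no arc covered twice'' clause is precisely what makes $S_v$ independent. The only cosmetic difference is that you color by the pair (cut index, side of $K_v$), yielding $2Q(k)$ colors, whereas the paper's asymmetric convention that a cut $(A,B)$ separates $(K,S)$ means $K\subseteq A$ and $S\subseteq B$ lets it color by the cut index alone for $Q(k)$ colors.
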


As already mentioned, the proof is very similar to the one of Theorem \ref{th:fooling set}.

\begin{lemma} \label{lem ASS CS}
 If the oriented Alon-Saks-Seymour conjecture is verified, then the Clique-Stable Set separation conjecture is verified.
\end{lemma}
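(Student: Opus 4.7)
The plan is to mirror the fooling-set construction of Theorem \ref{th:fooling set} but replace the complete graph $K_m$ by an auxiliary graph whose chromatic number lower-bounds the CS-separator size. Given a graph $G$ on $n$ vertices, by Proposition \ref{clique max} it suffices to separate all disjoint pairs $(K,S)$ of maximal clique and maximal stable set. I define an auxiliary graph $H$ whose vertex set is the collection of all such disjoint pairs, and put an edge between $(K,S)$ and $(K',S')$ exactly when $K\cap S'\neq\emptyset$ or $K'\cap S\neq\emptyset$.

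The first key step is to show $\bpor(H)\leq n$. For each $x\in V(G)$, set $A_x=\{(K,S):x\in K\}$ and $B_x=\{(K,S):x\in S\}$. I need to verify three things. First, $(A_x,B_x)$ is a \emph{complete} bipartite subgraph of $H$: if $(K,S)\in A_x$ and $(K',S')\in B_x$ then $x\in K\cap S'$, so the pairs are adjacent in $H$. Second, every edge of $H$ is covered: an edge witnessed by $x\in K\cap S'$ is realised by the arc from $(K,S)$ to $(K',S')$ inside $(A_x,B_x)$. Third, no arc is covered twice in the same direction: if the arc from $(K,S)$ to $(K',S')$ appeared in both $(A_x,B_x)$ and $(A_y,B_y)$ with $x\neq y$, then $\{x,y\}\subseteq K\cap S'$, contradicting that a clique meets a stable set in at most one vertex. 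Hence $\{(A_x,B_x):x\in V(G)\}$ is a packing certificate of size $n$.

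The second key step is to convert a proper coloring of $H$ into a CS-separator. Assuming the Oriented Alon--Saks--Seymour conjecture, there is a polynomial $P$ with $\chi(H)\leq P(\bpor(H))\leq P(n)$. Fix a proper coloring of $H$. For each color class $c$, set $\mathcal{A}_c=\bigcup K$ and $\mathcal{B}_c=\bigcup S$, where the unions range over pairs $(K,S)$ in class $c$. Independence of the class means that for every pair $(K,S),(K',S')$ in $c$ (including the case $(K,S)=(K',S')$) one has $K\cap S'=\emptyset$, so $\mathcal{A}_c\cap\mathcal{B}_c=\emptyset$. Extending $(\mathcal{A}_c,\mathcal{B}_c)$ arbitrarily to a cut of $V(G)$ yields a cut that separates every maximal pair $(K,S)$ in class $c$. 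Ranging over colors gives $P(n)$ cuts separating all disjoint pairs of maximal cliques and stable sets, and then Proposition \ref{clique max} yields a CS-separator of size $P(n)+2n$ for $G$.

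The step I expect to require the most care is the double-cover verification of the packing certificate: it is exactly there that the defining incidence property of cliques versus stable sets is used, in complete analogy with the fooling-set case of Theorem \ref{th:fooling set}. Once that is established, converting a proper coloring of $H$ into disjoint unions $(\mathcal{A}_c,\mathcal{B}_c)$ is essentially formal, and the polynomial bound transfers immediately.
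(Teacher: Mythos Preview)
Your proof is correct and follows essentially the same approach as the paper: construct an auxiliary graph $H$ on disjoint clique--stable pairs, show $\bpor(H)\le n$ via the bicliques $(A_x,B_x)$, and turn a proper coloring of $H$ into a CS-separator. The only difference is that you first restrict to maximal pairs via Proposition~\ref{clique max} and then add $2n$ extra cuts at the end, whereas the paper works directly with all disjoint pairs $(K,S)$ and obtains $P(n)$ cuts without the detour; this restriction is unnecessary but harmless.
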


\begin{proof}
Let $G$ be a graph on $n$ vertices. We want to separate all the pairs of cliques and stable sets which do not intersect. Consider all the pairs $(K,S)$ such that the clique $K$ does not intersect the stable set $S$. Construct an auxiliary graph $H$ as follows. The vertices of $H$ are the pairs $(K,S)$ and there is an edge between a pair $(K,S)$ and a pair $(K',S')$ if and only if there is a vertex $x \in S \cap K'$ or $ x \in S' \cap K$.
For every vertex $x$ of $G$, let $(A_x, B_x)$ be the oriented bipartite subgraph of $H$ where $A_x$ is the set of pairs $(K,S)$ for which $x\in K$, and $B_x$ is the set of pairs $(K,S)$ for which $x \in S$. By definition of the edges, $(A_x, B_x)$ is complete. Moreover, every edge is covered by such an oriented bipartite subgraph: if $(K,S)(K',S')\in E(H)$ then there exists $x \in S \cap K'$ or $ x \in S' \cap K$ thus the corresponding arc is in $(A_x,B_x)$. Finally, an arc $(K,S)(K',S')$ can not appear in both $(A_x,B_x)$ and $(A_y, B_y)$ otherwise the stable set $S$ and the clique $K'$ intersect on two vertices $x$ and $y$, which is impossible.
Hence the oriented bipartite packing number of this graph is at most $n$. \newline
If the oriented Alon-Saks-Seymour conjecture is verified, $\chi(H) \leq P(n)$. Consider a color of this polynomial coloring. Let $A$ be the set of vertices of this color, so $A$ is a stable set. Then the union of all the second components (corresponding to stable sets of $G$) of the vertices of $A$ do not intersect the union of all the first components (corresponding to cliques of $G$) of $A$. Otherwise, there are two vertices $(K,S)$ and $(K',S')$ of $A$ such that $K$ intersects $S'$, thus $(K,S)(K',S')$ is an edge. This is impossible since $A$ is a stable set.

The union of the cliques of $A$ and the union of the stable sets  of $A$ do not intersect, hence it defines a cut which separates all the pairs of $A$. The same can be done for every color. Then we can separate all the pairs $(K,S)$ by $\chi(H)\leq P(n)$ cuts, which achieves the proof. 
\end{proof}

\begin{lemma}\label{lem CS ASS}
 If the Clique-Stable Set separation conjecture is verified, then the oriented Alon-Saks-Seymour conjecture is verified.
\end{lemma}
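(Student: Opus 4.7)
The plan is to mirror the construction used in Lemma \ref{lem ASS CS}, but in reverse: starting from an oriented bipartite packing of a graph $H$, we build an auxiliary graph $G$ on few vertices, translate each vertex of $H$ into a disjoint clique/stable-set pair of $G$, and then read a proper coloring of $H$ directly from a CS-separator of $G$.

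Concretely, suppose $H$ has oriented bipartite packing $k$, witnessed by a packing certificate $(A_1,B_1),\dots,(A_k,B_k)$. Define $G$ on vertex set $\{1,\dots,k\}$ by placing an edge $ij$ whenever there exists some $v\in V(H)$ with $v\in A_i\cap A_j$. For each $v\in V(H)$, let $K_v=\{i : v\in A_i\}$ and $S_v=\{i : v\in B_i\}$. By construction, $K_v$ is a clique of $G$; $K_v\cap S_v=\emptyset$ since $A_i$ and $B_i$ are disjoint sides of a bipartite subgraph. The key observation, which I expect to be the only delicate point, is that $S_v$ is a stable set of $G$: if $i,j\in S_v$ were joined in $G$ via some $u\in A_i\cap A_j$, then the edge $uv$ of $H$ (which exists because $(A_i,B_i)$ is a complete bipartite subgraph and $u\in A_i$, $v\in B_i$) would be covered in the same direction by both $(A_i,B_i)$ and $(A_j,B_j)$, violating the packing condition.

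Now apply the Clique-Stable Set separation conjecture to $G$, which has $k$ vertices, to obtain a CS-separator $\F$ of size $P(k)$ for some polynomial $P$. For each $v\in V(H)$, the disjoint pair $(K_v,S_v)$ is separated by some cut $F_v\in\F$, and we color $v$ with $F_v$. This uses at most $|\F|\leq P(k)$ colors. To see the coloring is proper, take any edge $uv\in E(H)$; by the packing certificate, there is an index $i$ with $u\in A_i$ and $v\in B_i$ (possibly after swapping $u,v$). Then $i\in K_u$ forces $i$ to lie on the ``clique side'' of $F_u$, while $i\in S_v$ forces $i$ to lie on the ``stable-set side'' of $F_v$, so $F_u\neq F_v$.

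Thus $\chi(H)\leq P(\bpor(H))$, which is the oriented Alon-Saks-Seymour conjecture. The main obstacle, as noted, is verifying that $S_v$ is a stable set, which is precisely where the ``no same-direction double cover'' clause of the oriented packing definition is used; everything else is bookkeeping.
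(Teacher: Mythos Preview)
Your proof is correct and follows essentially the same approach as the paper: build an auxiliary graph on the $k$ bipartite pieces of a packing certificate, with edges recording common membership in the $A$-sides; show each vertex $v$ of the original graph yields a disjoint clique/stable-set pair $(K_v,S_v)$ in the auxiliary graph (stability of $S_v$ being exactly where the no-double-cover condition is used); then read a proper coloring off a polynomial CS-separator. Apart from swapped names for $G$ and $H$, this matches the paper's argument step for step.
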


\begin{proof}
Let $G=(V,E)$ be a graph with $\bpor(G)=k$.
Construct an auxiliary graph $H$ as follows. The vertices are the elements of a packing certificate of size $k$. There is an edge between two elements $(A_1,B_1)$ and $(A_2,B_2)$ if and only if there is a vertex $x \in A_1 \cap A_2$. Hence the set of all $(A_i, B_i)$ such that $x\in A_i$ is a clique of $H$ (say the clique $K_x$ \emph{associated to $x$}). The set of all $(A_i, B_i)$ such that $y\in B_i$ is a stable set in $H$ (say the stable set $S_y$ \emph{associated to $y$}). Indeed, if $y \in B_1 \cap B_2$ and there is an edge resulting from $x \in A_1 \cap A_2$, then the arc $xy$ is covered twice which is impossible. Note that a clique or a stable set associated to a vertex can be empty, but this does not trigger any problem. Since the Clique-Stable set separation conjecture is satisfied, there are $P(k)$ (with $P$ a polynomial) cuts which separate all the pairs $(K,S)$, in particular which separate all the pairs $(K_x, S_x)$ for $x\in V$. 

Associate to each cut a color, and let us now color the vertices of $G$ with them. We color each vertex $x$ by the color of the cut separating $(K_x,S_x)$. Let us finally prove that this coloring is proper. Assume there is an edge $xy$ such that $x$ and $y$ are given the same color.  Then there exists a bipartite graph $(A,B)$ that covers the edge $xy$, hence $(A,B)$ is in both $K_x$ and $S_y$. Since $x$ and $y$ are given the same color, then the corresponding cut separates both $K_x$ from $S_x$ and $K_y$ from $S_y$. This is impossible because $K_x$ and $S_y$ intersects in $(A,B)$. Then we have a coloring with at most $P(k)$ colors.
\end{proof}

\begin{proof}[Proof of Theorem \ref{th equiv CS ASS}] This is straightforward using Lemmas \ref{lem ASS CS} and \ref{lem CS ASS}.
\end{proof}

\subsection{Generalization: $t$-biclique covering numbers} \label{subsec: t-biclique}

We introduce here a natural generalization of the Alon-Saks-Seymour conjecture, studied by Huang and Sudakov in \cite{Sudakov10}. While the Alon-Saks-Seymour conjecture deals with partitioning the edges, we relax here to a covering of the edges by complete bipartite graphs, meaning that an edge can be covered several times. Formally, a \emph{$t$-biclique covering} of an undirected graph $G$ is a collection of complete bipartite graphs that covers every edge of $G$ at least once and at most $t$ times. The
minimum size of such a covering is called the \emph{$t$-biclique covering number}, and is denoted by $\bp_t(G)$. In particular, $\bp_1(G)$ is the usual bipartite packing number $\bp(G)$.

In addition to being an interesting parameter to study in its own right, the $t$-biclique covering number of complete graphs is also closely related to a question in combinatorial geometry about neighborly families of boxes. It was studied by Zaks \cite{Zaks79} and then by Alon \cite{Alon97}, who proved that $\mathbb{R}^d$ has a $t$-neighborly family of $k$ standard boxes if and only if the complete graph $K_k$ has a $t$-biclique covering of size $d$ (see \cite{Sudakov10} for definitions and further details). Alon also gives asymptotic bounds for $\bp_t(K_k)$, then slighty improved by Huang and Sudakov \cite{Sudakov10} (see the work by Cioab\u{a} and Tait for further investigation \cite{CioabaTait12}):

$$(1 + \o(1))(t!/2^{t-1})^{1/t} k^{1/t} \leq \bp_t(K_k) \leq (1 + \o(1))t k^{1/t} \enspace.$$

Our results are concerned not only with $K_k$ but for every graph $G$. It is natural to ask the same question for $\bp_t(G)$ as for $\bp(G)$, namely:

\begin{conj}[Generalized Alon-Saks-Seymour conjecture of order $t$]
 There exists a polynomial $P_t$ such that for all graphs $G$, $\chi(G) \leq P_t(\bp_t(G))$. 
\end{conj}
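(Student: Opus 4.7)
The plan is to prove that this generalized conjecture is equivalent, for every fixed $t$, to the polynomial Alon--Saks--Seymour conjecture (and therefore, via Theorem~\ref{th equiv CS ASS}, to the Clique--Stable Set separation conjecture). One direction is immediate: any biclique partition of $E(G)$ is in particular a $t$-biclique covering, so $\bp_t(G) \leq \bp(G)$, and a polynomial bound $\chi(G) \leq P_t(\bp_t(G))$ instantly yields $\chi(G) \leq P_t(\bp(G))$.

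For the converse, assume polynomial ASS, equivalently the CS-separation conjecture. Given a $t$-biclique covering $B_1, \ldots, B_k$ of $G$ with $B_i = (X_i, Y_i)$ arbitrarily oriented, I would adapt the construction from the proof of Lemma~\ref{lem CS ASS}. Build an auxiliary graph $H$ whose vertices are the bicliques $B_1, \ldots, B_k$ and where $B_i B_j$ is an edge iff $X_i \cap X_j \neq \emptyset$. For each $x \in V(G)$, the set $K_x := \{B_i : x \in X_i\}$ is then a clique of $H$; the aim is to apply the CS-separation conjecture to $H$ (which has only $k$ vertices) and translate each separating cut of $H$ into a color class for $V(G)$, exactly as in the $t=1$ case.

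The main obstacle, compared to the $t=1$ setting, is that the companion set $S_y := \{B_i : y \in Y_i\}$ is only a genuine stable set of $H$ when the covering is a partition. For $t \geq 2$, two bicliques $B_i, B_j \in S_y$ may share a left vertex $x \in X_i \cap X_j$ with $xy \in B_i \cap B_j$, creating an unwanted edge within $S_y$ and breaking the rectangle structure needed to invoke the CS-separator. To handle this I would blow $H$ up into a graph $H'$ of size $O(tk)$ by introducing $t$ labelled copies of each biclique, and redistribute the multiple coverings of each edge of $G$ across the labelled copies so that, in $H'$, no two vertices of the candidate $S_y$ are linked by an $H'$-edge. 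The delicate step is choosing the redistribution so that the copies remain complete bipartite subgraphs of $G$; this can be done by refining each $B_i$ according to the signed signatures $(S^+, S^-)$ of its edges (where $S^\pm$ records the other bicliques covering $xy$ in each orientation), since within a signature class any two endpoints are indistinguishable from the covering's viewpoint. Once a CS-separator of size polynomial in $tk$ is obtained for $H'$, the corresponding cuts yield a proper coloring of $G$ with a number of colors polynomial in $k$ (for fixed $t$), providing the sought polynomial $P_t$.
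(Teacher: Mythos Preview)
Your easy direction is fine and matches the paper. The hard direction, however, has a real gap at precisely the point you flag as ``delicate''.

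You claim that refining each biclique $B_i$ by the signed signature $(S^+,S^-)$ of its edges yields complete bipartite pieces, because ``within a signature class any two endpoints are indistinguishable from the covering's viewpoint''. This is false in general. If $xy$ and $x'y'$ share a signature $\mathcal S$, then for every $(B_\ell,\varepsilon_\ell)\in\mathcal S$ the vertices $x,x'$ lie on one side of $B_\ell$ and $y,y'$ on the other, so the cross edge $xy'$ is indeed covered by every biclique in $\mathcal S$. But nothing prevents $xy'$ from lying in a \emph{further} biclique $B_m\notin\mathcal S$ (with $x$ on one side of $B_m$ and $y'$ on the other, while $y$ simply avoids $B_m$). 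Then the signature of $xy'$ strictly contains $\mathcal S$, so $xy'$ falls outside the class and the class is not complete bipartite. The only situation in which this cannot happen is when every edge under consideration is covered \emph{exactly} $t$ times: then $|\mathcal S|=t$ forces the signature of $xy'$ to equal $\mathcal S$. This also kills your size estimate: signature-refinement produces up to $(2k)^t$ pieces, not $t$ copies per biclique, so the auxiliary graph is not of size $O(tk)$.

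This is exactly why the paper's proof of Theorem~\ref{ASS t-biclique} proceeds differently. It does not pass through the CS-separation translation at all; it works by induction on $t$, staying entirely on the $\bp$ side. Given a $t$-covering $\mathcal B=(B_1,\dots,B_k)$, it isolates the subgraph $H$ on the edges covered \emph{exactly} $t$ times, shows (Claim~\ref{t-edges}, via the signature labelling, which is valid precisely on $E_t$) that $\bp_1(H)\le(2k)^t$, applies order-$1$ ASS to properly color $H$, and observes that on each resulting color class the restriction of $\mathcal B$ is a $(t-1)$-covering, so the induction hypothesis finishes. Your signature idea is the right ingredient, but it only buys you a bound on $\bp_1$ of the exactly-$t$-covered subgraph, not a one-shot reduction to an $O(tk)$-vertex CS-separation instance.
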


A $t$-biclique covering is a fortiori a $t'$-biclique covering for all $t' \geq t$. Moreover, a packing certificate of size $\bpor(G)$, which covers each edge at most once in each direction can be seen as a non-oriented biclique covering which covers each edge at most twice. Hence, we have the following inequalities:

\begin{obs}\label{obs t-biclique}
For every graph $G$:
$$ \ldots \leq \bp_{t+1}(G) \leq \bp_t(G) \leq \bp_{t-1}(G) \leq \ldots \bp_2(G) \leq \bpor(G) \leq\bp_1(G) \enspace.$$

\end{obs}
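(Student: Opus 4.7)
The proof is essentially a chain of three routine observations, so the plan is simply to verify each inequality in turn, moving from right to left.

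First I would establish $\bpor(G) \leq \bp_1(G)$. The plan is to start from an optimal biclique partition of size $\bp_1(G) = \bp(G)$, orient each complete bipartite graph arbitrarily (say from one side to the other), and argue this produces a valid packing certificate: since the original family was a partition, each edge lies in exactly one biclique, hence is covered by an arc in exactly one direction (satisfying both the ``at least one direction'' and ``at most once per direction'' requirements).

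Next I would show $\bp_2(G) \leq \bpor(G)$. Starting from a packing certificate of minimum size, I would simply forget the orientations. By definition of a packing certificate, every edge is covered in at least one direction, so it is covered at least once by the underlying unoriented bicliques; and since no edge can be covered twice in the same direction, it is covered at most twice overall (once from each direction). This is exactly a $2$-biclique covering of the same size.

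Finally, for the infinite descending chain $\bp_{t+1}(G) \leq \bp_t(G)$, the plan is immediate: any $t$-biclique covering is automatically a $(t+1)$-biclique covering, because the condition ``each edge is covered at least once and at most $t$ times'' is strictly stronger than ``at least once and at most $t+1$ times.'' Thus the minimum can only decrease (or stay the same) as $t$ grows.

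There is no real obstacle here since every step is a direct unwrapping of the definitions; the only point that requires a moment of care is checking, in the step $\bp_2(G) \leq \bpor(G)$, that ``at most once in each of two directions'' genuinely forbids the unoriented edge from being covered more than twice, which is immediate once one recalls that an oriented bipartite graph in the definition of $\bpor$ has all arcs going from $A_i$ to $B_i$.
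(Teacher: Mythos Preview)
Your proposal is correct and matches the paper's own argument essentially step for step: the paper justifies the chain by noting that a $t$-biclique covering is a fortiori a $t'$-biclique covering for all $t'\geq t$, and that a packing certificate (covering each edge at most once in each direction) can be viewed as an unoriented $2$-biclique covering; the inequality $\bpor(G)\leq \bp_1(G)$ is handled exactly as you describe. There is nothing to add.
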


Observation \ref{obs t-biclique} and bounds on $\bp_2(K_n)$ \cite{Alon97} give $\bp_{or}(K_n) \geq \bp_2(K_n)\geq \Omega(\sqrt{n})$. Then Theorem \ref{th:fooling set} ensures that the maximal size of a fooling set on a graph on $n$ vertices is $\O(n^2)$.

 \begin{theorem} \label{ASS t-biclique}
Let $t \in \mathbb{N^*}$. The generalized Alon-Saks-Seymour conjecture of order $t$ holds if and only if it holds for order 1.
\end{theorem}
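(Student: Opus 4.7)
The plan is to split the equivalence into the two directions, which have very different difficulties.

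\medskip
\noindent\textbf{Easy direction (order $t$ $\Rightarrow$ order $1$).} This follows immediately from Observation~\ref{obs t-biclique}: since $\bp_t(G)\le\bp_1(G)$ for every graph $G$, a monotone polynomial bound $\chi(G)\le P_t(\bp_t(G))$ yields $\chi(G)\le P_t(\bp_1(G))$, proving the order-$1$ conjecture with the same polynomial.

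\medskip
\noindent\textbf{Hard direction (order $1$ $\Rightarrow$ order $t$).} My plan is to reduce this to the purely combinatorial claim that
\[
\bp_1(G)\le C_t\cdot \bp_t(G)^t
\]
for some constant $C_t$ depending only on $t$. Once this is established, the order-$1$ hypothesis yields $\chi(G)\le P_1(\bp_1(G))\le P_1\!\left(C_t\,\bp_t(G)^t\right)$, which is polynomial in $\bp_t(G)$ for fixed $t$.

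\medskip
\noindent\textbf{Proving the quantitative bound.} I would proceed by induction on $t$. Let $\{B_i=(A_i,C_i)\}_{i=1}^{k}$ be a $t$-biclique covering of $G$ with $k=\bp_t(G)$, and let $I(e)=\{i:e\in B_i\}$ for each edge $e$. Partition $E(G)=E_t\sqcup E_{<t}$ where $E_t=\{e:|I(e)|=t\}$. The top layer $E_t$ is handled as follows: for every $t$-subset $J\subseteq[k]$ and every pattern $\tau\in\{A,C\}^J$, set $X_J^\tau=\bigcap_{i\in J}T_i^{\tau_i}$ with $T_i^A=A_i$ and $T_i^C=C_i$. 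For a complementary pair $(\tau,\bar\tau)$, any edge in $X_J^\tau\times X_J^{\bar\tau}$ straddles every $B_i$, $i\in J$, hence $I(e)\supseteq J$, and the assumption $|I(e)|\le t=|J|$ forces $I(e)=J$. Thus $X_J^\tau\times X_J^{\bar\tau}$ is a complete bipartite subgraph of $G$ consisting solely of edges with $I=J$. Taking one representative per complementary pair and ranging over all $t$-subsets partitions $E_t$ into at most $2^{t-1}\binom{k}{t}$ bicliques. For $E_{<t}$, the residual graph carries a $(t-1)$-covering, and the induction hypothesis should give $\bp_1(V,E_{<t})\le C_{t-1}k^{t-1}$. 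Summing yields $\bp_1(G)\le 2^{t-1}\binom{k}{t}+C_{t-1}k^{t-1}\le C_tk^t$ with $C_t\le 2^t$.

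\medskip
\noindent\textbf{Main obstacle.} The delicate point is the residual analysis: after deleting the $E_t$-edges, the original bicliques $B_i$ may no longer be complete bipartite subgraphs of $(V,E_{<t})$, so they do not directly supply a $(t-1)$-biclique covering of the right kind to feed into the induction. My plan to handle this is to refine each $B_i$ by partitioning $A_i$ and $C_i$ according to the signatures of their vertices on the other bicliques; each refined sub-biclique has a constant value of $|I|$ on its edges, so those with $|I|<t$ are genuine bicliques of $(V,E_{<t})$. The key technical step is to argue that a polynomial-sized family of such sub-bicliques suffices, which closes the induction and therefore the proof.
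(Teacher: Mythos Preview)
Your easy direction and your top-layer decomposition of $E_t$ match the paper; its Claim~\ref{t-edges} is exactly your labelling argument that the edges covered exactly $t$ times partition into at most $(2k)^t$ bicliques.

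The gap is in the residual step. You correctly flag that $(V,E_{<t})$ need not admit a $(t-1)$-covering of size $k$, but your proposed fix---refining $A_i$ and $C_i$ by full membership signatures on the other $k-1$ bicliques---produces up to $3^{k-1}$ classes per side, hence exponentially many sub-bicliques, not polynomially many. Even if some polynomial refinement of size $k'$ existed, feeding $k'$ into the induction yields $C_{t-1}(k')^{t-1}$, which does not close to $C_tk^t$ unless $k'=O(k)$. So the induction as written does not go through, and the stronger combinatorial statement $\bp_1(G)\le C_t\,\bp_t(G)^t$ is left unproved.

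The paper sidesteps this entirely by \emph{not} bounding $\bp_1(G)$ directly. Instead it applies the order-$1$ hypothesis to the top-layer graph $H=(V,E_t)$: from $\bp_1(H)\le(2k)^t$ one gets $\chi(H)\le P((2k)^t)$. Each colour class $S$ of $H$ is a vertex set carrying no $E_t$-edge, so restricting the original bicliques to $S$---which preserves completeness, because this is a restriction of \emph{vertices}, not a deletion of edges---gives a genuine $(t-1)$-covering of $G[S]$ of size at most $k$. Now the induction hypothesis applies cleanly to each $G[S]$, and $\chi(G)\le\chi(H)\cdot\max_S\chi(G[S])$ is polynomial in $k$. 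The key move you are missing is this passage to vertex subsets (colour classes) rather than to the edge subset $E_{<t}$: bicliques survive vertex restriction but not edge deletion, and that is precisely what makes the induction run.
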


\begin{proof} Assume the generalized Alon-Saks-Seymour conjecture of order $t$ holds.  Then $\chi(G)$ is bounded by a polynomial in $\bp_t(G)$ and thus, according to Observation \ref{obs t-biclique}, by a polynomial in $\bp_1(G)$. Hence the generalized Alon-Saks-Seymour of order 1 holds. 

Now we focus on the other direction, and assume that the generalized Alon-Saks-Seymour conjecture of order 1 holds. Let us prove the result by induction on $t$, initialization for $t=1$ being obvious. Let $G=(V,E)$ be a graph and let $\mathcal{B}=(B_1,...,B_k)$ be a $t$-biclique covering. Then $E$ can be partitioned into $E_t$ the set of edges that are covered exactly $t$ times in $\mathcal{B}$, and $E_{<t}$ the set of edges that are covered at most $t-1$ times in $\mathcal{B}$. Construct an auxiliary graph $H$ with the same vertex set $V$ as $G$ and with edge set $E_t$.

\begin{claim} \label{t-edges}
$\bp_1(H) \leq (2k)^t$.
\end{claim}

Since the Alon-Saks-Seymour of order 1 holds, then there exists a polynomial $P$ such that $\chi(H) \leq P((2k)^t)$. Consequently $V$ can be partitioned into $(S_1, \ldots, S_{P((2k)^t)})$ where $S_i$ is a stable set in $H$. In particular, the induced graph $G[S_i]$ contains no edge of $E_t$. Consequently $(B_1\cap S_i, \ldots, B_k \cap S_i)$ is a $(t-1)$ biclique covering of $G[S_i]$, where $B_j\cap S_i$ is the bipartite graph $B_j$ restricted to the vertices of $S_i$. Thus $\bp_{t-1}(G[S_i]) \leq k$. By induction hypothesis, the generalized Alon-Saks-Seymour of order $(t-1)$ holds, so there exists a polynomial $P_{t-1}$ such that $\chi(G[S_i]) \leq P_{t-1}(k)$. Let us now color the vertices of $G$ with at most $P((2k)^t)\cdot P_{t-1}(k)$ colors, which is a polynomial in $k$. Each vertex $v\in S_i$ is given color $(\alpha, \beta)$, where $\alpha$ is the color of $S_i$ in $H$ and $\beta$ is the color of $x$ in $G[S_i]$. This is a proper coloring of $G$, thus the generalized Alon-Saks-Seymour conjecture of order $t$ holds.
\end{proof}

\begin{proof}[Proof of Claim \ref{t-edges}]
For each $B_i$, let $(B_i^-, B_i^+)$ be its partition into a complete bipartite graph.
We number $x_1, \ldots, x_n$ the vertices of $H$.
Let $x_ix_j$ be an edge, with $i<j$, then $x_ix_j$ is covered by exactly $t$ bipartite graphs $B_{i_1}, \ldots, B_{i_t}$. 
We give to this edge the label $((B_{i_1}, \ldots, B_{i_t}), (\varepsilon_{1}, \ldots, \varepsilon_{t}))$, where $\varepsilon_l=-1$ if $x_i \in B_{i_l}^-$ (then $x_j\in B_{i_l}^+$) and $\varepsilon_l=+1$ otherwise (then $x_i \in B_{i_l}^+$ and $x_j\in B_{i_l}^-$).
For each such label $\L$ appearing in $H$, call $E_\L$ the set of edges labeled by $\L$ and define a set of edges $B_\L=E(B_{i_1}) \cap E_\L$. Observe that $B_\L$ forms a bipartite graph.
The goal is to prove that the set of every $B_\L$ is a 1-biclique covering of $H$. Since there can be at most $(2k)^t$ different labels, this will conclude the proof.

Let us first observe that each edge appears in exactly one $B_\L$ because each edge has exactly one label. Let $\L$ be a label, and let us prove that $B_\L$ is a complete bipartite graph. If $x_ix_{i'}\in B_\L$ and $x_jx_{j'}\in B_\L$, with $i<i'$ and $j<j'$ then these two edges have the same label $\L=((B_{i_1}, \ldots, B_{i_t}), (\varepsilon_{1}, \ldots, \varepsilon_{t}))$. If $\varepsilon_l=-1$ (the other case in handle symmetrically), then $x_i$ and $x_j$ are in $B_{i_l}^-$ and $x_{i'}$ and $x_{j'}$ are in $B_{i_l}^+$. As $B_{i_l}$ is a complete bipartite graph, then the edges $x_ix_{j'}$ and $x_jx_{i'}$ appear in  $E(B_{i_l})$. Thus these two edges have also the label $\L$, so they are in $B_\L$: as conclusion, $B_\L$ is a complete bipartite graph.
\end{proof}

\section{3-CCP and the stubborn problem}

\label{sec: 3-CCP and the stubborn pb}

We focus now on Constraint Satisfaction Problems, in particular 3-CCP and the stubborn problem. We observe that the historical tool used by Feder et al. \cite{FederHKM03} to tackle the stubborn problem, essentially later defined 2-list covering, is strongly connected to the Clique-Stable separation. A similar concept for 3-CCP exists and still have this strong connectivity. Consequently, this bridge between both areas can give a hope for results on the Clique-Stable separation by working on Constraint Satisfaction Problems.

The following definitions are illustrated on Fig.~\ref{fig def} and deal with list coloring. Let $G$ be a graph and $\Col$ a set of $k$ colors. A set of possible colors, called \emph{constraint}, is associated to each vertex. If the set of possible colors is $\Col$ then the constraint on this vertex is \emph{trivial}. A vertex has an \emph{$l$-constraint} if its set of possible colors has size at most $l$. An \emph{$l$-list assignment} is a function $\L: V \to \P(\Col)$ that gives each vertex an $l$-constraint. A solution $\S$ is a coloring of the vertices $\S: V \to \Col$ that respects some requirements depending on the problem. We can equivalently consider $\S$ as a partition $(A_1, \ldots, A_k)$ of the vertices of the graph with $x \in A_i$ if and only if $\S(x)=A_i$ (by abuse of notation $A_i$ denotes both the color and the set of vertices having this color). An $l$-list assignment $\L$ is \emph{compatible} with a solution $\S$ if for each vertex $x$, $\S(x)\in \L(x)$. A set of $l$-list assignment
 \emph{covers} a solution $\S$ if at least one of the $l$-list assignment
is compatible with $\S$.

\begin{figure}
\centering
\subfigure[An instance of 3-CCP]{ \hspace{0.5cm}\label{instance1}
 \includegraphics{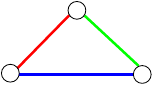}\hspace{0.5cm}}
\subfigure[A solution to the instance (vertex coloring) together with a compatible 2-list assignment: each vertex has a 2-constraint. ]{ \label{instance2}
 \hspace{1cm}\includegraphics{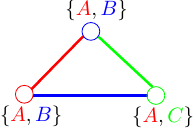}\hspace{1cm}}
\hspace{0.5cm}
\subfigure[Another solution to the instance with a compatible 2-list assignment.]{ \label{instance3}
\hspace{1cm} \includegraphics{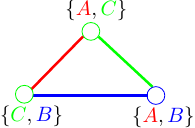}\hspace{1cm}}

\caption{Illustration of definitions. Color correspondence: \textcolor{red}{A=red} ; \textcolor{blue}{B=blue} ; \textcolor{green}{C=green}. Both 2-list assignments together form a 2-list covering because any solution is compatible with at least one of them.}
\label{fig def}
\end{figure}

We recall the definitions of 3-CCP and the stubborn problem:

\textsc{$3$-Compatible Coloring Problem
  ($3$-CCP)}
\\ \textbf{Input:} An edge coloring $f_E$ of $K_n$ with $3$ colors $\{A,B,C\}$.
\\ \textbf{Question:} Is there a coloring of the vertices with $\{A,B,C\}$, such that no edge has the same color as both its endpoints?

\medskip

\textsc{Stubborn Problem} 
\\ \textbf{Input:} A graph $G=(V,E)$ together with a list assignments $\L: V \to \mathcal{P}(\{A_1,A_2,A_3,A_4\})$.
\\ \textbf{Question:} Can $V$ be partitioned into four sets $A_1,\ldots ,A_4$ such that $A_4$ is a clique, both $A_1$ and $A_2$ are stable sets, $A_1$ and $A_3$ are completely non-adjacent, and the partition is compatible with $\L$?

%

Given an edge-coloring $f_E$ on $K_n$, a set of 2-list assignment is a \emph{2-list covering for 3-CCP on $(K_n, f_E)$} if it covers all the solutions of 3-CCP on this instance. Moreover, 3-CCP is said to have a \emph{polynomial} 2-list covering if there exists a polynomial $P$ such that for every $n$ and for every edge-coloring $f_E$, there is a 2-list covering on $(K_n, f_E)$ whose cardinality is at most $P(n)$.

\begin{obs}\label{refor}
Given a $2$-list assignment for 3-CCP, it is possible to decide in polynomial time if there exists a solution covered by it.
\end{obs}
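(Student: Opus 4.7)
The plan is to reduce the problem to 2-SAT, which is well known to be solvable in polynomial time. Let $(K_n,f_E)$ be the input instance of 3-CCP, and let $\L : V \to \P(\{A,B,C\})$ be the 2-list assignment. First, if some vertex $v$ has $\L(v) = \emptyset$, reject immediately; and if $\L(v) = \{\alpha\}$ is a singleton, we may simply propagate this forced assignment. So we assume every vertex $v$ has $|\L(v)|=2$, say $\L(v) = \{\alpha_v,\beta_v\}$.

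I would then introduce a Boolean variable $x_v$ for each vertex, with the convention that $x_v = \mathrm{true}$ encodes $\S(v)=\alpha_v$ and $x_v = \mathrm{false}$ encodes $\S(v)=\beta_v$. Thus any assignment of the variables gives exactly one vertex-coloring compatible with $\L$, and conversely. The 3-CCP constraint says that for every edge $uv$ of $K_n$ with $f_E(uv)=\gamma$, it is forbidden to have $\S(u)=\S(v)=\gamma$. For each edge $uv$ with $f_E(uv)=\gamma$, this translates to a 2-SAT clause as follows: if $\gamma \notin \L(u)$ or $\gamma \notin \L(v)$ the constraint is automatically satisfied and nothing is added; otherwise pick the literal $\ell_u$ (respectively $\ell_v$) that encodes $\S(u)\neq\gamma$ (resp.\ $\S(v)\neq\gamma$) according to the convention, and add the clause $(\ell_u \vee \ell_v)$.

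The resulting formula $\Phi$ has at most $\binom{n}{2}$ clauses over $n$ variables, and by construction its satisfying assignments are in bijection with the 3-CCP solutions compatible with $\L$. Running any standard linear-time 2-SAT algorithm on $\Phi$ therefore decides the question in polynomial time. There is no real obstacle here: the statement is essentially just an observation that an at-most-2 list constraint turns each "no monochromatic edge" requirement into a binary clause, which is precisely why 2-list coverings are the right tool to push a 3-CCP instance into a tractable fragment.
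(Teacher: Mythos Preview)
Your proof is correct and follows essentially the same approach as the paper: reduce to 2-SAT by encoding the two allowed colors per vertex as a Boolean variable and translating each edge constraint into a binary clause. The only cosmetic difference is that the paper uses two variables $x_\alpha,x_\beta$ per vertex (with clauses $x_\alpha\vee x_\beta$ and $\neg x_\alpha\vee\neg x_\beta$ enforcing exactly one), whereas you use a single variable whose truth value directly selects one of the two colors; your encoding is slightly leaner but the idea is identical.
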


\begin{proof}
 Any 2-list assignment can be translated into an instance of 2-SAT. Each vertex has a 2-constraint $\{\alpha, \beta\}$ from which we construct two variables $x_\alpha$ and $x_\beta$ and a clause $x_\alpha \vee x_\beta$. Turn $x_\alpha$ to true will mean that $x$ is given the color $\alpha$. Then we need also the clause $\neg x_\alpha \vee \neg x_\beta$ saying that only one color can be given to $x$. Finally for all edge $xy$ colored with $\alpha$, we add the clause $\neg x_\alpha \vee \neg y_\alpha$ if both variables exists, and no clause otherwise.
\end{proof}

Therefore, given a polynomial 2-list covering, it is possible to decide in polynomial time if the instance of 3-CCP has a solution. Observe nevertheless that the existence of a polynomial 2-list covering does not imply the existence of a polynomial algorithm. Indeed, such a 2-list covering may not be computable in polynomial time.

\begin{theorem}\label{th:quasi p}
\cite{FederH06}
There exists an algorithm giving a 2-list covering of size $O(n^{\log n})$ for 3-CCP. By Observation \ref{refor}, this gives an algorithm in time $O(n^{\log n})$ which solves $3$-CCP.
\end{theorem}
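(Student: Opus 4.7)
The plan is a divide-and-conquer construction driven by the majority-color pigeonhole. In an edge-3-coloring of $K_n$, any vertex $v$ has some majority color $\alpha$ whose $\alpha$-edge-neighborhood $N_\alpha(v)$ has size at least $\lceil (n-1)/3 \rceil$. I will build the 2-list covering recursively by shrinking a set of \emph{free} vertices (those still carrying a 3-constraint) while tracking a partial assignment in which every non-free vertex already has a 2-constraint. The claim I will prove by induction is: for every instance together with a partial assignment having $m$ free vertices, one can extend it into at most $g(m)$ full 2-list assignments that together cover every solution of 3-CCP compatible with the partial assignment.

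For the recursive step, pick any free vertex $v$ and compute its majority color $\alpha$ among the remaining free vertices, so that $k := |N_\alpha(v)\cap \text{free}| \geq \lceil(m-1)/3\rceil$. Every solution $S$ compatible with the current partial assignment falls into one of two branches. In Branch~A, where $S(v)=\alpha$, the 3-CCP constraint forces $S(u)\neq\alpha$ for every free $u\in N_\alpha(v)$; so I extend the partial assignment by setting $L(v)=\{\alpha,\beta\}$ for an arbitrary $\beta$ and $L(u)=\{A,B,C\}\setminus\{\alpha\}$ for each such $u$, which turns $v$ and all of $N_\alpha(v)\cap \text{free}$ into non-free vertices. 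The reduced instance has at most $\lfloor 2(m-1)/3\rfloor$ free vertices, so by induction I need at most $g(\lfloor 2(m-1)/3\rfloor)$ extensions. In Branch~B, where $S(v)\neq\alpha$, I set $L(v)=\{A,B,C\}\setminus\{\alpha\}$ and recurse on the resulting instance with $m-1$ free vertices, yielding $g(m-1)$ extensions. Concatenating the two families gives the recurrence $g(m)\leq g(\lfloor 2(m-1)/3\rfloor)+g(m-1)$ with $g(0)=1$.

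It remains to solve the recurrence. Unrolling the second term repeatedly yields
$$g(m)\leq \sum_{j=0}^{m} g(\lfloor 2j/3\rfloor) \leq (m+1)\,g(\lfloor 2m/3\rfloor),$$
and then iterating the $(2/3)$-contraction gives a product of $O(\log_{3/2} m)=O(\log m)$ factors each at most $m+1$, hence $g(m)\leq m^{O(\log m)}$. Applied to the trivial initial partial assignment on $n$ free vertices, this produces a 2-list covering of size $n^{O(\log n)}$. Combined with Observation~\ref{refor}, each covering assignment is tested in polynomial time via 2-SAT, giving the claimed $O(n^{\log n})$ algorithm for 3-CCP.

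The step I expect to be the main obstacle is the careful bookkeeping of the recursion: one must check that computing the majority color of $v$ among \emph{only} the currently free vertices (which is what gives the geometric shrinkage in Branch~A) is legitimate, and that the assignments produced on the non-free vertices never contradict the 2-constraints inherited from the ambient partial assignment. Both points reduce to verifying that Branches~A and~B together exhaust the possible values of $S(v)$ and that non-free vertices are never re-constrained; once these are in place, the induction and recurrence go through as above.
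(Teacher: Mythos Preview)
Your argument is correct and rests on the same key observation as the paper: the majority-color pigeonhole lets you constrain a constant fraction of the remaining free vertices whenever some vertex is assigned its majority color. The difference is purely in the branching structure. You pick a single free vertex $v$ and branch \emph{binarily} on whether $S(v)=\alpha$ or not, giving the recurrence $g(m)\le g(\lfloor 2(m-1)/3\rfloor)+g(m-1)$; you then unroll the ``$-1$'' branch to get $g(m)\le (m+1)\,g(\lfloor 2m/3\rfloor)$. The paper instead branches $(n+1)$-ary at once: one leaf for the case where \emph{no} vertex receives its majority color (already a full 2-list assignment), and one child per vertex $x$ for the case $S(x)=\text{majority}(x)$, each reducing the free set by a factor $2/3$. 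This yields directly a tree of degree $\le n+1$ and depth $O(\log n)$, hence $n^{O(\log n)}$ leaves. Your unrolling step is exactly what collapses your binary recursion into the paper's $(n+1)$-ary one, so the two analyses are equivalent; the paper's version is a bit cleaner to state, while yours is a more routine induction. The bookkeeping concerns you flag (majority among free vertices only, no re-constraining) are genuine but, as you note, resolve immediately.
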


Symmetrically, we want to define a \emph{2-list covering for the stubborn problem}. However, there is no hope to cover all the solutions of the stubborn problem on each instance with a polynomial number of 2-list assignments. Indeed if $G$ is a stable set of size $n$ and if every vertex has the trivial 4-constraint, then for any partition of the vertices into 3 sets $(A_1,A_2,A_3)$, there is a solution $(A_1, A_2, A_3, \emptyset)$. Since there are $3^n$ partitions into 3 sets, and since every 2-list assignment covers at most $2^n$ solutions, all solutions cannot be covered with a polynomial number of 2-list assignments. 

Thus we need a notion of maximal solutions. This notion is extracted from the notion of domination (here $A_3$ dominates $A_1$) in the language of general list-$M$ partition problem (see \cite{FederHKM03}). Intuitively, if $\L(v)$ contains both $A_1$ and $A_3$ and $v$ belongs to $A_1$ in some solution $\S$, we can build a simpler solution by putting $v $ in $ A_3$ and leaving everything else unchanged. A solution $(A_1, A_2, A_3, A_4)$ of the stubborn problem on $(G, \L)$ is a \emph{maximal solution} if no member of $A_1$ satisfies $A_3 \in \L(v)$. We may note that if $A_3$ is contained in every $\L(v)$ for $v \in V$, then every maximal solution of the stubborn problem on $(G, \L)$ let $A_1$ 
empty. Now, a set of 2-list assignments is a \emph{2-list covering for the stubborn problem on $(G, \L)$} if it covers all the maximal solutions on this instance. Moreover, it is called a \emph{polynomial} 2-list covering if its size is bounded by a polynomial in the number of vertices in $G$.

 For edge-colored
graphs, an \emph{$ (\alpha_1,...,\alpha_k)$-clique} is a clique for which every edge has a color in $\{ \alpha_1,...,\alpha_k \}$.
A \emph{split} graph is a graph in which vertices can be partitioned into an $\alpha$-clique and a $\beta$-clique. The \emph{$\alpha$-edge-neighborhood} of $x$ is the set of vertices $y$ such that $xy$ is an \emph{$\alpha$-edge}, i.e an edge colored with $\alpha$.
The \emph{majority color of $x \in V$} is the color $\alpha$ for which the $\alpha$-edge-neighborhood of $x$ is maximal in terms of cardinality (in case of ties,
we arbitrarily cut them).

This section is devoted to the proof of the following result:

\begin{theorem}\label{th equiv stubborn 3ALCP CS-sep}
 The following are equivalent:
\begin{enumerate}
 \item For every graph $G$ and every list assignment $\L: V \to \mathcal{P}(\{A_1, A_2, A_3, A_4\})$, there is a polynomial 2-list covering for the stubborn problem on $(G, \L)$.
 \item For every $n$ and every edge-coloring $f: E(K_n) \to \{A, B, C\}$, there is a polynomial 2-list covering for 3-CCP on $(K_n, f)$.
 \item For every graph $G$, there is a polynomial CS-separator.
 
\end{enumerate}
\end{theorem}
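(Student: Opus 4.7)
The plan is to establish the cycle $(1) \Rightarrow (3) \Rightarrow (2) \Rightarrow (1)$. Two of the three steps use the classical fact that a split graph has at most $n+1$ maximal cliques and so admits a CS-separator of size $O(n^2)$.

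For $(1) \Rightarrow (3)$, I would encode a graph $G$ as the stubborn instance $(G, \L)$ with $\L(v) = \{A_2, A_3, A_4\}$ for every $v$. Since $A_3$ is always available, maximality forces $A_1 = \emptyset$, so the maximal solutions are in bijection with the disjoint clique-stable pairs $(K, S)$ via $A_4 = K$, $A_2 = S$, $A_3 = V \setminus (K \cup S)$. Apply the hypothesised polynomial 2-list covering $\{\L'_1, \ldots, \L'_m\}$ and set $F_j^i = \{v : A_j \notin \L'_i(v)\}$, so that after refining $1$-lists to $2$-lists the triple $(F_2^i, F_3^i, F_4^i)$ partitions $V$. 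The crucial observation is that whenever $\L'_i$ covers the pair $(K,S)$, one has $F_3^i \subseteq K \cup S$, so $(K \cap F_3^i, S \cap F_3^i)$ is a (clique, stable) partition of $G[F_3^i]$; in particular $G[F_3^i]$ is a split graph and therefore admits a polynomial-size CS-separator. I would then lift each local cut $(U', W')$ of this CS-separator to the cut $(F_2^i \cup U', F_4^i \cup W')$ of $G$; ranging over $i$ and over local cuts produces a polynomial CS-separator of $G$.

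For $(2) \Rightarrow (1)$, I would invoke the polynomial-time reduction of the stubborn problem to 3-CCP from \cite{FederH06}: it transforms a stubborn instance $(G, \L)$ into an edge-coloured $(K_m, f)$ with a correspondence between their (maximal) solutions. Pulling back a polynomial 2-list covering of the 3-CCP image through this correspondence yields a polynomial 2-list covering of the original stubborn instance.

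For $(3) \Rightarrow (2)$, the argument is dual to that of $(1) \Rightarrow (3)$. Given a 3-CCP instance $(K_n, f)$ with colour classes $G_A, G_B, G_C$, each $V_\gamma$ of any solution is a clique of the complement graph $H_\gamma = K_n \setminus G_\gamma$ (every pair inside $V_\gamma$ carries a non-$\gamma$ edge). I would apply the assumed polynomial CS-separator to each $H_\gamma$ and combine triples of cuts, one per colour, into candidate 2-list assignments, forbidding colour $\gamma$ at every vertex lying on the clique side of the $H_\gamma$-cut. The main obstacle will be enforcing the 2-list constraint itself---guaranteeing that no vertex is forbidden all three colours simultaneously---which I would handle by a split-graph-style reduction on the region where all three clique sides overlap, dual to the one used in $(1) \Rightarrow (3)$.
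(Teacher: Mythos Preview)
Your $(1)\Rightarrow(3)$ is correct and in fact mirrors the paper's own $(2)\Rightarrow(3)$: both encode the target graph in the richer problem, apply the hypothesised 2-list covering, observe that for each 2-list assignment a certain vertex set (your $F_3^i$, the paper's $X$) must be split whenever the assignment covers a relevant solution, and then enumerate the linearly many split partitions to manufacture cuts. The paper runs the cycle in the opposite direction, $(1)\Rightarrow(2)\Rightarrow(3)\Rightarrow(1)$.

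The other two steps have genuine gaps.

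For $(3)\Rightarrow(2)$: a CS-separator only promises a cut separating a specified clique from a specified \emph{stable set}. You note that $V_\gamma$ is a clique of $H_\gamma$, but you never say which stable set of $H_\gamma$ you pair it with, and there is no natural one: a stable set of $H_\gamma$ is a set of vertices pairwise joined by $\gamma$-edges, and neither $V\setminus V_\gamma$ nor any $V_\alpha$ has that form in general. Without a stable set the separator gives you nothing beyond ``some cut contains $V_\gamma$ on one side'', and the trivial cut $(V,\emptyset)$ already does that. Your proposed fix on the triple overlap $U_A\cap U_B\cap U_C$ does not go through either: nothing forces that region to be split, since it is not contained in any clique-union-stable set of the relevant graphs the way $F_3^i$ was in your first step. (As written you also forbid $\gamma$ on the \emph{clique} side, which would kill compatibility with $V_\gamma$; presumably you meant the other side, but that does not rescue the argument.) The paper sidesteps this entirely: its $(3)\Rightarrow(1)$ is easy precisely because the stubborn problem supplies both an honest clique $A_4$ and honest stable sets $A_1,A_2$, so the CS-separator applies directly after a squaring trick to handle $A_1\cup A_2$.

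For $(2)\Rightarrow(1)$: invoking the decision reduction of \cite{FederH06} is not enough. You need the reduction to carry a 2-list assignment on the 3-CCP side back to a 2-list assignment on the stubborn side covering the same maximal solutions, and that is far stronger than ``stubborn reduces to 3-CCP'', especially since the stubborn instance carries list constraints that 3-CCP does not. The paper's $(1)\Rightarrow(2)$ is in fact the most delicate of its three lemmas, building two auxiliary stubborn instances per fixed vertex and using an explicit case table to translate pairs of stubborn 2-lists into 3-CCP 2-lists; a clean reverse direction would need comparable work, not a black-box citation.
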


We decompose the proof into three lemmas, each of which describing one implication.

\begin{lemma}\label{lemma stubborn->3ALCP}
 $(1 \Rightarrow 2)$: Suppose for every graph $G$ and every list assignment $\L: V \to \mathcal{P}(\{A_1, \ldots, A_4\})$, there is a polynomial 2-list covering for the stubborn problem on $(G, \L)$. Then for every graph $n$ and every edge-coloring $f: E(K_n) \to \{A, B, C\}$, there is a polynomial 2-list covering for 3-CCP on $(K_n, f)$.
\end{lemma}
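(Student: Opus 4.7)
The plan is to construct, from the 3-CCP instance $(K_n, f_E)$, a polynomial-sized stubborn instance $(G, \L_0)$, and then translate a polynomial 2-list covering for stubborn on $(G, \L_0)$ into one for 3-CCP on $(K_n, f_E)$ by restriction. First I would set $V(G) \supseteq V(K_n)$ with $\L_0(v) = \{A_1, A_2, A_4\}$ for every $v \in V(K_n)$, identifying the 3-CCP colors $A, B, C$ with the parts $A_1, A_2, A_4$ respectively. Since $A_3 \notin \L_0(v)$ for $v \in V(K_n)$, the maximality condition imposes no restriction on these vertices, and any stubborn 2-list $\L'$ on $(G, \L_0)$ restricts naturally to a 2-list $\L''(v) = \L'(v) \cap \{A_1, A_2, A_4\}$ on $V(K_n)$, which we identify with a 2-list valued in $\{A, B, C\}$.

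Second, I would add gadget vertices and edges to $G$ to encode each per-edge 3-CCP constraint. For an edge $xy$ colored $\alpha$, the 3-CCP constraint is that $x$ and $y$ are not both colored $\alpha$. For $\alpha = C$ (identified with the clique part $A_4$), this is enforced simply by omitting $xy$ from $E(G)$, since any two vertices in $A_4$ must be adjacent in $G$. For $\alpha \in \{A, B\}$ (identified with the stable parts), directly placing $xy$ in $E(G)$ would simultaneously enforce a stable-set constraint on both $A_1$ and $A_2$, which is stronger than 3-CCP requires; I would therefore introduce gadget vertices with lists involving $A_3$, exploiting both the $A_1$-$A_3$ non-adjacency constraint and the maximality condition (which forces any vertex with $A_3$ in its list out of $A_1$) to encode the per-color constraint locally without side effects on the other parts. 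Since $K_n$ has $O(n^2)$ edges and each gadget can be taken of constant size, $|V(G)|$ stays polynomial in $n$.

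Third, I would define, for every 3-CCP solution $\S = (V_A, V_B, V_C)$ on $(K_n, f_E)$, the stubborn partition $\phi(\S)$ that places $V_A$ in $A_1$, $V_B$ in $A_2$, $V_C$ in $A_4$, and the gadget vertices in parts consistent with the gadget design. The gadgets would be engineered so that $\phi(\S)$ is always a maximal stubborn solution on $(G, \L_0)$. Then, given a polynomial 2-list covering $\F_{\mathrm{st}}$ for stubborn on $(G, \L_0)$, the family $\F = \{\L'' : \L' \in \F_{\mathrm{st}}\}$ obtained by restriction and translation has polynomial size and covers every 3-CCP solution: for each $\S$, the maximal stubborn solution $\phi(\S)$ is compatible with some $\L' \in \F_{\mathrm{st}}$, which gives $\phi(\S)(v) \in \L'(v)$ for every $v \in V(K_n)$, hence $\S(v) \in \L''(v)$.

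The hard part is the explicit design of the gadgets for the $A$- and $B$-edges: they must enforce the per-color 3-CCP constraint within the stubborn framework without introducing any extra restriction on the other parts, and must admit a consistent assignment on every 3-CCP solution so that $\phi(\S)$ is really a maximal stubborn solution. This is where the distinctive asymmetry of the stubborn matrix, namely the $A_1$-$A_3$ non-adjacency combined with the maximality condition, must be used, and it is the only step that does not reduce to bookkeeping about 2-list restrictions.
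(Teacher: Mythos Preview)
Your reduction has a structural obstruction that no gadget on auxiliary vertices can repair. With your identification $C\leftrightarrow A_4$ and $V(K_n)\subseteq V(G)$, consider a $B$-coloured edge $xy$ of $K_n$. A 3-CCP solution may colour both $x$ and $y$ with $C$, so $\phi(\S)$ must place both in $A_4$; since $A_4$ is required to be a clique of $G$, this forces $xy\in E(G)$. But another 3-CCP solution may colour both $x$ and $y$ with $A$, so $\phi(\S)$ must then place both in $A_1$; since $A_1$ is a stable set of $G$ and $xy\in E(G)$, this is impossible. The same conflict arises for $A$-coloured edges with the roles of $A_1$ and $A_2$ swapped. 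The point is that the contradiction lives entirely on the two original vertices $x,y$ and the single edge $xy$; adding gadget vertices elsewhere cannot change whether $xy$ is present in $G$, so it cannot resolve the incompatibility. Any identification of $\{A,B,C\}$ with three of $\{A_1,A_2,A_3,A_4\}$ runs into the same wall (using $A_3$ is blocked by maximality, and the remaining three parts cannot simultaneously accommodate the three per-edge constraints).

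The paper does not attempt a global encoding. Instead it fixes a vertex $x$, branches on its colour (a factor $3$), and looks at the $C$-edge-neighbourhood of $x$ (and symmetrically the $B$-edge-neighbourhood). On that neighbourhood it builds \emph{two} auxiliary uncoloured graphs $H$ and $H'$ (one keeping the $B$- and $C$-edges, the other only the $B$-edges) and applies the stubborn hypothesis to each with the trivial list. Every 3-CCP solution with $x$ coloured $A$ induces a maximal stubborn solution on $H$ and on $H'$; a lookup table then merges each pair $(f,f')$ of stubborn 2-lists into a single 3-CCP 2-list on the neighbourhood. The $A$-edge-neighbourhood gets the constraint $\{B,C\}$ for free. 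So the reduction is local and uses the stubborn hypothesis twice per vertex rather than once globally, which is exactly what sidesteps the clash you ran into.
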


\begin{proof}
Let $n \in \mathbb{N}$, $(K_n, f)$ be an instance of 3-CCP, and $x$ a vertex of $K_n$. Let us build a polynomial number of 2-list assignments that cover all the solutions where $x$ is given color $A$. Since the colors are symmetric, we just have to multiply the number of $2$-list assignments by $3$ to cover all the solutions. Let $(A,B,C)$ be a solution of 3-CCP where $x \in A$.

\begin{claim}\label{really3col} 
Let $x$ be a vertex and $\alpha,\beta,\gamma$ be the three different colors. Let $U$ be the $\alpha$-edge-neighborhood of $x$. If there is a $\beta \gamma$-clique $Z$ of $U$ which is not split, then there is no solution where $x$ is colored with $\alpha$.
\end{claim}
\begin{proof}
Consider a solution in which $x$ is colored with $\alpha$. All the vertices of $Z$ are of color $\beta$ or $\gamma$ because they are in the $\alpha$-edge-neighborhood of $x$. The vertices of $Z$ colored with $\beta$ form a $\gamma$-clique, those colored by $\gamma$ form a $\beta$-clique. Hence $Z$ is split.
\end{proof}
A vertex $x$ is \emph{really $3$-colorable} if for each color $\alpha$, every $\beta \gamma$-clique of the $\alpha$-edge-neighborhood of $x$ is a split graph. If a vertex is not really $3$-colorable then, in a solution, it can be colored by at most $2$ different colors. Hence if $K_n[V \backslash x]$ has a polynomial $2$-list covering, the same holds for $K_n$ by assigning the only two possible colors to $x$ in each 2-list assignment.

\medskip

\begin{figure}
 \centering $
\begin{array}{|c|c||c|}
 \hline 
 f(v) & f'(v) & f''(v)\\ \hline 
 A_2 \textrm{ or } A_1,A_2 & \ast & C \\ \hline 
 A_3 \textrm{ or } A_1,A_3 & \ast & B,C \\ \hline 
 A_4 \textrm{ or } A_1,A_4 & \ast & A \\ \hline 
 A_2,A_4 & \ast & A,C \\ \hline 
 A_2,A_3 & \ast & B,C \\ \hline
 A_3, A_4 & A'_2 \textrm{ or } A'_1,A'_2 & B \\ \hline 
 A_3, A_4 & A'_3 \textrm{ or } A'_1,A'_3 & A,C \\ \hline 
 A_3, A_4 & A'_4 \textrm{ or } A'_1,A'_4 & C \\ \hline 
 A_3, A_4 & A'_2,A'_4 & B,C \\ \hline
 A_3, A_4 & A'_2,A'_3 & A,B \\ \hline
 A_3, A_4 & A'_3,A'_4 & A,C \\ \hline

 \end{array}$
 
 \caption{This table describes the rules used in proof of lemma \ref{lemma stubborn->3ALCP} to built a 2-list assignment $f''$ for 3-CCP from a pair $(f, f')$ of 2-list assignment for two instances of the stubborn problem. Symbol $\ast$ stands for any constraint. For simplicity, we write $X,Y$ (resp. $X$) instead of $\{X, Y\}$ (resp. $\{X\}$).}
\label{fig tableau 2-list coloring for 3-ALCP}
\end{figure}

Thus we can assume that $x$ is really 3-colorable, otherwise there is a natural 2-constraint on it. Since we assume that the color of $x$ is $A$, we can consider that in all the following 2-list assignments, the constraint $\{B,C\}$ is given to the $A$-edge-neighborhood of $x$. Let us abuse notation and still denote by $(A,B,C)$ the partition of the $C$-edge-neighborhood of $x$, induced by the solution $(A,B,C)$. Since there exists a solution where $x$ is colored by $C$, and $C$ is a $AB$-clique, then Claim \ref{really3col} ensures that $C$ is a split graph $C'\uplus C''$ with $C'$ a $B$-clique and $C''$ a $A$-clique. The situation is described in Fig.~\ref{neighborhood1_bis}. Let $H$ be the non-colored graph with vertex set the $C$-edge-neighborhood of $x$ and with edge set the union of $B$-edges and $C$-edges (see Fig.~\ref{neighborhood2_bis}). 
Moreover, let $H'$ be the non-colored graph with vertex set the $C$-edge-neighborhood of $x$ and with edge set the $B$-edges (see Fig.~\ref{neighborhood3_bis}). We consider $(H, \L_0)$ and $(H', \L_0)$ as two instances of the stubborn problem, where $\L_0$ is the trivial list assignment that gives each vertex the constraint $\{A_1, A_2, A_3, A_4\}$. 

By assumption, there exists $\F$ (resp. $\F'$) a polynomial 2-list covering for the stubborn problem on $(H, \L_0)$ (resp. $(H', \L_0)$). We construct $\F''$ the set of 2-list assignment $f''$ built from all the pairs $(f,f') \in \F\times \F'$  according to the rules described in Fig.~\ref{fig tableau 2-list coloring for 3-ALCP} (intuition for such rules is given in the next paragraph). $\F''$ aims at being a polynomial 2-list covering for 3-CCP on the $C$-edge-neighborhood of $x$.

The following is illustrated on Fig.~\ref{neighborhood2_bis} and \ref{neighborhood3_bis}. Let $\S$ be the partition defined by $A_1=\emptyset$, $A_2=C''$, $A_3=B \cup C'$ and $A_4=A$. We can check that $A_2$ is a stable set and $A_4$ is a clique (the others restrictions are trivially satisfied by $A_1$ being empty and $\L_0$ being trivial). In parallel, let $\S'$ be the partition defined by $A_1'=\emptyset$, $A_2'=B$, $A_3'=A \cup C''$ and $A_4=C'$. We can also check that $A_2'$ is a stable set and $A_4'$ is a clique. Thus $\S$ (resp. $\S'$) is a maximal solution for the stubborn problem on $(H, \L_0)$ (resp. $(H', \L_0)$) inherited from the solution $(A,B,C=C' \uplus C'')$ for 3-CCP.

Let $f\in \F$ (resp. $f' \in \F'$) be a 2-list assignment compatible with $\S$ (resp. $\S'$). Then $f'' \in \F''$ built from $(f,f')$ is a 2-list assignment compatible with $(A,B,C)$.

Doing so for the $B$-edge-neighborhood of $x$ and pulling everything back together gives a polynomial 2-list covering for 3-CCP on $(K_n, f)$.

\begin{figure}
\centering
\subfigure[Vertex $x$, its $A$-edge-neighborhood subject to the constraint $\{B,C\}$, and its $C$-edge-neighborhood separated in different parts.]{ \label{neighborhood1_bis}
 \includegraphics[scale=1]{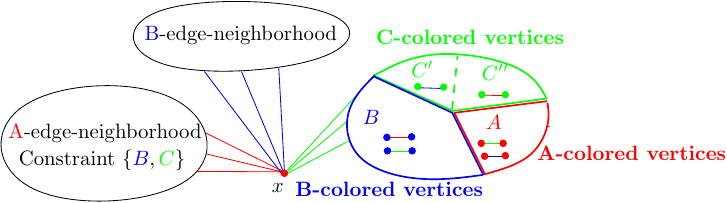}}

\subfigure[On the left, the graph $H$ obtained from the $C$-edge-neighborhood by keeping only $B$-edges and $C$-edges. On the right, the solution of the stubborn problem.]{ \label{neighborhood2_bis}
 \includegraphics[scale=1]{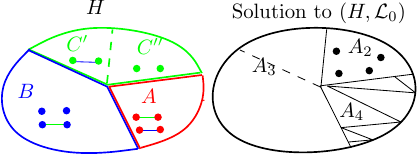}}
\subfigure[On the left, the graph $H'$ obtained from the $C$-edge-neighborhood by keeping only $B$-edges. On the right,, the solution of the stubborn problem.]{ \label{neighborhood3_bis}
 \includegraphics[scale=1]{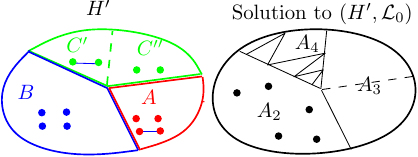}}

\caption{Illustration of the proof of lemma \ref{lemma stubborn->3ALCP}. Color correspondence: \textcolor{red}{A=red} ; \textcolor{blue}{B=blue} ; \textcolor{green}{C=green}. As before, cliques are represented by hatched sets, stable sets by dotted sets.}
\label{fig neighbourhood}
\end{figure}
\end{proof}

\begin{lemma}\label{lemma 3ALCP -> CS-sep}
 $(2 \Rightarrow 3)$: Suppose for every $n$ and every edge-coloring $f: E(K_n) \to \{A, B, C\}$, there is a polynomial 2-list covering for 3-CCP on $(K_n, f)$. Then for every graph $G$, there is a polynomial CS-separator.
\end{lemma}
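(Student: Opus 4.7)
The plan is to encode $G=(V,E)$ on $n$ vertices as an instance of 3-CCP on the complete graph $K_n$ with vertex set $V$, by coloring every edge of $G$ with color $A$ and every non-edge of $G$ with color $B$ (leaving color $C$ unused). Under this encoding, a vertex coloring is a 3-CCP solution if and only if its $A$-class is a stable set of $G$ and its $B$-class is a clique of $G$. In particular, for any disjoint pair $(K,S)$ with $K$ a clique and $S$ a stable set of $G$, the partition $(S,K,V\setminus(K\cup S))$ is a 3-CCP solution, so by hypothesis it is compatible with some $\L$ in a polynomial 2-list covering $\F$ of this instance.

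From each $\L\in\F$ I will extract a polynomial number of cuts of $V$. Classify each vertex according to $\L(v)\subseteq\{A,B,C\}$: a vertex with $B\in\L(v)$ and $A\notin\L(v)$ is unambiguously sent to the clique-side $X$; a vertex with $A\in\L(v)$ and $B\notin\L(v)$ is sent to the stable-side $V\setminus X$; a vertex with $\L(v)=\{C\}$ is placed arbitrarily (say into $X$). The only residual ambiguity concerns $W_{\L}:=\{v:\L(v)=\{A,B\}\}$, whose vertices could lie in either part of a compatible solution. Crucially, if $\L$ is compatible with $(S,K,R)$, then $W_{\L}\subseteq K\cup S$ and the pair $(K\cap W_{\L},S\cap W_{\L})$ is a split partition of $G[W_{\L}]$; in particular $G[W_{\L}]$ must itself be a split graph whenever $\L$ covers any solution.

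The technical core is then a short counting lemma, provable by an exchange argument: \emph{any split graph $H$ on $m$ vertices has at most $(m+1)^2$ split partitions}. Indeed, fix one split partition $(C_0,I_0)$ of $H$; for any other split partition $(C,I)$, set $D:=C_0\setminus C\subseteq I$ and $E:=C\setminus C_0\subseteq I_0$. Two distinct vertices of $D$ would be simultaneously adjacent (both in the clique $C_0$) and non-adjacent (both in the stable set $I$), so $|D|\leq 1$, and symmetrically $|E|\leq 1$. Hence $(C,I)$ is determined by removing at most one vertex from $C_0$ and adding at most one from $I_0$, proving the bound.

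Putting these pieces together: for each $\L\in\F$, enumerate the $\O(n^2)$ split partitions $(C_i,I_i)$ of $G[W_{\L}]$ (there are none if $G[W_{\L}]$ is not split) and output, for each $i$, the cut $X_{\L,i}:=\{v:B\in\L(v),A\notin\L(v)\}\cup\{v:\L(v)=\{C\}\}\cup C_i$ together with its complement. The total family has size $\O(|\F|\cdot n^2)$, which is polynomial in $n$. Given any disjoint pair $(K,S)$, choose $\L$ compatible with the solution $(S,K,V\setminus(K\cup S))$ and the index $i$ with $(C_i,I_i)=(K\cap W_{\L},S\cap W_{\L})$: by construction every vertex forced by $\L$ lands on the correct side, and the chosen split partition resolves exactly the ambiguity on $W_{\L}$, so the resulting cut separates $K$ from $S$. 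The main conceptual obstacle is identifying that the ambiguity reduces to split partitions of $G[W_{\L}]$ and then bounding their number, which the exchange lemma dispatches in a few lines.
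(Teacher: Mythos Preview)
Your proof is correct and follows essentially the same approach as the paper: encode $G$ as a $\{A,B\}$-colored instance of 3-CCP, observe that the ambiguous vertices (those with list $\{A,B\}$) must form a split subgraph for any compatible solution, and enumerate its split partitions to resolve the ambiguity. The only notable difference is that the paper cites \cite{FederHKM03} for a linear bound on the number of split partitions, whereas you supply a self-contained exchange argument giving the weaker but sufficient bound $(m+1)^2$; your version is thus more self-contained at the cost of an extra $n$ factor in the final family, which is immaterial for the polynomial conclusion.
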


\begin{proof} 
Let $G=(V,E)$ be a graph on $n$ vertices. Let $f$ be the coloring on $K_n$ defined by $f(e)=A$ if $e\in E$ and $f(e)=B$ otherwise. In the following $(K_n,f)$ is considered as a particular instance of 3-CCP with no $C$-edge. By hypothesis, there is a polynomial 2-list covering $\F$ for 3-CCP on $(K_n, f)$. Let us prove that we can derive from $\F$ a polynomial CS-separator $\mathcal{C}$. 

Let $\L \in \F$ be a $2$-list assignment. Denote by $X$ (resp. $Y$, $Z$) the set of vertices with the constraint $\{A,B\}$ (resp. $\{B,C\}$, $\{A,C\}$). Since no edge has color $C$, $X$ is split. Indeed, the vertices of color $A$ form a $B$-clique and conversely. Given a graph, there is a linear number of decompositions into a split graph \cite{FederHKM03}. Thus there are a linear number of decomposition $(U_k,V_k)_{k\leq c n}$ of $X$ into a split graph where $U_k$ is a $B$-clique. For every $k$, the cut $(U_k \cup Y, V_k \cup Z)$ is added in $\mathcal{C}$. For each $2$-list assignment we add a linear number of cuts, so the size of $\mathcal{C}$ is polynomial.

Let $K$ be a clique and $S$ a stable set of $G$ which do not intersect. The edges of $K$ are colored by $A$, and those of $S$ are colored by $B$. Then the coloring $\S(x)=B$ if $x\in K$, $\S(x)=A$ if $x \in S$ and $\S(x)=C$ otherwise is a solution of $(K_n,f)$. Left-hand side of Fig.~\ref{fig split_clique_stable} illustrates the situation. There is a 2-list assignment $\L$ in $\F$ which is compatible with this solution. As before, let $X$ (resp. $Y$, $Z$) be the set of vertices which have the constraint $\{A,B\}$ (resp. $\{B,C\}$, $\{A,C\}$). Since the vertices of $K$ are colored $B$, we have $K \subseteq X \cup Y$ (see right hand-side of Fig.~\ref{fig split_clique_stable}). Likewise, $S \subseteq X \cup Z$.
Then  $(K \cap X, S \cap X)$ forms a split partition of $X$. So, by construction, there is a cut $((K \cap X) \cup Y,(S \cap X)\cup Z) \in \mathcal{C}$ which ensures that $(K,S)$ is separated by $\mathcal{C}$.
\end{proof}

\begin{figure}
\centering
 
 \includegraphics[height=2cm]{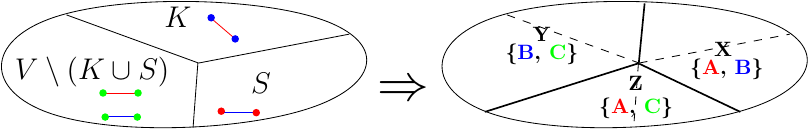}

\caption{Illustration of the proof of Lemma \ref{lemma 3ALCP -> CS-sep}. On the left hand-side, $G$ is separated in 3 parts: $K$, $S$, and the remaining vertices. Each possible configuration of edge- and vertex-coloring are represented. On the right-hand-side, $(X,Y,Z)$ is a 2-list assignment compatible with the solution. $X$ (resp. $Y$, $Z$) has constraint $\{A,B\}$ (resp. $\{B,C\}$, $\{A,C\}$). Color correspondence: \textcolor{red}{A=red} ; \textcolor{blue}{B=blue} ; \textcolor{green}{C=green}.}
\label{fig split_clique_stable}
\end{figure}

\begin{lemma} \label{lemma CS-sep -> stubborn}
 $(3 \Rightarrow 1)$: Suppose for every graph $G$, there is a polynomial CS-separator. Then for every graph $G$ and every list assignment $\L: V \to \mathcal{P}(\{A_1, A_2, A_3, A_4\})$, there is a polynomial 2-list covering for the stubborn problem on $(G, \L)$.
\end{lemma}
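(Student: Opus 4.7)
The plan is to use a polynomial CS-separator $\F$ of $G$ to construct polynomially many 2-list assignments covering every maximal solution. Two observations drive the construction. First, in any solution $(A_1, A_2, A_3, A_4)$ of the stubborn problem, $A_4$ is a clique and $A_2$ is a stable set of $G$, so there is some cut $(U, W) \in \F$ with $A_4 \subseteq U$ and $A_2 \subseteq W$. Second, in a \emph{maximal} solution the definition forces $A_3 \notin \L(v)$ for every $v \in A_1$; the original list $\L$ thus already distinguishes $V_1 := \{v : A_3 \in \L(v)\}$ (which, in a maximal solution, avoids $A_1$ and hence sits in $A_2 \cup A_3 \cup A_4$) from $V_2 := V \setminus V_1$ (which, by compatibility with $\L$, avoids $A_3$ and hence sits in $A_1 \cup A_2 \cup A_4$). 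In each cell, only three colors remain possible.

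For every cut $(U, W) \in \F$, define a single 2-list assignment $\L_{(U,W)}$ by a four-case rule: $\L_{(U,W)}(v) = \L(v) \cap \{A_3, A_4\}$ for $v \in V_1 \cap U$, $\L_{(U,W)}(v) = \L(v) \cap \{A_2, A_3\}$ for $v \in V_1 \cap W$, $\L_{(U,W)}(v) = \L(v) \cap \{A_1, A_4\}$ for $v \in V_2 \cap U$, and $\L_{(U,W)}(v) = \L(v) \cap \{A_1, A_2\}$ for $v \in V_2 \cap W$. Each list has size at most two, and the whole family has size $|\F|$, which is polynomial in $n$ by assumption.

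To show this is a 2-list covering of the maximal solutions, fix any maximal solution $(A_1, A_2, A_3, A_4)$ and pick a cut $(U, W) \in \F$ separating the clique $A_4$ from the stable set $A_2$ (minor care when one of these sets is empty is absorbed by adding a constant number of trivial cuts to $\F$). A routine case check using each of the four rules then verifies that whenever $v \in A_i$, its color $A_i$ lies in $\L_{(U,W)}(v)$: on the $U$-side the only surviving possibilities are $\{A_3, A_4\}$ or $\{A_1, A_4\}$, and on the $W$-side they are $\{A_2, A_3\}$ or $\{A_1, A_2\}$, and in each case $A_i$ belongs both to $\L(v)$ (by compatibility with $\L$) and to the relevant two-element target. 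The main conceptual obstacle, and the very reason the statement quantifies over \emph{maximal} solutions, is the ``stable-set'' side $W$ of the cut: without maximality, $W$ could a priori contain vertices of $A_1$, $A_2$ and $A_3$ simultaneously, which would force 3-lists rather than 2-lists. The pre-split by $V_1$ and $V_2$ is precisely what removes one of these three possibilities in each cell, so that one CS-separator cut suffices.
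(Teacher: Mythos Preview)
Your proof is correct, and in fact slightly sharper than the paper's. The paper first squares the CS-separator (Claim~\ref{obs union stable}) to obtain a family $\F_2$ of cuts that separate every clique from every \emph{union of two stable sets}; it then picks a cut with $A_4\subseteq U$ and $A_1\cup A_2\subseteq W$, so that on the $U$-side only $\{A_3,A_4\}$ remain and the $V_1/V_2$ split (the test ``$A_3\in\L(v)$?'') is needed only on the $W$-side. You instead use the original CS-separator $\F$ directly, picking a cut that separates just $A_4$ from $A_2$, and you apply the $V_1/V_2$ split on \emph{both} sides of the cut. This works because maximality forces $A_1\subseteq V_2$ and compatibility with $\L$ forces $A_3\subseteq V_1$, so in every one of the four cells $V_i\cap U$, $V_i\cap W$ at most two of the four colors can occur, exactly as your case analysis shows. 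The payoff is a covering of size $|\F|$ rather than $|\F|^2$; the paper's version has the minor expository advantage that the $U$-side rule is uniform, without a $V_1/V_2$ case split.
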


\begin{proof}
Let $(G, \L)$ be an instance of the stubborn problem. By assumption, there is a polynomial CS-separator for $G$. 

\begin{claim}\label{obs union stable}
If there are $p$ cuts that separate all the cliques from the stable sets, then there are $p^2$ cuts that separate all the cliques from the unions $S \cup S'$ of two stable sets. 
\end{claim}
\begin{proof}
Indeed, if $(V_1, V_2)$ separates $K$ from $S$ and $(V_1', V_2')$ separates $K$ from $S'$, then the new cut $(V_1 \cap V_1', V_2 \cup V_2')$ satisfies $K \subseteq  V_1 \cap V_1'$ and $S \cup S' \subseteq V_2 \cup V_2'$.
\end{proof}

Let $\F_2$ be a polynomial family of cuts that separate all the cliques from unions of two stable sets, which exists by Claim \ref{obs union stable} and hypothesis. Then for all $(U,W) \in \F_2$, we build the following 2-list assignment $\L'$:

\begin{enumerate}
 \item If $v \in U$, let $\L'(v)=\{A_3, A_4\}$.
 \item If $v \in W$ and $A_3 \in \L(v)$, then let $\L'(v)=\{A_2, A_3\}$.
 \item Otherwise, $v \in W$ and $A_3 \notin \L(v)$, let $\L'(v)=\{A_1, A_2\}$.
\end{enumerate}

Now the set $\F'$ of such 2-list assignment $\L'$ is a 2-list covering for the stubborn problem on $(G, \L)$: let $\S=(A_1, A_2, A_3, A_4)$ be a maximal solution of the stubborn problem on this instance. Then $A_4$ is a clique and $A_1, A_2$ are stable sets, so there is a separator $(U,W) \in \F_2$ such that $A_4 \subseteq U$ and $A_1 \cup A_2 \subseteq W$ (see Fig.~\ref{fig stubborn_cut}), and there is a corresponding 2-list assignment $\L' \in \F'$. Consequently, the 2-constraint $\L'(v)$ built from rules 1 and 3 are compatible with $\S$. Finally, as $\S$ is maximal, there is no $v \in A_1$ such that $A_3 \in \L(v)$: the 2-constraints built from rule 2 are also compatible with $\S$.
\begin{figure}
\centering
 
 \includegraphics{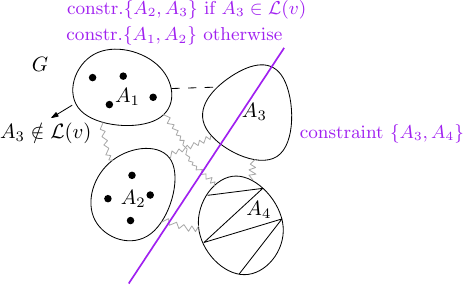}

\caption{Illustration of the proof of Lemma \ref{lemma CS-sep -> stubborn}. A solution to the stubborn problem together with the cut that separates $A_4$ from $A_1\cup A_2$. The 2-list assignment built from this cut is indicated on each side.}
\label{fig stubborn_cut}
\end{figure}
\end{proof}

\begin{proof}[Proof of theorem \ref{th equiv stubborn 3ALCP CS-sep}.] Lemmas \ref{lemma stubborn->3ALCP}, \ref{lemma 3ALCP -> CS-sep} and \ref{lemma CS-sep -> stubborn} conclude the proof of Theorem \ref{th equiv stubborn 3ALCP CS-sep}.
\end{proof}

\section*{Acknowledgments}

We thank the referees for their useful comments.

%
%
%
%
%
%
%

\bibliographystyle{plain}

\begin{thebibliography}{}

\end{thebibliography}


\begin{thebibliography}{10}

\bibitem{Alekseev}
V.~E. Alekseev.
\newblock On the number of maximal independent sets in graphs from hereditary
  classes.
\newblock {\em Combinatorial-algebraic methods in discrete optimization, NNSU
  Publishers, Nizhny Novgorod University}, pages 5--8, 1991.

\bibitem{Alon97}
N.~Alon.
\newblock Neighborly families of boxes and bipartite coverings.
\newblock {\em Algorithms and Combinatorics}, 14:27--31, 1997.

\bibitem{AlonH}
N.~Alon and I.~Haviv.
\newblock private communication.

\bibitem{Amano}
K.~Amano.
\newblock Some improved bounds on communication complexity via new
  decompositions of cliques, 2012.
\newblock \url{http://www.cs.gunma-u.ac.jp/~amano/paper/clique.pdf}, preprint.

\bibitem{bollobas01}
B.~Bollobas.
\newblock {\em Random Graphs}.
\newblock Cambridge University Press, 2001.

\bibitem{Thomasse13}
N.~Bousquet, A.~Lagoutte, and S.~Thomass\'e.
\newblock The {E}rd{\H{o}}s-{H}ajnal conjecture for paths and antipaths, 2013.
\newblock preprint.

\bibitem{CameronEHS07}
K.~Cameron, E.~M. Eschen, C.~T. Ho{\`a}ng, and R.~Sritharan.
\newblock The complexity of the list partition problem for graphs.
\newblock {\em SIAM J. Discrete Math.}, 21(4):900--929, 2007.

\bibitem{CioabaTait12}
S.~M. Cioab\u{a} and M.~Tait.
\newblock Variations on a theme of {G}raham and {P}ollak.
\newblock {\em Discrete Mathematics}, 313:665--676, 2013.

\bibitem{Cygan10}
M.~Cygan, M.~Pilipczuk, M.~Pilipczuk, and J.~O. Wojtaszczyk.
\newblock The stubborn problem is stubborn no more (a polynomial algorithm for
  3-compatible colouring and the stubborn list partition problem).
\newblock In Dana Randall, editor, {\em SODA}, pages 1666--1674. SIAM, 2011.

\bibitem{FederH06}
T.~Feder and P.~Hell.
\newblock Full constraint satisfaction problems.
\newblock {\em SIAM J. Comput.}, 36(1):230--246, 2006.

\bibitem{FederHH03}
T.~Feder, P.~Hell, and J.~Huang.
\newblock Bi-arc graphs and the complexity of list homomorphisms.
\newblock {\em Journal of Graph Theory}, 42(1):61--80, 2003.

\bibitem{FederHKM03}
T.~Feder, P.~Hell, S.~Klein, and R.~Motwani.
\newblock List partitions.
\newblock {\em SIAM J. Discrete Math.}, 16(3):449--478, 2003.

\bibitem{Fiorini11}
S.~Fiorini, S.~Massar, S.~Pokutta, H.~R. Tiwary, and R.~de~Wolf.
\newblock Linear vs. semidefinite extended formulations: exponential separation
  and strong lower bounds.
\newblock In {\em STOC}, pages 95--106, 2012.

\bibitem{GrahamP72}
R.~L. Graham and H.~O. Pollak.
\newblock On embedding graphs in squashed cubes.
\newblock {\em Graph theory and applications}, 303:99--110, 1972.

\bibitem{Haussler86}
D.~Haussler and E.~Welzl.
\newblock Epsilon-nets and simplex range queries.
\newblock In {\em Proceedings of the second annual symposium on Computational
  geometry}, SCG '86, pages 61--71, 1986.

\bibitem{Sudakov10}
H.~Huang and B.~Sudakov.
\newblock A counterexample to the {A}lon-{S}aks-{S}eymour conjecture and
  related problems.
\newblock {\em Combinatorica}, 32:205--219, 2012.

\bibitem{Kahn94}
J.~Kahn.
\newblock Recent results on some not-so-recent hypergraph matching and covering
  problems.
\newblock In {\em Extremal problems for finite sets}. Bolyai {S}ociety
  mathematical studies, 1994.

\bibitem{KostoZ08}
A.~V. Kostochka and X.~Zhu.
\newblock Adapted list coloring of graphs and hypergraphs.
\newblock {\em SIAM J. Discrete Math.}, 22(1):398--408, 2008.

\bibitem{Kushilevitz99}
E.~Kushilevitz and N.Nisan.
\newblock {\em Communication complexity}.
\newblock Cambridge University Press, 1997.

\bibitem{Lok13}
D.~Lokshtanov, M.~Vatshelle, and Y.~Villanger.
\newblock Independent set in ${P}_5$-free graphs in polynomial time, 2013.
\newblock \url{http://www.ii.uib.no/~martinv/Papers/ISinP5free.pdf}, preprint.

\bibitem{LovaszSurvey}
L.~Lov{\'a}sz.
\newblock Communication complexity: a survey.
\newblock In {\em Paths, flows, and VLSI-layout}, pages 235--265. 1990.

\bibitem{Lovasz94}
L.~Lov{\'a}sz.
\newblock Stable sets and polynomials.
\newblock {\em Discrete Mathematics}, 124(1-3):137--153, 1994.

\bibitem{Razborov92}
A.~Razborov.
\newblock The gap between the chromatic number of a graph and the rank of its
  adjacency matrix is superlinear.
\newblock {\em Discrete Mathematics}, 108(1-3):393--396, 1992.

\bibitem{Schell08}
D.~G. Schell.
\newblock Matrix partitions of digraphs, 2009.
\newblock Simon Fraser University.

\bibitem{Tverberg82}
H.~Tverberg.
\newblock On the decomposition of {$K_n$} into complete bipartite subgraphs.
\newblock {\em J. Graph Theory}, 6:493--494, 1982.

\bibitem{Vapnik}
V.~N. Vapnik and A.~Ya. Chervonenkis.
\newblock On the uniform convergence of relative frequencies of events to their
  probabilities.
\newblock {\em Theory of Probability and its Applications}, 16(2):264--280,
  1971.

\bibitem{Yannakakis91}
M.~Yannakakis.
\newblock Expressing combinatorial optimization problems by linear programs.
\newblock {\em J. Comput. Syst. Sci.}, 43(3):441--466, 1991.

\bibitem{Zaks79}
J.~Zaks.
\newblock Bounds on neighborly families convex polytopes.
\newblock {\em Geometriae Dedicata}, 8:279--296, 1979.

\end{thebibliography}

\end{document}